\definecolor{dblue}{rgb}{0,0,0.6}
\theoremstyle{definition}
\declaretheoremstyle[
    headfont=\bfseries\sffamily\color{black!70!black}, bodyfont=\normalfont,
    mdframed={
        linewidth=2pt,
        rightline=true, topline=false, bottomline=false,
        linecolor=black, backgroundcolor=black!3!white,
    }
]{thmbox}
\declaretheoremstyle[
    headfont=\bfseries\sffamily\color{black!3}, bodyfont=\normalfont,
    mdframed={
        linewidth=2pt,
        rightline=false, topline=false, bottomline=false,
        linecolor=black, backgroundcolor=black!3!white,align=center,
    }
]{thmtikz}
\declaretheoremstyle[
    headfont=\bfseries\sffamily\color{black!70!black}, bodyfont=\normalfont,
    numbered=no,
    mdframed={
        linewidth=0pt,
        rightline=false, topline=false, bottomline=false,
        linecolor=black, backgroundcolor=black!2!white,
    },
    qed=\qedsymbol
]{thmproofbox}
\declaretheoremstyle[
    headfont=\tiny\color{black!3}, bodyfont=\normalfont, spaceabove=-10pt,
    mdframed={
        linewidth=2pt,
        rightline=true, topline=false, bottomline=false,
        linecolor=black, backgroundcolor=black!3!white,align=center,
    }
]{eqbox}
\declaretheorem[numberwithin=chapter,style=thmbox, name=Definición]{definition}
\declaretheorem[sibling=definition,style=thmbox, name=Theorem, numbered=yes]{theorem}
\declaretheorem[style=thmproofbox, name=Proof]{replacementproof}
\renewenvironment{proof}[1][\proofname]{\vspace{-10pt}\begin{replacementproof}}{\end{replacementproof}}
\declaretheorem[style=eqbox, numbered=no, name=.]{eqnbox}
\newcommand*{\blankpage}{
{\newpage \vspace*{5cm}\thispagestyle{empty}\centering \bfseries  \textit{This page is intentionally left blank.} \par}
\vspace{\fill}}
\DeclareFixedFont{\titlefont}{T1}{ppl}{bx}{n}{0.75in}
\def\printauthor{%
    {\large \@author}}              
\author{Names of Authors\\
    }
\newcommand\titlepagedecoration{%
\begin{tikzpicture}[remember picture,overlay,shorten >= -10pt]

\coordinate (aux1) at ([yshift=-15pt]current page.north east);
\coordinate (aux2) at ([yshift=-410pt]current page.north east);
\coordinate (aux3) at ([xshift=-4.5cm]current page.north east);
\coordinate (aux4) at ([yshift=-150pt]current page.north east);

\begin{scope}[black!40,line width=12pt,rounded corners=12pt]
\draw
  (aux1) -- coordinate (a)
  ++(225:5) --
  ++(-45:5.1) coordinate (b);
\draw[shorten <= -10pt]
  (aux3) --
  (a) --
  (aux1);
\draw[opacity=0.6,black,shorten <= -10pt]
  (b) --
  ++(225:2.2) --
  ++(-45:2.2);
\end{scope}
\draw[black,line width=8pt,rounded corners=8pt,shorten <= -10pt]
  (aux4) --
  ++(225:0.8) --
  ++(-45:0.8);
\begin{scope}[black!70,line width=6pt,rounded corners=8pt]
\draw[shorten <= -10pt]
  (aux2) --
  ++(225:3) coordinate[pos=0.45] (c) --
  ++(-45:3.1);
\draw
  (aux2) --
  (c) --
  ++(135:2.5) --
  ++(45:2.5) --
  ++(-45:2.5) coordinate[pos=0.3] (d);   
\draw 
  (d) -- +(45:1);
\end{scope}
\end{tikzpicture}%
}
\patchcmd{\tableofcontents}{\contentsname}{\sffamily\contentsname}{}{}
\bfseries\color{black}}
\sffamily\contentslabel{3em}}{}{} 
\itshape\color{black}\contentspage}
\sffamily\contentslabel{3em}}{}{}  
\itshape\color{black}\contentspage}
\normalfont\fontsize{35}{25}\bfseries}{}{0pt}{}
\newcommand*{\commut}[1]{[ #1 ]}
\newcommand{\BoxedEquation}[1]{\begin{eqnbox}\vspace{-12pt}
                                        \begin{align} #1 \end{align} \end{eqnbox}}
\newcommand{\bs}[1]{\mathbf{#1}} 
\newcommand{\diag}{\operatorname{diag}}
\definecolor{purple}{rgb}{0.63,0,1}
\definecolor{pink}{rgb}{1,0,0.9}
\begin{document}

\begin{titlepage}
\vspace*{3cm}

\noindent
\vspace*{0.5cm}

\begin{center}

\titlefont{Observables\\[8pt] in Motion:\\[12pt] 
\huge{{A guide to simulating classical\\[5pt] and quantum dynamics}
}}
    
\end{center}

\vspace{5cm}
\vspace*{1cm}
\noindent
\hfill
\begin{minipage}{0.7\linewidth}
    \begin{flushright}
        Denys I. Bondar\orcidlink{0000-0002-3626-4804}, 
        Gerard McCaul\orcidlink{0000-0001-7972-456X}, 
        Andrii Sotnikov\orcidlink{0000-0002-3632-4790}
    \end{flushright}
\end{minipage}
\begin{minipage}{0.02\linewidth}
    \rule{1pt}{70pt}
\end{minipage}
\titlepagedecoration

\vspace*{4cm}

\begin{center}
Copyright \textcopyright\ 2025 by Denys I. Bondar, Gerard McCaul,  Andrii Sotnikov. All rights reserved.
\end{center}

\end{titlepage}

\let\cleardoublepage=\clearpage
\tableofcontents
\blankpage

\chapter*{Table of notations}
\addcontentsline{toc}{section}{Table of notations}
\begin{center}
\begin{tabular}{r c p{10cm} }
\toprule
\multicolumn{3}{c}{}\\
\multicolumn{3}{c}{\underline{Abbreviations}}\\
\multicolumn{3}{c}{}\\
FFT & --- & Fast Fourier Transform \\
FFT & --- & Fast Fourier Transform \\
FRFT & --- & Fractional Fourier Transform \\
\multicolumn{3}{c}{}\\
\multicolumn{3}{c}{\underline{Variables and Functions}}\\
\multicolumn{3}{c}{}\\
$\hat{x}$
	 & --- & classical position operator\\
$\hat{p}$
	 & --- & classical momentum operator\\
$\hat{\bs{x}} = \hat{x} - \hbar \hat{\theta} /2$
	 & --- & quantum position operator\\
$\hat{\bs{p}} = \hat{p} + \hbar \hat{\lambda} /2$
	 & --- & quantum momentum operator\\
$\mathcal{F}$ & --- & direct Fourier transform \\
$\mathcal{F}^{-1}$ & --- & inverse Fourier transform \\
\bottomrule
\end{tabular}
\end{center}

\chapter*{Introduction}
\addcontentsline{toc}{section}{Introduction}
\markboth{Introduction}{Introduction}
Part of the magic of physics is its capacity to infer so much from so little. Equipped with a theory for dynamics, it is possible to alchemize the reagents of initial conditions into a prediction of the future. The true astonishment is not that prediction is possible, but that the engines that power it are, when properly expressed, disarmingly simple.

This book is a map of those engines—both the mathematics of dynamics and the craft of simulating it—and an argument that an algebraic, vector-space view unifies what too often is taught as two disjoint worlds: the classical and the quantum. The former is far more easily digested conceptually, but the gap between each regime's usual mathematical representation is typically a source of endless confusion. The description of quantum mechanics often appears entirely alien when compared to classical physics. Abstracted to the point of being conceptually air-gapped from its antecedents. 

This is not a book that “applies” mathematics after the fact. It uses algebra as the native habitat of dynamics. We move deliberately between derivation and device, proof and program, because the point is not only to understand evolution but also to compute it faithfully. Throughout, we treat vector spaces as the common stage. In classical mechanics, they host densities; in quantum mechanics, they host amplitudes. In both cases, evolution is a linear operator—a propagator—and numerical methods are disciplined approximations of its action. The thesis is simple: once you adopt this algebraic view, the ``mysteries” of quantum theory look less like magic and more like the inevitable consequences of representation.

Before embarking on this journey, it is worth taking a little time to orient ourselves. The usual pedagogy drops quantum mechanics into the story like a deus ex Hilbert, but here we follow a different route. As we aim to emphasize the unity of description between quantum and classical physics, it is natural to first follow the thread of the latter's development. We may then follow the natural gradient of progress that runs from measurement to model, from kinematics to evolution, from forces to flows, from trajectories to distributions, and finally into Hilbert space, where probability is not an add-on but the oxygen the theory breathes. To that end, we begin with a short survey of the history of dynamical description up to the birth of quantum mechanics. 

It is in this context we introduce vector space representations as a natural language for such statistical treatments, and begin our exposition in anger. The rest of the book is outlined as follows:
\begin{itemize}
    \item Chapter~\ref{Chapter:1} — Mathematical background.
    \item Chapter~\ref{Chapter:2} — Kinematics. Classical and quantum cases are placed side by side and analyzed using the same framework.
    \item Chapter~\ref{Chapter:3} — Formal evolution. Stone’s theorem to time-ordering, controlled approximations, and integral equations for the propagator.
    \item Chapter~\ref{Chapter:4} — Prerequisites for dynamics. From data to differential equations; numerics for the time-independent Schr{\"o}dinger equation; Fourier and fractional Fourier as complementary lenses.
    \item Chapter~\ref{Chapter:5} — Time-dependent Schrödinger equation. Split-operator methods; diagonalization by time propagation; imaginary time for eigenstates and gaps; absorbing boundaries; internal degrees and lattices.
    \item Chapter~\ref{Chapter:6} — Classical dynamics. Ehrenfest to Liouville to Newton, and back again; symplectic integrators that respect the geometry they compute.
    \item Chapter~\ref{Chapter:7} — Open quantum systems. Phase-space formulations; density matrices and uncertainty; completely positive maps and Lindblad; when and how baths thermalize; stochastic unravellings.
\end{itemize}

Before plunging into the thrilling world of linear algebra, we begin our detour into history.

\section*{Galileo and the birth of dynamics}
Adam, it was said, knew all things. At least all things that it was possible to know. It was \textit{scientia infusa}—infused and intrinsic knowledge. The truths embedded in the created world were known best by the ancients, and in a fallen world could only decay. It was—and in many ways still is—the apocalyptic core of the Christian ethos, which regarded the workings of nature not as a frontier to be mapped but a ruin to be excavated. Sin was not just moral corruption but an epistemic rupture, and therefore the weight given to any source must be in inverse proportion to its proximity to the taint. \textit{Scientia acquisita}—the knowledge we acquire through experience—must therefore be regarded as unreliable, suspect, and liable to malign warpings. Do not believe your eyes, they lie. Do not rely on your reasoning; it is insufficient. Trust Aristotle, and when the evidence of the senses comes into conflict with the wisdom of the beatified past, do not be alarmed. It is simply evidence of the imperfection of immanence, of the incapacity of any mortal to encompass the divine.  

It was into this world that Galileo was born. He was the first to systematically perform what we today recognize as experimental science; and yet, in his thinking, he remained an unreconstructed theorist. In all his writing, he prioritizes deduction first, and experiment second. He was nevertheless guided by the latter in a manner that set the standard of scientific practice up to today. He aimed to explain—by all methods available—not what he had been told, but what he could \textit{see}. Galileo was a knot under tension, a bundle of contradictions. It is a miracle of will that this was alchemized into a vision of new mental horizons. It would be easy to begin a narrative of scientific development before him, but it would be impossible without him. He is, in short, an irresistible starting point. 

 Galileo’s early intellectual labor was embedded in the practical demands of patronage and pedagogy, especially the needs of aristocratic students who demanded experiential, useful knowledge. That early crucible trained him to build insight from what could be touched, not just what could be read. Galileo begins to attack the central question of motion itself. Not as a philosopher, but as a mechanic. How do bodies fall? Not: why do they fall, or to what end, or what metaphysical purpose they serve—but \textit{how}, exactly, do they move? The Aristotelian answer was elegant in its circularity: heavier things fall faster because they contain more of the element Earth, and the nature of Earth is to return to its natural place. There was no error in this schema, because there was no mechanism of correction. What Galileo does, and is the first to do in such a systematic manner, is to treat falling objects as data. To isolate the event so it can be observed, measured, and analyzed. 

The first true dynamical law arises from this, \textit{The law of fall}. It is the principle that distance increases with the square of time and arises only from trial, instrumentation, and iteration. It requires something revolutionary: the ability to mark time precisely.

Galileo has no stopwatch. He has no atomic clock. What he has is the pulse in his neck, and the chandelier in the cathedral. From these, he begins to build a scale. A musician’s sense of rhythm weaponized into proto-physics. He rolls balls down inclined planes and counts the intervals using water clocks and his heartbeat. It is clumsy, ingenious, and deranged. It reflects an underlying belief not only in an underlying regularity to natural phenomena, but that it can be used to interrogate \textit{itself}. The mind may lie, but matter cannot. It works. Theoretical abstractions of idealized systems, allied to careful observation and mechanical cunning. It is a procedure to bootstrap the tools and techniques of scientific inquiry into existence. It has to be created before it can be named. 

From this methodology, Galileo is able to gather the essential ingredients necessary for a new conception of reality. One in which the character of motion is indifferent to purpose, and which draws no distinction between the terrestrial and heavenly spheres. Aristotle is not merely wrong, he is \textit{demonstrably} wrong. Falling objects do not accelerate linearly, but quadratically. The natural state of matter is not to be at rest, but to \textit{resist change}. They are lazy, they are \textit{inertial}. This is not a minor refinement. It is a rupture. The first crack in a two-thousand-year wall. Galileo is not using equations— not yet. The language of calculus does not exist. Vectors are still centuries away. What he is doing is inventing the basic categories of dynamical analysis—position, time, acceleration— before there exists a language to contain them. Piece by piece, he is dragging the grammar of physics into being.

\textit{Eppur si muove} are his (unfortunately apocryphal) final words to us. A last act of defiance in a life fueled by it. What makes Galileo extraordinary, what makes him \textit{necessary} to this story is not merely the ideas that he discovered. It is the purpose that he put them to. It was not enough to understand the inertial quality of motion and rewrite the laws of fall. It is not enough to observe that the moon possesses a topography as crude as the Earth's, or that Jupiter hides satellites of its own. Galileo did not stop at the discovery, he \textit{weaponized} it. With it, he fatally wounded the Aristotelian worldview, and dismantled the distinction between the heavenly and the terrestrial. He turned his contempt for the conventional into something scarcely credible. A \textit{reshaping} of it. More than discovery, it initiated a profound reorientation of the human condition that continues to this day. We owe Galileo and his contemporaries an immense debt of gratitude, for being the first to articulate and advocate for an idea. It may be imperceptible, it may be unalterable, but the world \textit{does} move. And now we need the language to describe it. Enter Newton. 

\section*{Newtonian dynamics}
Newton, despite his revolutionary approach, did not consider his laws to be an innovation in the modern sense. He saw himself as recovering the fundamental principles of motion, bringing clarity to ideas that had always been present in the natural world. As he put it, \textit{“Truth is ever to be found in simplicity, and not in the multiplicity and confusion of things.”} His work was framed within a world that still saw knowledge as rediscovery rather than invention.

This perspective was deeply embedded in the intellectual climate of the Renaissance and early modern period. The guiding assumption of the time, reflected in Ecclesiastes’ famous words, \textit{“there is nothing new under the sun,”} meant that true wisdom was a process of unveiling, not creating. This mindset influenced the acceptance of Newton’s laws, not as groundbreaking in their novelty, but as the proper articulation of physical truths that had always existed.

The Renaissance was also the age of ballistics and navigation. Motion was not an abstract concept but a pressing practical concern. Cannons fired over battlefields, ships crossed oceans, and trajectories needed to be predicted with increasing precision. In this context, the Newtonian framework naturally emerged as a system suited for understanding and controlling motion in the real world. The very formulation of dynamics, built around rectilinear motion and decomposable forces, was tailored to the concerns of the ballistic age.

\subsection*{The three laws}
 With the development of true mathematical models of dynamics, our need for lurid prose evaporates, and we may concentrate on the models themselves.  Newton's three laws provide the foundation for classical mechanics, but they lack conceptual depth in their first formulation. They assert truths about motion but do not fully explain the underlying principles. They function well enough to describe the world, but they feel more like an initial codification of intuition than a fully realized theoretical framework.  While they scarcely need repeating, Newton's laws three laws may be summarized (together with some typical conceptual issues) as follows.
\subsubsection*{First Law: The law of inertia}
A body at rest remains at rest, and a body in motion remains in uniform motion unless acted upon by an external force.

Mathematically:
\begin{equation}
    \frac{d\mathbf{v}}{dt} = 0 \Rightarrow \mathbf{v} = \text{const}.
\end{equation}

\textbf{Conceptual issues:}
\begin{itemize}
    \item What is ``uniform motion''? Is motion relative? How do we define a true state of rest?
    \item What is an ``external force''? Newton defines force implicitly through its effects.
    \item Why does inertia exist? The law tells us what happens but does not explain why objects have inertia.
\end{itemize}

\subsubsection*{Second Law: The equation of motion}
Newton's second law formalizes the relationship between force, mass, and acceleration:
\begin{equation}
    \mathbf{F} = m \mathbf{a},
\end{equation}
where \( \mathbf{F} \) is the force, \( m \) is mass, and \( \mathbf{a} \) is acceleration.

\textbf{Conceptual issues:}
\begin{itemize}
    \item \textbf{What is force?} Defined circularly: force causes acceleration, and acceleration is caused by force.
    \item \textbf{What is mass?} Newton treats it as intrinsic, but why does it resist acceleration?
    \item \textbf{Why should force be proportional to acceleration?} This is not derived but postulated.
    \item \textbf{Instantaneous action?} Newton assumes forces act immediately, which is later corrected in field theories.
\end{itemize}

\subsubsection*{Third Law: Action and reaction}
For every action, there is an equal and opposite reaction:
\begin{equation}
    \mathbf{F}_{AB} = -\mathbf{F}_{BA}.
\end{equation}
This law ensures conservation of momentum but does not explain the nature of interactions.

\textbf{Conceptual issues:}
\begin{itemize}
    \item What carries the force? Newton assumes forces ``exist'' between objects, but later field theories show they are mediated by interactions.
    \item Does this hold universally? It fails in some cases, such as electrodynamics, where force propagation is not instantaneous.
\end{itemize}

\subsection*{The role of coordinate systems and the Newtonian worldview}
Newtonian mechanics, though often framed in Cartesian coordinates, does not explicitly require them. However, the Cartesian framework, developed by René Descartes in the 17th century, became the natural mathematical setting for Newton’s equations of motion. The ability to describe motion through independent coordinate axes allowed forces to be decomposed into components, making the prediction of motion more manageable. 

The notion of a coordinate system is fundamental to how we describe and interpret physical laws. It is not an intrinsic part of reality but a choice that shapes how we frame and solve problems. Cartesian coordinates impose a rectilinear grid onto space, breaking motion into independent equations along each axis:
\begin{equation}
    F_x = m \frac{d^2 x}{dt^2}, \quad F_y = m \frac{d^2 y}{dt^2}, \quad F_z = m \frac{d^2 z}{dt^2}.
\end{equation}
This formulation, while powerful, remains tied to a specific way of looking at the world. The Cartesian grid assumes an underlying space in which positions are absolute and can be defined relative to a fixed reference frame. Yet, this assumption itself is an artifact of the mathematical tools available in Newton’s time. 

While Cartesian coordinates provided a practical way to decompose forces and motion, they are not always the most natural framework. Many physical systems—especially those involving rotation or constraints—are cumbersome when forced into a rectilinear grid. This limitation points towards a more general framework, one where coordinates are chosen to fit the system rather than the other way around. This leads us toward generalized coordinates and the Lagrangian formulation of mechanics, where motion is encoded in energy principles rather than direct force analysis.

\section*{The principle of least action and Lagrangian mechanics}

Classical mechanics, as formulated by Newton, describes motion through forces and differential equations. However, hidden within Newtonian mechanics are conservation laws that suggest a deeper principle underlying motion. Instead of forces dictating acceleration, we can approach motion as following from a more fundamental principle—the extremization of an action integral. This shift not only provides a more elegant formulation but also reveals a structure that connects classical mechanics with more advanced theories such as quantum mechanics and general relativity.

\subsection*{Historical origins of variational principles}
Variational principles were not first developed for mechanics but emerged in optics. \textbf{Fermat’s principle of least time} (1657) states that light follows the path that requires the shortest travel time between two points. This principle suggested that nature tends to select optimal paths, a striking departure from force-based, differential descriptions of motion.

This approach was later generalized to mechanics, particularly in the \textbf{brachistochrone problem}, posed by Johann Bernoulli in 1696. The challenge was to determine the curve along which a frictionless bead, sliding under gravity, reaches the lowest point in the shortest time. While Newton solved this problem using direct computation, the solutions of Jacob Bernoulli and Euler led to the development of the calculus of variations, which underlies modern Lagrangian mechanics.

\subsection*{The hidden conservation laws of Newtonian mechanics}
Newton’s equations of motion encode deep symmetries. Conservation laws emerge naturally from these equations:

\begin{itemize}
    \item \textbf{Momentum conservation}: If no external force acts, momentum remains constant.
    \item \textbf{Energy conservation}: If the forces are conservative, total energy is preserved.
    \item \textbf{Angular momentum conservation}: If no external torque acts, angular momentum is conserved.
\end{itemize}

These conservation laws are \textbf{not assumptions} but emerge as consequences of Newton’s laws. However, their existence suggests a deeper underlying structure: what if motion itself is dictated not by forces but by an overarching principle?

\subsection*{The principle of least action}
Instead of forces acting on particles, we frame motion as arising from a quantity that is extremized: the \textit{action} \( S \), defined as:
\begin{equation}
    S = \int L(q,\dot{q},t) \, dt,
\end{equation}
where $q(t)$ is the trajectory of particle in coordinate space, and the \textbf{Lagrangian function} is:
\begin{equation}
    L = T - V.
\end{equation}
where $T$ and $V$ are the \textit{kinetic} and \textit{potential} energies of the system. Rather than solving \( \mathbf{F} = m \mathbf{a} \) explicitly, we determine the motion by requiring that \( S \) be \textit{stationary} under small variations of the path.

To find the function \( q(t) \) that extremizes \( S \), consider a small perturbation:
\begin{equation}
    q(t) \to q(t) + \epsilon \eta(t),
\end{equation}
where \( \eta(t) \) is an arbitrary small function vanishing at the endpoints: \( \eta(t_1) = \eta(t_2) = 0 \).

Expanding the action:
\begin{equation}
    S[\epsilon] = \int_{t_1}^{t_2} L(q + \epsilon \eta, \dot{q} + \epsilon \dot{\eta}, t) dt.
\end{equation}
Expanding to first order in \( \epsilon \):
\begin{equation}
    S[\epsilon] = S[0] + \epsilon \int_{t_1}^{t_2} \left( \frac{\partial L}{\partial q} \eta + \frac{\partial L}{\partial \dot{q}} \dot{\eta} \right) dt + O(\epsilon^2).
\end{equation}
Since \( S \) is extremized, the first-order term must vanish:
\begin{equation}
    \int_{t_1}^{t_2} \left( \frac{\partial L}{\partial q} - \frac{d}{dt} \frac{\partial L}{\partial \dot{q}} \right) \eta dt = 0.
\end{equation}
Since \( \eta \) is arbitrary, we obtain the \textbf{Euler-Lagrange equation}:
\begin{equation}
    \frac{d}{dt} \frac{\partial L}{\partial \dot{q}} - \frac{\partial L}{\partial q} = 0.
\end{equation}

To see how Newton’s laws emerge from the Euler-Lagrange equations, consider the standard Lagrangian for a particle in a potential:
\begin{equation}
    L = \frac{1}{2} m \dot{q}^2 - V(q).
\end{equation}
Computing the derivatives:
\begin{equation}
    \frac{\partial L}{\partial q} = -\frac{dV}{dq}, \quad \frac{\partial L}{\partial \dot{q}} = m \dot{q}, \quad \frac{d}{dt} \frac{\partial L}{\partial \dot{q}} = m \ddot{q}.
\end{equation}
Plugging into the Euler-Lagrange equation:
\begin{equation}
    m \ddot{q} = -\frac{dV}{dq},
\end{equation}
which is just Newton’s second law:
\begin{equation}
    m a = F.
\end{equation}
Thus, forces and vectors reappear as the \textit{derivatives} of a single function—the Lagrangian—eliminating the need for force components while preserving the essence of Newtonian mechanics.

This marks a profound conceptual shift. Newtonian mechanics works with \textit{local} forces acting at each instant, while the variational principle, on the other hand, considers the \textit{global} evolution of a system. This approach:
\begin{itemize}
    \item \textbf{Eliminates explicit force vectors}, replacing them with derivatives of a scalar function.
    \item \textbf{Is coordinate-independent}, allowing for more flexibility in describing constrained systems.
    \item \textbf{Connects to deeper principles}, laying the groundwork for Hamiltonian mechanics and quantum physics.
\end{itemize}
This shift transforms mechanics from an equation-driven approach to the one where motion follows from a fundamental principle.

\section*{Hamiltonian dynamics}

We began with Newtonian mechanics, where forces determine acceleration, leading to second-order equations of motion in \textit{configuration space} (positions and velocities). Then we saw Lagrangian mechanics, which reformulated motion in terms of an extremization principle, encoding the equations of motion via the Euler--Lagrange equations. However, Lagrangian mechanics still describes motion as \textit{trajectories in configuration space}. The fundamental variables are:

\begin{itemize}
    \item \( q \) (positions, generalized coordinates),
    \item \( \dot{q} \) (velocities, time derivatives of coordinates).
\end{itemize}

There is a problem here. \textbf{Velocities are not truly independent variables}—they depend on the choice of coordinates and time. To describe mechanics in a way that encodes constraints naturally, provides a structure that generalizes to statistical ensembles, and allows for deeper conservation principles to emerge naturally, we must introduce a new concept: \textbf{canonically conjugate coordinates}.

\subsection*{The legend of the Legendre transform}
All this is achieved by a deceptively simple operation known as the \textit{Legendre transform}. This maps a \textit{velocity-based description} to a \textit{momentum-based description}. It introduces a new set of variables, where we replace \( \dot{q} \) with \( p \) (momentum) as our fundamental descriptor of motion:
\begin{equation}
    p_i = \frac{\partial L}{\partial \dot{q}_i}.
\end{equation}
This is a one-to-one mapping (if \( L \) is convex in \( \dot{q} \)), meaning that for every valid velocity, we can assign a corresponding momentum. The key observation is that \textbf{momentum contains the same information as velocity but is often more fundamental}.

Now, we define a new function:
\begin{equation}
    H(q, p) = \sum_i p_i \dot{q}_i - L.
\end{equation}
This \textbf{Legendre transform} changes the variables of the system from \( (q, \dot{q}) \) to \( (q, p) \). What we are really doing is reframing mechanics in a way that \textbf{symmetrizes} the equations of motion.

\subsection*{Canonically conjugate coordinates: A pair of variables}
The critical realization here is that \( q \) and \( p \) form a natural mathematical pair. The Hamiltonian description elevates momentum to be \textbf{on equal footing} with position:
\begin{equation}
\label{eq:Hamiltonequation}
    \frac{dq_i}{dt} = \frac{\partial H}{\partial p_i}, \quad \frac{dp_i}{dt} = -\frac{\partial H}{\partial q_i}.
\end{equation}
Instead of viewing motion as a trajectory of \textit{positions evolving in time}, we now view it as a \textit{structured evolution in phase space}, where \textbf{position and momentum evolve together}.

Perhaps at first glance, the utility of this idea is non-obvious. What has really changed? In terms of the equations one needs to solve, we have swapped a second order differential equation based on the Lagrangian, for a pair of coupled first order equations. What is remarkable is that such a slight change of perspective provides the most natural route to enlarging our mathematical models to encompass all known physics, both classical and quantum. 

How does this happen? Consider first that the state of our system at any moment in time is completely characterized by its set of canonical coordinates and momenta. That is, our system is now represented by a single \textit{point} in phase space. The dimensionality of phase space is, of course, very different from real space, where three spatial dimensions correspond to a \textit{six-dimensional} phase space (for a single particle). This is just for a system that only requires one set of canonical coordinates! In general, the dimensionality of phase space will grow linearly with the number of particles—independent canonical coordinates—in the system we are modeling. One might argue that a similar perspective can be fashioned from the Lagrangian representation, but the true power of Hamiltonian dynamics does not lie solely in its mapping of the states of physical systems to points in phase space. The truly revelatory upgrade is that phase space provides a geometric interpretation of the dynamics of those states. 

\subsection*{Hamiltonian flow}
What do we mean by this? Put simply, Hamilton's equations define a \textit{flow} over phase space. Each point in the space represents a physical state of the system, which together cover all possible configurations of position and momenta. Now, take \textit{any} point in the space, and the Hamiltonian $H$  associates to that point a velocity via the Hamilton's equations. Taken over the entirety of phase space, $H$ defines a \emph{velocity field} $\Gamma(q,p)$:
\[
\Gamma(q,p) 
\;=\;
\bigl(\dot{q},\,\dot{p}\bigr)
\;=\;
\Bigl(
\tfrac{\partial H}{\partial p},
-\tfrac{\partial H}{\partial q}
\Bigr).
\]

That is, each point in phase space has a velocity determining how a trajectory through that point will evolve. Critically, this means we're no longer tied to thinking about our dynamics in terms of its evolution from a particular initial condition. Instead, the Hamiltonian gives us a flow through space—a description of how each point $(q,p)$ in phase space is evolved in time, determined solely by $H$. We are describing how \textit{all possible system states} evolve, all at the same time. More than that, we can describe how any function $F(q,p)$ of our phase space coordinates will vary along one of these Hamiltonian trajectories. Specifically, if we consider its total time derivative along a Hamiltonian flow, we have
\[
\frac{dF}{dt}
\;=\;
\frac{\partial F}{\partial q}\,\dot{q}
\;+\;
\frac{\partial F}{\partial p}\,\dot{p}.
\]
Substituting the Hamiltonian equations for $\dot{q}$ and $\dot{p}$ we can define this time-dependence in terms of the \textit{Poisson bracket}
\[
\{F,\,H\}
\;=\;
\frac{\partial F}{\partial q}\,\frac{\partial H}{\partial p}
\;-\;
\frac{\partial F}{\partial p}\,\frac{\partial H}{\partial q},
\]
and hence:
\[
\frac{dF}{dt}
\;=\;
\{\,F,\,H\}.
\]
One immediate consequence of this is that we have a general condition which conserved quantities will identically obey - $\{\,F,\,H\} =0$, no matter what the system's initial conditions should be. Moreover, it provides a direct justification for why it should be the total energy—the sum of kinetic $T$ and potential $V$ energies—that  is conserved. If $L=T-V$, then the corresponding Hamiltonian is $H= T+V$. The phase space picture of dynamics as a Hamiltonian flow in phase space tells us that its conservation is not an accident, but in fact an essential consequence of its true identity. $H$ is not simply conserved by dynamics, it is their \textit{generator}. 

The conceptual gains made by the Hamiltonian formulation do not end there however. By representing dynamics as a flow over the space of all possible system states, we are no longer bound to thinking about evolutions according to a particular initial condition. Instead of taking a single system configuration—a single point in phase space—as our initial condition, we can instead consider a \textit{distribution} $\rho(q,p,t)$ over phase space. That is, rather than picking some initial condition $(q_0, p_0)$,  we can consider an  \textit{ensemble} of initial states using $\rho(q,p,0)$, and predict how this distribution evolves with time. In this manner, it is possible to encode our own uncertainty into dynamics. In particular, if we interpret $\rho(q,p,t)$ as a \textit{probability density}, the conservation of probability demands that at all times:
\begin{equation}
    \int {\rm d} q {\rm d} p \ \rho(q,p,t) = 1.
\end{equation} 

This is equivalent to a \textit{continuity} equation over $\rho$:
\[
\frac{\partial\rho}{\partial t}
\;+\;
\nabla \cdot \bigl(\rho\,\mathbf{v}\bigr)
\;=\;
0,
\]
where $\mathbf{v}$ is the velocity field. This is in precise analogy to physical fluid flows—in the absence of a sink or source, all the density entering a region has to be balanced by the amount leaving it. Again, this is just another way of stating that if we are describing our system as a probability distribution over all its possible states, that probability cannot ``leak". It always has to add up to one over the space of all possible states. In phase space, $\mathbf{v} = \Gamma$, so
\[
\frac{\partial\rho}{\partial t}
\;+\;
\nabla_{(q,p)} \cdot \bigl(\rho\,\Gamma\bigr)
\;=\;
0.
\]
For Hamiltonian flows, we have \emph{incompressibility} in phase space:
\[
\nabla_{(q,p)} \cdot \Gamma
\;=\; \frac{\partial^2 H}{\partial x\partial p}- \frac{\partial^2 H}{\partial x\partial p} \;=\; 0.
\]
This property is baked into the structure of Hamilton's equation, and means that when the divergence is expanded, the continuity equation is simply
\[
\frac{\partial\rho}{\partial t}
\;+\;
\Gamma \cdot \nabla_{(q,p)} \rho
\;=\;
0.
\]
But $\Gamma \cdot \nabla_{(q,p)} \rho = \{\rho,\,H\}$. Thus
\BoxedEquation{
\frac{\partial\rho}{\partial t}
\;+\;
\{\rho,\,H\}
\;=\;
0,
}
which is the celebrated \emph{Liouville equation}. It asserts that the density $\rho$ is transported by the Hamiltonian flow without sources or sinks, a statement equivalent to the preservation of phase--space volume.

In some sense, this generalization of dynamics is somewhat counter intuitive—that deepening our \textit{mathematical} understanding of dynamics is necessary in order to describe scenarios where our \textit{physical} knowledge is diminished. It reflects a certain irony—propagating ignorance requires a surer grasp on the underlying formalism than certainty! This is—to be sure—a huge conceptual broadening. We are no longer thinking about the equation of motion for some set of point projectiles, but the dynamics of \textit{all} possible realizations of that system, with the initial condition reflecting our probabilistic guess at the system's initial state.  This change in perspective can sometimes inspire discomfit—are we not giving up something by moving to this statistical picture? The answer—emphatically—is \textit{no}. Deterministic dynamics is nothing but a special case of this more capacious perspective. To see this, it is worth emphasizing how the elements of this \textit{statistical picture} correspond to measurement. In particular, anything we can measure is described as an \textit{observable} $O(q,p)$, whose value will be a function of the phase space coordinates. The most obvious observables are the canonical coordinates themselves, $q$ and $p$. The \textit{expectation} of any observable at some time $t$ is given by:
\begin{equation}
    \expval{O}(t) = \int {\rm d} q {\rm d} p \ O(q,p) \rho(q,p,t). 
\end{equation}
In such a statistical picture, we are able to relax the ontological requirement of certainty. If in an experiment we find that the observable we measure has a distribution of values, we can rest safe in the knowledge that Hamiltonian dynamics can capture this scenario and indeed provide a prediction on how this distribution will evolve. And if it so happens that the dynamics of the observable appear deterministic at the finest precision available to us, the Liouville equation will still describe it. In this instance, however, the deterministic dynamics of observables will correspond to an initial distribution of the form 
\begin{equation}
    \rho(q,p,0)=\delta(q-q^0)\delta(p-p^0),
\end{equation}
where $\expval{O}(0) = O(q^0, p^0)$. It is relatively easy to use the definition of observable expectations to derive dynamical equations for them. In the case of a deterministic condition, it is easy to verify that the observables will follow precisely the appropriate Newtonian dynamics for a system with initial position $q^0$ and momentum $p^0$. 

The point, however, is that this scenario is a limiting case of our new, richer description of dynamics. We are perhaps beating a dead horse, but it bears repeating—this statistical description of dynamics \textit{generalizes} deterministic dynamics. This is, we believe, a vital point to catch in anticipation of the final step into quantum mechanics: to stress that, in their full generality, \textit{all} dynamics—quantum and classical—must be statistical in their nature. The simple reason is that a distribution over dynamical trajectories is necessarily more general than a single instance of one. 

Moreover, upgrading our object of study to a probability distribution gives us the necessary tools to understand a much richer range of dynamical behavior, ranging from chaos to thermodynamics. Indeed, one of the ongoing tasks of classical physics is to properly characterize how the spectacular variety of dynamical complexity we see around us emerges as a particular manifestation of the Hamiltonian dynamics undergirding it.

With this, our historical survey of dynamics has come to its natural conclusion. We have seen that a \textit{statistical} picture of dynamics emerges. The question now is how best to represent such a statistical description. The answer—which we will develop over the remainder of this book—lies in \textit{Hilbert space}.

\chapter{1. Mathematical background}\label{Chapter:1}
\section{The Dirac bra-ket notation}\label{Sec:DiracBraKetFormalism}

We begin with the basic notations, which are also given in more detail in Refs.~\cite{Shankar1994,Sakurai1994}.
Let us review some fundamental definitions:

\emph{A vector space} $\mathcal{V} = \{ \ket{A}, \ket{B}, \ket{C}, \ldots\}$ is a set of elements called vectors and denoted by the ket $\ket{\quad}$ for which multiplication by a complex number and addition are defined. More formally, if $\ket{A}, \ket{B} \in \mathcal{V}$ then $\alpha \ket{A} + \beta \ket{B} \in \mathcal{V}$, for arbitrary complex numbers $\alpha$ and $\beta$ with the assumptions $0 \cdot \ket{A} = 0 \in \mathcal{V}$ and $\ket{B} + 0 = \ket{B}$. 

\emph{A Hilbert space} is a vector space endowed with an operation called \emph{the scalar product}, which takes as input any two vectors $\ket{A}$, $\ket{B}$ and returns a complex number denoted by $\langle A \ket{B}$ such that the following axioms are obeyed:
\begin{itemize}
	\item $\langle A \ket{B} = \langle B \ket{A}^*$, where ${}^*$ denotes a complex conjugation,
	\item $\bra{C} \Big( \alpha \ket{A} + \beta \ket{B} \Big) = \alpha \langle C \ket{A} + \beta \langle C \ket{B}$,
	\item $\Big( \alpha \bra{A} + \beta \bra{B} \Big)\ket{C} = \alpha \langle A \ket{C} + \beta \langle B \ket{C}$.
\end{itemize}

We also need to introduce the operation of \emph{hermitian conjugation} ($\dagger$) mapping ket into bra vectors $\ket{A}^{\dagger} = \bra{A}$ and vice versa $\bra{A}^{\dagger} = \ket{A}$, such that 
\begin{itemize}
	\item $\ket{A}^{\dagger} \ket{B} = \langle A \ket{B}$,
	\item $\Big( \alpha \ket{A} + \beta \ket{B} \Big)^{\dagger} = \alpha^* \bra{A} + \beta^* \bra{B}$,
	\item $\Big( \alpha \ket{A_1}\bra{B_1} + \beta \ket{A_2}\bra{B_2} \Big)^{\dagger} = \alpha^* \ket{B_1}\bra{A_1} + \beta^* \ket{B_2}\bra{A_2}$,
	\item $\bra{C} \hat{A} \ket{B}^* = \bra{B} \hat{A}^{\dagger} \ket{C}$.
\end{itemize}
While the scalar product $\langle A \ket{B}$ is also called the \emph{inner product}, the expression $\ket{A}\bra{B}$ is called the \emph{outer product}, since it takes two vectors and returns a linear operator.

There are two important examples of Hilbert spaces relevant for us. First, the Hilbert space of \emph{square-integrable} (the integral $\int d^n y \, |f(y_1, \ldots, y_n)|^2 $ is finite) complex-valued functions: Since we know how to add such functions and multiply by a complex number without making the just-mentioned integral diverge, the set of functions is a vector space. It becomes a Hilbert space by adding the scalar product $\langle f \ket{g} = \int d^n y \, f^*(y_1, \ldots, y_n) g(y_1, \ldots, y_n)$. This Hilbert space is used to represent states of a closed quantum and classical systems.
 
The second example of the Hilbert space is a set of linear operators acting in a Hilbert space with a \emph{finite trace} $\Tr\left( \hat{A}^{\dagger} \hat{A} \right)$ with the Hilbert-Schmidt scalar product $\langle A \ket{B} = \Tr\left( \hat{A}^{\dagger} \hat{B} \right)$. This construction plays a crucial role in the theory of open quantum systems (see, e.g., Sec.~\ref{Sec:UncertaintyPrincDensMatrix}).

\section{The Dirac delta-function: Conventional notations}

The Dirac delta function is widely used, but, unfortunately, is rarely understood in detail. Since we make heavy use of the Dirac bra-ket notation, we recall basic facts about the delta function. The first and foremost is that the Dirac function is neither Dirac's (it was introduced by Oliver Heaviside) \cite{jackson2008examples} nor a function. The delta function is in fact a \emph{functional}: it takes a whole function as an argument   and returns a number according to the rule
\begin{align}\label{FormalDefDiracDeltaFunc}
	\delta(x-x') : f(x) \to f(x'), \qquad f(x) \in C_0^{\infty},
\end{align} 
where $f(x) \in C_0^{\infty}$ denotes the requirement for the function $f(x)$ to be an infinitely differentiable and identically equal to zero along with all its derivatives after some finite value of the argument. The latter property is known as the finite support, implying that there is a fixed value $L$ such that $f^{(n)}(x)=0$ for all $|x| > L$ and $n=0,1,2,\ldots$. This will become a very useful property shortly as
\begin{align}\label{BoundaryCondF}
	f^{(n)}(\pm\infty)=0.
\end{align}

Arguably, the definition (\ref{FormalDefDiracDeltaFunc}) is not visually appealing, so let us slightly abuse the notation and rewrite Eq.~(\ref{FormalDefDiracDeltaFunc}) as
\begin{align}\label{DefDiracDeltaFunc}
	\int dx \, \delta(x-x') f(x) = f(x'). 
\end{align}
Note the integration $\int dx$ in Eq. (\ref{DefDiracDeltaFunc}) no longer means ``the area under the curve,'' it rather denotes the action of $\delta(x-x')$ onto the function $f(x)$. 

The $n^{\rm th}$ derivative of the delta function is defined as
\begin{align}
	\delta^{(n)} (x-x') : f(x) \to (-1)^n f^{(n)}(x'), \qquad f(x) \in C_0^{\infty},
\end{align}
or using the convention of Eq. (\ref{DefDiracDeltaFunc}),
\begin{align}\label{DefDerDiracDeltaFunc0}
	\int dx \, \delta^{(n)}(x-x') f(x) = (-1)^n f^{(n)}(x'). 
\end{align}

The definition (\ref{DefDerDiracDeltaFunc0}) in fact justifies such a recycling of the integration symbol. The definition (\ref{DefDerDiracDeltaFunc0}) can be obtained from (i.e., motivated by) Eq. (\ref{DefDiracDeltaFunc}) via the integration by parts 
\begin{align}
	\int dx \, \delta'(x-x') f(x) = \delta(x-x') f(x)\Big|_{-\infty}^{+\infty} - \int dx \, \delta(x-x') f'(x) = -f'(x'),
\end{align}
where we made use of Eq. (\ref{BoundaryCondF}). Equation (\ref{DefDerDiracDeltaFunc0}) can be obtained by performing the integration by parts $n$ times.

The following properties of the delta function follow from the definitions (for details see, e.g., Refs. \cite{Gelfand1964, Kanwal1998}):
\begin{align}
	& \int dx'' \, \delta^{(n)} (x-x'') \delta^{(m)} (x''-x') = \delta^{(n+m)} (x-x'), \label{DeltFunctProperty1} \\
	& \int \frac{dx}{2\pi} x^n e^{-ipx} = i^n \delta^{(n)} (p), \label{DeltFunctProperty2} \\
	& g(x) \delta (x-x') = g(x') \delta(x-x'), \label{DeltFunctPropert3} \\
	& g(x) \delta^{(n)} (x-x') = \sum_{r=0}^n (-1)^{n+r} \left( n \atop r \right) g^{(n-r)}(x') \delta^{(r)} (x-x'), 
    \qquad \left( n \atop r \right) = \frac{n!}{r! (n-r)!}, 
		\label{DeltFunctProperty4} \\
	& x^m \delta^{(n)} (x) = \left\{
		\begin{array}{ccc}
			(-1)^m \frac{n!}{(n-m)!} \delta^{(n-m)}(x) & \mbox{ if } & m < n, \\
			(-1)^n n! \delta (x)				& \mbox{ if } & m=n, \\
			0							& \mbox{ if } & m > n.
		\end{array} \right. \label{DeltFunctProperty5}
\end{align}
Here the function $g(x)$ needs to be an infinity differentiable only (i.e., $g(x) \in C^{\infty}$); it does not have to be of finite support  (i.e., it is not necessary that $g(x) \in C_0^{\infty}$, $C_0^{\infty} \subset C^{\infty}$). 

How should we understand the identities (\ref{DeltFunctProperty1})--(\ref{DeltFunctProperty5})? For example, consider Eq.~(\ref{DeltFunctPropert3}). This equality should be viewed as a shortcut of the following expanded expression
\begin{align}
	\int dx \, g(x) \delta (x-x') f(x) = \int dx \, g(x') \delta(x-x') f(x).
\end{align}
The latter is indeed an identity since the r.h.s. equals the l.h.s. under the definition~(\ref{DefDiracDeltaFunc})
\begin{align}
	\int dx \, g(x) \delta (x-x') f(x) = g(x') f(x')= \int dx \, g(x') \delta(x-x') f(x).
\end{align}
A special case of Eq.~(\ref{DeltFunctPropert3}) with $x'=0$ and $g(x)=x$ reads
\begin{align}\label{XTimesDeltaEquaslZero}
	x\delta(x) = 0.
\end{align}

It is crucial to emphasize that one should avoid multiplication of the delta function and its derivatives by a singular function (i.e., the smoothness of $g(x)$ in Eqs. (\ref{DeltFunctPropert3}) and (\ref{DeltFunctProperty4}) is crucial) in order to avoid paradoxes. In particular, consider the \emph{Schwartz counter example} 
\begin{align}
	\frac{1}{x} x \delta(x) = ?,
\end{align}
whose final result depends on where the parenthesis are drawn: on the one hand,   
\begin{align}
	\left( \frac{1}{x} x\right) \delta(x) = \delta(x);
\end{align}
on the other hand, using Eq. (\ref{XTimesDeltaEquaslZero})
\begin{align}
	 \frac{1}{x} \Big( x \delta(x) \Big) = 0. 
\end{align}

The Dirac delta function and its derivatives play an important role due to the following property: If $y(x)$ is a function zero everywhere except a point $x_0$, then
\begin{align}\label{MostGeneralFormofDistribution}
	y(x) = \sum_{n=0}^{\infty} c_n \delta^{(n)} (x-x_0).
\end{align}  
Furthermore, there are sequences of coefficients $c_n$ such that $\sum_{n=0}^{\infty} c_n \delta^{(n)} (x-x_0)$ converges to a smooth function (as you shall show in the homework). In other words, the set $\delta(x), \delta^{(1)} (x), \delta^{(2)} (x), \ldots $ can be used as \emph{a basis} for expanding a function $y(x)$, in a similar way, how monomials $1, x, x^2, \ldots$ are used as basis elements in the Taylor expansion. According to Eq. (\ref{DeltFunctProperty2}), the Fourier transform of $x^n$ equals  $\delta^{(n)} (x)$. Thus, the fact of $\{ \delta^{(n)} (x) \}$ being a basis is simply a consequence of the Fourier transform's inevitability.

\section{The Dirac delta-function: Bra-ket notations}

{Using the stated fact that the delta function is a \emph{functional}, it can also be viewed as a \emph{bra vector}:} it takes a ket vector as an argument  and returns a number. Consider the basis $\ket{x}$, such that the following resolution of identity takes the form: $\hat{1} = \int dx \ket{x}\bra{x}$. The Dirac delta function and its derivatives $\langle \delta_{x'}^{(n)} |$ are such that for a vector $\ket{f}$ 
\BoxedEquation{\label{FormalDefDiracDeltaFunc1}
	\langle \delta_{x'}^{(n)} \ket{f} = (-1)^n f^{(n)}(x'), \qquad f(x) = \langle x \ket{f}, \quad n=0,1,\ldots.
}
Let us pick a vector $\ket{\omega}$ such that $\langle x \ket{\omega} = e^{i\omega x}$, then
\begin{align*}
	\langle \delta_{x'}^{(n)} \ket{\omega} = \int dx \, \langle \delta_{x'}^{(n)} \ket{x}\langle x \ket{\omega} 
		= \int dx \, \langle \delta_{x'}^{(n)} \ket{x} e^{i\omega x} = (-i\omega)^n e^{i\omega x'}.
\end{align*}
This gives us
\begin{align}
		\langle \delta_{x'}^{(n)} \ket{x} = \int \frac{d\omega}{2\pi} (-i\omega)^n e^{-i\omega (x-x')},
	~~\mbox{ or simply }~~ 
	 	\langle \delta_{x'}^{(n)} \ket{x} = \langle \delta_{0}^{(n)} \ket{x-x'}.
\end{align}
The latter identity suggests introducing the shorthand notation
$
	\delta(x-x') = \langle \delta_{x'}^{(0)} \ket{x}.
$
Furthermore,
\begin{align}
	\langle \delta_{x'}^{(n)} \ket{x} = \int \frac{d\omega}{2\pi} (-i\omega)^n e^{-i\omega (x-x')}
		= \frac{\partial^n}{\partial x^n} \int \frac{d\omega}{2\pi} e^{-i\omega (x-x')}
		= \frac{\partial^n}{\partial x^n} \langle \delta_{x'}^{(0)} \ket{x}, \notag
\end{align}
which gives
\BoxedEquation{ 
	\langle \delta_{x'}^{(n)} \ket{x} = \delta^{(n)}(x-x') = \int \frac{d\omega}{2\pi} (-i\omega)^n e^{-i\omega (x-x')}, \label{DeltFunctProperty2b} \\
	\int dx \, \delta^{(n)}(x-x') f(x) = (-1)^n f^{(n)}(x').  \label{DefDerDiracDeltaFunc}
}
Note also that
\begin{align}
	\delta(-x) = \int_{-\infty}^{+\infty} \frac{d\omega}{2\pi} e^{i\omega x}
		= -\int_{+\infty}^{-\infty} \frac{d\omega}{2\pi} e^{-i\omega x}
		=  \int_{-\infty}^{+\infty} \frac{d\omega}{2\pi} e^{-i\omega x} = \delta(x), \notag
\end{align}
thus,
\BoxedEquation{\label{DeltaFunctionParity}
	\delta^{(n)}(-x) = (-1)^n \delta^{(n)}(x). 
}
Now, let us select the ket vector $\ket{f}$ in Eq.~\eqref{FormalDefDiracDeltaFunc1} such that $\langle x \ket{f} = g(x)e^{i\omega x}$. Then, with help of Eq. (\ref{DeltFunctProperty2b}), we obtain
\begin{align}
	\int dx \, \delta^{(n)} (x-x') g(x) e^{i\omega x} = (-1)^n \left[ g(x') e^{i\omega x'} \right]^{(n)} 
		= (-1)^n \sum_{k=0}^n \left( n \atop k \right) g^{(n-k)}(x') (i\omega)^k e^{i\omega x'} \Longrightarrow \notag\\
	\delta^{(n)} (x-x') g(x) = (-1)^n \sum_{k=0}^n \left( n \atop k \right) g^{(n-k)}(x') \int \frac{d\omega}{2\pi}(i\omega)^k e^{-i\omega (x-x')}
    \nonumber
    \\
    = \sum_{k=0}^n(-1)^{n+k} \left( n \atop k \right) g^{(n-k)}(x')\delta^{(k)} (x-x')
    ,
\end{align}
which coincides with the property~\eqref{DeltFunctProperty4}.

\section{Elementary properties of self-adjoint operators}\label{Sec:Self_Adj_Op}

A linear operator $\hat{A}$ is self-adjoint if $\hat{A}^{\dagger} = \hat{A}$. Let us demonstrate that \emph{eigenvalues of a self-adjoint operators are real},
\begin{align}
	\hat{A} \ket{A} = A \ket{A}, \quad \bra{A} \hat{A}^{\dagger}  =  \bra{A} \hat{A} = \bra{A} A^* \Longrightarrow \bra{A} \hat{A} \ket{A} = A \langle A \ket{A} = A^* \langle A \ket{A} \Longrightarrow A = A^*
\end{align}
because $\langle A \ket{A}  \neq 0$.

Now, assuming that $\hat{A}$ has a continuous spectrum, we show that \emph{the eigenvectors of a self-adjoint operator $\hat{A}$ form a complete and orthogonal basis}. Given that
\begin{align*}
	\hat{A} \ket{A} = A \ket{A}, \quad \bra{A'} \hat{A} = \bra{A'} A',
\end{align*}
we have
\begin{align}
	\bra{A'} \hat{A} \ket{A} = A' \langle A' \ket{A} = A \langle A' \ket{A} \Longrightarrow
	\left(A' - A\right)  \langle A' \ket{A} = 0.
\end{align}
Whence, $ \langle A' \ket{A} = 0$ if $A' \neq A$ and $ \langle A \ket{A} \neq 0$. In other words, $g(A' - A) = \langle A' \ket{A} $ is a distribution zero everywhere except the origin.
Then according to Eq. (\ref{MostGeneralFormofDistribution}), we have
\begin{align}\label{DeltaFuncExpantionOfBraAKetA}
	\langle A' \ket{A} = \sum_{n=0}^{\infty} c_n \delta^{(n)} (A' - A).
\end{align}
Substituting Eq.~(\ref{DeltFunctProperty5}) into Eq.~(\ref{DeltaFuncExpantionOfBraAKetA}), one shows that $\langle A' \ket{A} = c_0 \delta (A' - A)$; hence, the orthogonality of the basis $\{ \ket{A} \}$ is established. The completeness means that $1 = \int dA \ket{A}\bra{A}$, thus 
\begin{align}
	\langle A' \ket{A''} = \int dA \langle A' \ket{A} \langle A \ket{A''},
\end{align}
Then, by using Eq. (\ref{DeltFunctProperty1}) with $n=m=0$, we obtain
\begin{align}
	c_0 \delta (A'-A'') = c_0^2 \int dA \delta (A' - A) \delta(A-A'') = c_0^2 \delta (A'-A''),
\end{align}
thus, $\langle A' \ket{A} = \delta (A' - A)$.

\section{What can be said about operators $\hat{A}$ and $\hat{B}$ If $[\hat{A}, \hat{B}] = i\kappa$?}\label{Sec_Realizations_of_HeisenbergAlg}

Surprisingly, a lot of information about self-adjoint operators $\hat{A}$ and $\hat{B}$ can be deduced from the fact that they obey the canonical commutation relation 
\begin{align}\label{AB_commut_relation}
	[\hat{A}, \hat{B}] = i\kappa, \qquad \kappa > 0.
\end{align}

We will find a convenient representation of the operators $\hat{A}$ and $\hat{B}$ assuming that they have purely continuous spectra and
\begin{align}
	&\hat{A} \ket{A} = A \ket{A}, \qquad \langle A' \ket{A} = \delta(A' - A), \qquad 1 = \int dA \ket{A}\bra{A}, \\
	&\hat{B} \ket{B} = B \ket{B}, \qquad \langle B' \ket{B} = \delta(B' - B), \qquad 1 = \int dB \ket{B}\bra{B}.
\end{align}

\begin{align}
	\bra{A'} [\hat{A}, \hat{B}] \ket{A} = (A' - A) \bra{A'} \hat{B} \ket{A} =  i\kappa \delta(A' - A), 
\end{align}
Using the delta function and its derivative as a basis [Eq. (\ref{MostGeneralFormofDistribution})], 
\begin{align}
	\bra{A'} \hat{B} \ket{A}  = \sum_{n=0}^{\infty} c_n \delta^{(n)}(A' - A),
\end{align}
we obtain (by dropping the term $n=0$)
\begin{align}
	\bra{A'} \hat{B} \ket{A} = -i\kappa \delta'(A' - A),
\end{align}
where  a special case of Eq. (\ref{DeltFunctProperty5}), $x\delta'(x)=-\delta(x)$, was used. In a similar fashion, one obtains 
\begin{align}
	\bra{B'} \hat{A} \ket{B} = i\kappa \delta'(B' - B).
\end{align}
Therefore, we can write
\begin{align}
	B \langle A  \ket{B} = \bra{A}\hat{B} \ket{B} = \int dA' \bra{A}\hat{B} \ket{A'} \langle A' \ket{B}
	= -i\kappa \int dA' \delta'(A - A') \langle A' \ket{B}
    \notag\\
    = i\kappa \int dA' \delta'(A' - A) \langle A' \ket{B}
	\Longrightarrow B \langle A  \ket{B} = -i\kappa \frac{\partial}{\partial A} \langle A  \ket{B},
\end{align}
which results in
\begin{align}
	\langle A  \ket{B} 
    = \mbox{const} \cdot e^{i AB / \kappa}.
\end{align}
Let us determine the constant
\begin{align}
	\delta(A - A') = \langle A \ket{A'} = \int dB \langle A \ket{B} \langle B \ket{A'} 
	= \mbox{const}^2 \int dB e^{i(A - A')B/\kappa} = \mbox{const}^2 \cdot 2\pi \kappa \delta(A - A').
\end{align}
Hence, we have established that
\begin{align}
	[\hat{A}, \hat{B}] = i\kappa \Longrightarrow \langle A  \ket{B} = \frac{1}{\sqrt{2\pi\kappa}} e^{i AB / \kappa}.
\end{align}

However, the converse is also true (i.e., we can draw $\Longleftrightarrow$ instead of $\Longrightarrow$)
\begin{align}
	\bra{A'} [\hat{A}, \hat{B}] \ket{A} = (A' - A) \bra{A'} \hat{B} \ket{A} = (A' - A) \int dB B \langle A' \ket{B} \langle B \ket{A} \notag\\
    = (A' - A) \kappa \int \frac{dB}{2\pi\kappa} \frac{B}{\kappa} e^{i(A' - A)B/\kappa} 
	= (A' - A) i\kappa \delta'(A - A')
    \notag\\
    = -i\kappa (A' - A) \delta'(A' - A)
    = i\kappa \delta(A' - A).
\end{align}
Consider an arbitrary vector $\ket{\psi}$, 
\begin{align}
	\bra{A} \hat{B} \ket{\psi} = \int dA' \bra{A} \hat{B} \ket{A'} \langle A' \ket{\psi} = -i\kappa \int dA'  \delta'(A - A') \langle A' \ket{\psi} 
    \nonumber
	\\
    = i\kappa \int dA'  \delta'(A' - A) \langle A' \ket{\psi} = -i\kappa \frac{\partial}{\partial A} \langle A \ket{\psi}, 
\end{align}
thus, $\psi(A) = \langle A \ket{\psi}$ the $\ket{\psi}$ state in the $A$-representation
\begin{align}
	\bra{A} \hat{B} \ket{\psi} = -i\kappa \frac{\partial}{\partial A} \psi(A)
	\Longleftrightarrow
	``\hat{A} = A, \quad \hat{B} = -i\kappa \frac{\partial}{\partial A}."
\end{align}
Similarly for $\psi(B) = \langle B \ket{\psi}$
\begin{align}
	\bra{B} \hat{A} \ket{\psi} = i\kappa \frac{\partial}{\partial B} \psi(B)
	\Longleftrightarrow
	``\hat{A} = i\kappa \frac{\partial}{\partial B}, \quad \hat{B} = B."
\end{align}

To summarize this section, we write
\begin{eqnbox}
    \vspace{-12pt}
    \begin{eqnarray}
        &&[\hat{A}, \hat{B}] = i\kappa 
        ~\Longleftrightarrow~ \langle A  \ket{B} 
        = \frac{1}{\sqrt{2\pi\kappa}} e^{i AB / \kappa} ~\Longleftrightarrow
        \nonumber
        \\
        &&\bra{A} \hat{B} \ket{\psi} = -i\kappa \frac{\partial}{\partial A} \langle A \ket{\psi}, \quad
	\bra{B} \hat{A} \ket{\psi} = i\kappa \frac{\partial}{\partial B} \langle B \ket{\psi}.\label{EqSummaryCommutatorSection}
    \end{eqnarray}
\end{eqnbox}
\section{Function of operators}

For our further development, we need to understand how to define a function of an operator. We will do it by drawing inspiration from well-known results for ordinary scalar functions. Let us tacitly assume that the function $f(x)$ is ``good'', that is, Taylor expandable. Then tautologically, the power series
\begin{align}
	f(x) = \sum_{n=0}^{\infty} \frac{f^{(n)}(0)}{n!} x^n
\end{align}
can be viewed as the definition of function $f(x)$. This is readily generalizable to the operator case,
\BoxedEquation{\label{FTaylorExpansion}
	f(\hat{A}) \coloneqq \sum_{n=0}^{\infty} \frac{f^{(n)}(0)}{n!} \hat{A}^n.
}
Indeed, Eq.~(\ref{FTaylorExpansion}) qualifies as the definition of an operator-valued function because it involves three known types of operations: product and sum of two operators and multiplication of an operator by a scalar. 

The following rule is an important consequence of Eq. \eqref{FTaylorExpansion} that will be used in numerical simulations 
\BoxedEquation{\label{EqFunctionActingOnEigenVals}
	\hat{A} \ket{A} = A \ket{A} \Longrightarrow f(\hat{A}) \ket{A} = f(A) \ket{A},
	\qquad
	\bra{A} \hat{A} = A \bra{A} \Longrightarrow \bra{A} f(\hat{A})  = f(A) \bra{A}.
}
This is derived as follows
\begin{align}
	f(\hat{A}) \ket{A} = \sum_{n=0}^{\infty} \frac{f^{(n)}(0)}{n!} \underbrace{\hat{A} \cdots \hat{A}\hat{A}}_{\mbox{$n$-times}} \ket{A}  
	= \sum_n \frac{f^{(n)}(0)}{n!} \underbrace{\hat{A} \cdots \hat{A}}_{\mbox{$(n-1)$-times}} A \ket{A} 
    \notag\\
	= \cdots = \sum_{n=0}^{\infty} \frac{f^{(n)}(0)}{n!} A^n \ket{A}	=  f(A) \ket{A}.
\end{align}

Likewise, recall the Cauchy Integral Formula, 
\begin{align}
	f(x) = \frac{1}{2\pi i} \oint_{\Gamma} \frac{f(z)dz}{z-x},
\end{align}
where the contour $\Gamma$ encloses the point $x$. This suggests another definition
\begin{align}\label{FCauchyExpansion}
	f(\hat{A}) \coloneqq \frac{1}{2\pi i} \oint_{\Gamma} dz f(z)(z\hat{1}-\hat{A})^{-1},
\end{align}
which involves the resolvent $(z\hat{1}-\hat{A})^{-1}$. In fact, both definitions (\ref{FTaylorExpansion}) and (\ref{FCauchyExpansion}) are identical. One can show it by performing the Taylor expansion of the resolvent and differentiating the Cauchy integral formula.

The Fourier transforming back and forth  does not change a sufficiently smooth function 
\begin{align}
	f(x) = \frac{1}{2\pi} \int d\omega \left( \int d\xi f(\xi) e^{i\xi\omega} \right) e^{-i\omega x}.
\end{align}
The function of an operator can be also defined as 
\begin{align}\label{FCauchyExpansionFourier}
	f(\hat{A}) \coloneqq \frac{1}{2\pi} \int d\omega d\xi \, f(\xi) e^{i\xi\omega} e^{-i\omega \hat{A}} 
	= \frac{1}{2\pi} \int d\omega d\xi \, f(\xi) e^{i\omega(\xi - \hat{A})} .
\end{align}
Thus, in order to calculate an arbitrary function of an operator you jus need to know how to calculate the exponential. It is great because  the Taylor expansion (\ref{FTaylorExpansion}) for $\exp$ always converges.

Let us establish
\BoxedEquation{\label{Eq_UfU_minus_equals_fUU}
	\hat{U} f( \hat{A} ) \hat{U}^{-1} = f\left( \hat{U}\hat{A}\hat{U}^{-1} \right).
}
From Eq. (\ref{FTaylorExpansion}), we get
\begin{align}
	\hat{U} f( \hat{A} ) \hat{U}^{-1} 
	&=\hat{U} \left( \sum_{n=0}^{\infty} \frac{f^{(n)}(0)}{n!} \hat{A}^n \right)  \hat{U}^{-1}
	= \sum_{n=0}^{\infty} \frac{f^{(n)}(0)}{n!} \hat{U} \hat{A}^n \hat{U}^{-1}
	= \sum_{n=0}^{\infty} \frac{f^{(n)}(0)}{n!} \hat{U} \underbrace{\hat{A}\hat{A} \cdots \hat{A}}_{\mbox{$n$-times}} \hat{U}^{-1} \notag\\
	&=\sum_{n=0}^{\infty} \frac{f^{(n)}(0)}{n!} \hat{U} \hat{A} \hat{U}^{-1}  \hat{U} \hat{A} \hat{U}^{-1} \cdots \hat{U} \hat{A} \hat{U}^{-1} 
	=\sum_{n=0}^{\infty} \frac{f^{(n)}(0)}{n!} \left( \hat{U} \hat{A} \hat{U}^{-1} \right)^n
	= f\left( \hat{U}\hat{A}\hat{U}^{-1} \right).
\end{align}

Equation~(\ref{Eq_UfU_minus_equals_fUU}) suggests a simple algorithm for calculating $f(\hat{A})$ if $\hat{A}$ is a diagonalizable matrix, i.e., if there is a representation $\hat{A} = \hat{U} \hat{\Lambda} \hat{U}^{-1}$, where $\hat{\Lambda} = \diag(\lambda_1, \lambda_2, \ldots)$ is a diagonal matrix with the eigenvalues $\lambda_1, \lambda_2, \ldots$ occupying the main diagonal. Since a direct computation of matrix products establishes that $[\diag(\lambda_1, \lambda_2, ...)]^n = \diag(\lambda_1^n, \lambda_2^n, ...)$, we finally get from Eq.~(\ref{Eq_UfU_minus_equals_fUU}):
\begin{align}
	f( \hat{A} ) = \hat{U} \diag\left( f( \lambda_1), f( \lambda_2), \ldots \right) \hat{U}^{-1}. 
\end{align}

Let us study the numerical methods used to calculate the function of a matrix. For that we shall analyze the source code of \texttt{Scipy}\footnote{\url{https://github.com/scipy/scipy/blob/9da60e120ea0c967d77f93bbd2bb6b4fb5ba2944/scipy/linalg/matfuncs.py}}. 
In particular, calculation of the matrix exponential is of prime importance for us, see excellent Ref.~\cite{Moler2003} reviewing nineteen ways of calculating the matrix exponent. We will consider the matrix exponential via the eigenvalue decomposition (\texttt{expm2} in \texttt{Scipy}), Taylor expansion (\texttt{expm3} in \texttt{Scipy}), and Pad\'{e} approximation (\texttt{expm} in \texttt{Scipy}).

\section{Pad\'{e} approximation for efficient matrix exponent calculation}

Assume a fractional approximation
\begin{align}\label{EqGeneralPade}
	e^{\hat{A}} \approx  \left( \sum_{k=0}^{K_b} b_k \hat{A}^k \right) \left( \sum_{k=0}^{K_c} c_k \hat{A}^k \right)^{-1},
\end{align}
which is valid for all matrices $\hat{A}$ fulfilling condition for the norm
\begin{align}\label{EqPadeApplicabilityCondition}
	\| \hat{A} \| \leq N_{\max}.
\end{align}
This is known as the Pad\'{e} approximation for the matrix exponential (see, e.g. Ref.~\cite{Moler2003}) and it is implemented in the number of modern numerical packages (e.g., \texttt{expm} in \texttt{Scipy}).

Here, let us discuss the algorithm, which allows us to utilize the expression~(\ref{EqGeneralPade}), \emph{even if the condition (\ref{EqPadeApplicabilityCondition}) is violated}:
\begin{enumerate}
	\item Scale the matrix if $\| \hat{A} \| > N_{\max}$. Calculate $K$ as the integer ceiling of $\log_2 \frac{\| \hat{A} \|}{N_{\max}}$, then $\hat{B} \coloneqq \hat{A} / 2^K$. Note that 
	$$
		\| \hat{B} \| = \| \hat{A} / 2^K \| = \| \hat{A} \| / 2^K,
        \mbox{ thus }\quad
			\| \hat{B} \| \leq \| \hat{A} \| / 2^{\log_2 (\| \hat{A} \| / N_{\max})} = N_{\max}.
	$$
	Hence, the scaled matrix $\hat{B}$ obeys the condition (\ref{EqPadeApplicabilityCondition}).
	\item Use the approximation (\ref{EqGeneralPade}) to get $e^{\hat{B}}$.
	\item Finally, $e^{\hat{A}} = \left( e^{\hat{A} / 2^K} \right)^{ 2^K} = \left( e^{\hat{B}} \right)^{ 2^K} = \hat{\Pi}_K$, where $\hat{\Pi}_n$ must be defined recursively as $\hat{\Pi}_0 =  e^{\hat{B}}$, $\hat{\Pi}_{n+1} = \left( \hat{\Pi}_n \right)^2$, $n=1,2,\ldots$. Note that only $K$ matrix multiplications are required to recover $e^{\hat{A}}$ from $e^{\hat{B}}$.
\end{enumerate}
Summarizing this, the Pad\'{e} approximation (\ref{EqGeneralPade}) requires in total $K + \max(K_b, K_c)$ matrix multiplications. In other words, it takes $O\left(\log\| \hat{A} \| \right)$ of matrix multiplication as $\| \hat{A} \| \to \infty$. 

Let us compare this approach with the straightforward  Taylor expansion,
\begin{align}\label{EqTaylorExp}
	e^{\hat{A}} = \sum_{k=0}^{\infty} \frac{\hat{A}^k}{ k! }.
\end{align}
For numerical calculations, the summation should be terminated when the term is smaller or equal to some tolerance value $\varepsilon$ determining the accuracy of the final result. Thus, we find $K$ [indicating the number of matrix multiplications necessary for computations via Eq. (\ref{EqTaylorExp})] such that 
\begin{align}
	\varepsilon \sim \| \hat{A}^K \| / K! \leq  \| \hat{A} \|^K / K!,  \notag
\end{align}
in order to get a rough estimate for $K$
\begin{align}
	\varepsilon \sim \| \hat{A} \|^K / K! \Longrightarrow \| \hat{A} \| \sim (\varepsilon K!)^{1/K} 
	\sim K / \exp(1), \quad K \to \infty.
\end{align}
Therefore, the Taylor expansion requires $O\left(\| \hat{A} \| \right)$ matrix multiplications as $\| \hat{A} \| \to \infty$, \emph{exponentially} more than the Pad\'{e} approximation.

\paragraph{The Baker-Campbell-Hausdorff formula.} 
Since we started working with the matrix exponential in this section, for completeness let us also recall the Baker-Campbell-Hausdorff formula, which will be helpful below,
\BoxedEquation{\label{EqBCH}
	& \exp(\hat{A}) \exp(\hat{B}) = \exp(\hat{C}_1 + \hat{C}_2 + \hat{C}_3 + \cdots ), \notag\\
	& \hat{C}_1 = \hat{A} + \hat{B}, \quad \hat{C}_2 = \frac{1}{2} \left[\hat{A}, \hat{B}\right], \quad
	\hat{C}_3 = \frac{1}{12} \left\{ \left[ \hat{A}, [\hat{A}, \hat{B}] \right] + \left[ \hat{B}, [\hat{B}, \hat{A}] \right] \right\}, \quad \ldots
}

\section{Noncommutative analysis: The Weyl calculus}\label{Sec_Weyl_calculus}

Noncommutative analysis \cite{Weyl1950, Feynman1951, Nelson1970, Taylor1973, Maslov1976, Karasev1979, Nazaikinskii1992, Nazaikinskii1996, Madore2000, Andersson2004, Jefferies2004, Rosas2004, Nielsen2005, Muller2007} is a broad and active field of mathematics with a number of important applications. This branch of analysis aims at identifying functions of noncommutative variables and specifying operations with such objects. There are many ways of introducing functions of operators; however, the choice of a particular definition is a matter of convenience \cite{Nazaikinskii1992}. 

To make the paper self-consistent, we shall review basic results from the Weyl calculus, which is a popular version of noncommuting analysis. Equation (\ref{Weyl_derivative_of_function}) plays a crucial role for us. Even though we prove this result within the Weyl calculus, it is valid in more general settings (see, e.g., Ref. \cite{Maslov1976} and page 63 of Ref. \cite{Nazaikinskii1996}).

The starting point is the well known fact that Fourier transforming back and forth  does not change a sufficiently smooth function of $n$-arguments,
\begin{align}\label{Fourier_identity}
	f(\lambda_1, \ldots, \lambda_n) = \frac{1}{(2\pi)^n} \int \prod_{l=1}^n d\xi_l d\eta_l 
		\exp\left[ i \sum_{q=1}^n \eta_q(\lambda_q - \xi_q) \right] f(\xi_1, \ldots, \xi_n).
\end{align}
Following this observation, we define the function of noncommuting operators within the Weyl calculus as 
\BoxedEquation{\label{Weyl_main_definition}
	f(\hat{A}_1, \ldots, \hat{A}_n) \coloneqq \frac{1}{(2\pi)^n} \int \prod_{l=1}^n d\xi_l d\eta_l
		\exp\left[ i \sum_{q=1}^n \eta_q(\hat{A}_q - \xi_q) \right] f(\xi_1, \ldots, \xi_n),
}
where the exponential of an operator is specified by the Taylor expansion (\ref{EqTaylorExp}). 
From the relation
\begin{align}
	f^{\dagger} (\hat{A}_1, \ldots, \hat{A}_n) &= \frac{1}{(2\pi)^n} \int_{-\infty}^{\infty} \prod_{l=1}^n d\xi_l d\eta_l \exp\left[ -i \sum_{q=1}^n \eta_q(\hat{A}_q^{\dagger} - \xi_q) \right] f(\xi_1, \ldots, \xi_n) \notag\\
	&= \frac{1}{(2\pi)^n} \int_{\infty}^{-\infty} \prod_{l=1}^n d\xi_l d(-\eta_l) \exp\left[ i \sum_{q=1}^n \eta_q(\hat{A}_q^{\dagger} - \xi_q) \right] f(\xi_1, \ldots, \xi_n) \notag\\
	&= \frac{1}{(2\pi)^n} \int_{-\infty}^{\infty} \prod_{l=1}^n d\xi_l d\eta_l \exp\left[ i \sum_{q=1}^n \eta_q(\hat{A}_q^{\dagger} - \xi_q) \right] f(\xi_1, \ldots, \xi_n) \notag
\end{align}
we see that the identity
\BoxedEquation{
	f^{\dagger} (\hat{A}_1, \ldots, \hat{A}_n) = f(\hat{A}_1^{\dagger}, \ldots, \hat{A}_n^{\dagger})
}
implies that the function of self-adjoint operators (\ref{Weyl_main_definition}) is itself a self-adjoint operator. Moreover, one may demonstrate that  

\begin{align}
	&f'_{\hat{A}_k} (\hat{A}_1, \ldots, \hat{A}_n)  \coloneqq \lim_{\epsilon\to 0} \frac 1{\epsilon} 
		\left[ f(\hat{A}_1, \ldots, \hat{A}_k + \epsilon, \ldots, \hat{A}_n)  - f(\hat{A}_1, \ldots, \hat{A}_k, \ldots, \hat{A}_n)\right]
        \notag\\
		&=\lim_{\epsilon\to 0} \frac{1}{\epsilon (2\pi)^n} \int \prod_{l=1}^n d\xi_l d\eta_l \,  
			\left( \exp\left[ i \sum_{q=1}^n \eta_q(\hat{A}_q + \epsilon \delta_{q,k} - \xi_q) \right] 
			- \exp\left[ i \sum_{q=1}^n \eta_q(\hat{A}_q - \xi_q) \right] \right) 
        \notag\\
		&\times f(\xi_1, \ldots, \xi_n) 
        = \frac{1}{(2\pi)^n} \int \prod_{l=1}^n d\xi_l d\eta_l  \lim_{\epsilon\to 0} \frac{e^{i\eta_k \epsilon} - 1}{\epsilon} \exp\left[ i \sum_{q=1}^n \eta_q(\hat{A}_q - \xi_q) \right] f(\xi_1, \ldots, \xi_n) \notag\\
		&= \frac{1}{(2\pi)^n} \int \prod_{l=1}^n d\xi_l d\eta_l \, i\eta_k 
			\exp\left[ i \sum_{q=1}^n \eta_q(\hat{A}_q - \xi_q) \right] f(\xi_1, \ldots, \xi_n) \label{EqPreDiracTheorem}\\
		&= \frac{1}{(2\pi)^n} \int \prod_{l=1}^n d\xi_l d\eta_l
			f(\xi_1, \ldots, \xi_n) \left( -\frac{\partial}{\partial \xi_k} \right) \exp\left[ i \sum_{q=1}^n \eta_q(\hat{A}_q - \xi_q) \right]  \notag
\end{align}

Therefore, we have
\BoxedEquation{\label{Weyl_derivative_of_function}
	f'_{\hat{A}_k} (\hat{A}_1, \ldots, \hat{A}_n) &= \frac{1}{(2\pi)^n} \int \prod_{l=1}^n d\xi_l d\eta_l
			\exp\left[ i \sum_{q=1}^n \eta_q(\hat{A}_q - \xi_q) \right] f'_{\xi_k} (\xi_1, \ldots, \xi_n).
}
Equation (\ref{Weyl_main_definition}) defines a one-to-one mapping between a function $f(\xi_1, \ldots, \xi_n)$ and a linear operator $f(\hat{A}_1, \ldots, \hat{A}_n)$. By the same token, Eq.~(\ref{Weyl_derivative_of_function}) establishes a one-to-one mapping between the derivative of a function and the derivative of a linear operator.

Let us find another representation of the derivative from Eq.~(\ref{EqPreDiracTheorem}). To this end, we start from the following operator identity
\BoxedEquation{
	[\hat{A}_1 \hat{A}_2, \hat{B} ] = [\hat{A}_1, \hat{B}] \hat{A}_2 + \hat{A}_1 [\hat{A}_2, \hat{B}],
}
which comes from the relation
\begin{align}\notag
	\hat{A}_1 \hat{A}_2 \hat{B} - \hat{B} \hat{A}_1 \hat{A}_2 
	= \hat{A}_1 \hat{B} \hat{A}_2 - \hat{B} \hat{A}_1 \hat{A}_2
	+ \hat{A}_1 \hat{A}_2 \hat{B} - \hat{A}_1 \hat{B} \hat{A}_2. 
\end{align}
Its generalization can be obtained via the mathematical induction.
Since
\begin{align}
	[ \hat{A}_1 \cdots \hat{A}_{n+1}, \hat{B} ]  &= [ (\hat{A}_1 \cdots \hat{A}_n ) \hat{A}_{n+1}, \hat{B} ] 
	= [ \hat{A}_1 \cdots \hat{A}_n, \hat{B} ] \hat{A}_{n+1} + \hat{A}_1 \cdots \hat{A}_n [\hat{A}_{n+1}, \hat{B}] \notag\\
	&= \sum_{k=1}^n \hat{A}_1 \cdots \hat{A}_{k-1} \commut{ \hat{A}_k, \hat{B} } \hat{A}_{k+1} \cdots \hat{A}_n \hat{A}_{n+1} + \hat{A}_1 \cdots \hat{A}_n [\hat{A}_{n+1}, \hat{B}] \notag\\
	&= \sum_{k=1}^{n+1} \hat{A}_1 \cdots \hat{A}_{k-1} \commut{ \hat{A}_k, \hat{B} } \hat{A}_{k+1} \cdots \hat{A}_{n+1},
\end{align}
we obtain
\BoxedEquation{\label{EqLeibnizCommutatorRule}
	[ \hat{A}_1 \cdots \hat{A}_n, \hat{B} ]  &= 
	\sum_{k=1}^n \hat{A}_1 \cdots \hat{A}_{k-1} \commut{ \hat{A}_k, \hat{B} } \hat{A}_{k+1} \cdots \hat{A}_n.
}

Introducing a simplifying attention $\hat{A} \coloneqq i \sum_{q=1}^n \eta_q(\hat{A}_q - \xi_q)$. Assume there exist operators $\hat{B}_k$ such that $[\hat{A}_l, \hat{B}_k] = \delta_{l,k}$, then $[\hat{A}, \hat{B}_k] = i\eta_k$.
\begin{align}\notag
	[e^{\hat{A}}, \hat{B}_k] = \sum_{n=0}^{\infty} \frac{1}{n!} [ \underbrace{\hat{A} \cdots \hat{A}}_{\mbox{$n$-times}}, \hat{B}_k] 
	= \sum_{n=1}^{\infty} \frac{1}{n!}  i\eta_k n\underbrace{\hat{A} \cdots \hat{A}}_{\mbox{$(n-1)$-times}}
	= i\eta_k \sum_{n=1}^{\infty} \frac{\hat{A}^{n-1}}{(n-1)!} 
	=  i\eta_k e^{\hat{A}}.
\end{align}
\begin{align}\notag
	f'_{\hat{A}_k} (\hat{A}_1, \ldots, \hat{A}_n) = \frac{1}{(2\pi)^n} \int \prod_{l=1}^n d\xi_l d\eta_l \, i\eta_k 
		e^{\hat{A}} f(\xi_1, \ldots, \xi_n)
	\notag\\
    = \frac{1}{(2\pi)^n} \int \prod_{l=1}^n d\xi_l d\eta_l \, [e^{\hat{A}}, \hat{B}_k] f(\xi_1, \ldots, \xi_n).
\end{align}
Hence, we conclude that
\BoxedEquation{\label{EqDiracTheorem}
	f'_{\hat{A}_k} (\hat{A}_1, \ldots, \hat{A}_n) = [ f(\hat{A}_1, \ldots, \hat{A}_n), \hat{B}_k].
}

Now, let us take $\hat{B}$ such that $[\hat{A}_l, \hat{B}] = c_l$ is a constant. Then $\hat{B} = \sum_k c_k \hat{B}_k$ and from Eq.~(\ref{EqDiracTheorem}), we obtain
\begin{align}
	[ f(\hat{A}_1, \ldots, \hat{A}_n), \hat{B}] = \sum_k c_k [ f(\hat{A}_1, \ldots, \hat{A}_n), \hat{B}_k]. 
\end{align}
Therefore, we arrive at
\BoxedEquation{\label{EqWelCommutatorTheorem} 
	[\hat{A}_k, \hat{B}] = \mbox{const} \quad \Longrightarrow \quad
	[ f(\hat{A}_1, \ldots, \hat{A}_n), \hat{B}] = \sum_k [\hat{A}_k, \hat{B}] f'_{\hat{A}_k} (\hat{A}_1, \ldots, \hat{A}_n),
}
which is of fundamental importance.

In the context of Maslov calculus \cite{Maslov1976, Nazaikinskii1992, Nazaikinskii1996, Transtrum2005}, Eq. (\ref{EqWelCommutatorTheorem}) has been extended to a more general case of the  commutators of the type $\commut{ f(\hat{A}_1,\ldots, \hat{A}_n), g(\hat{B}_1,\ldots, \hat{B}_n) }$.

We finally note that it follows from Eqs. (\ref{Weyl_main_definition}) and (\ref{Eq_UfU_minus_equals_fUU}) that
\BoxedEquation{\label{Eq_U_Weyl_U_inv}
	\hat{U} f\left(\hat{A}_1, \ldots, \hat{A}_n\right) \hat{U}^{-1}
	= f\left(\hat{U} \hat{A}_1\hat{U}^{-1}, \ldots, \hat{U} \hat{A}_n\hat{U}^{-1} \right).
}

\newpage\thispagestyle{empty}\blankpage

\chapter{2. Kinematic description}\label{Chapter:2}

Generalizing Schwinger's motto, ``quantum mechanics: symbolism of atomic measurements''~\cite{Schwinger2003}, we assert that any physical model is a symbolic representation of the experimental evidence supporting it (see Fig.~\ref{Fig_House_ODM}). 
\begin{figure*}
	\begin{center}
		\includegraphics[width=0.5\hsize]{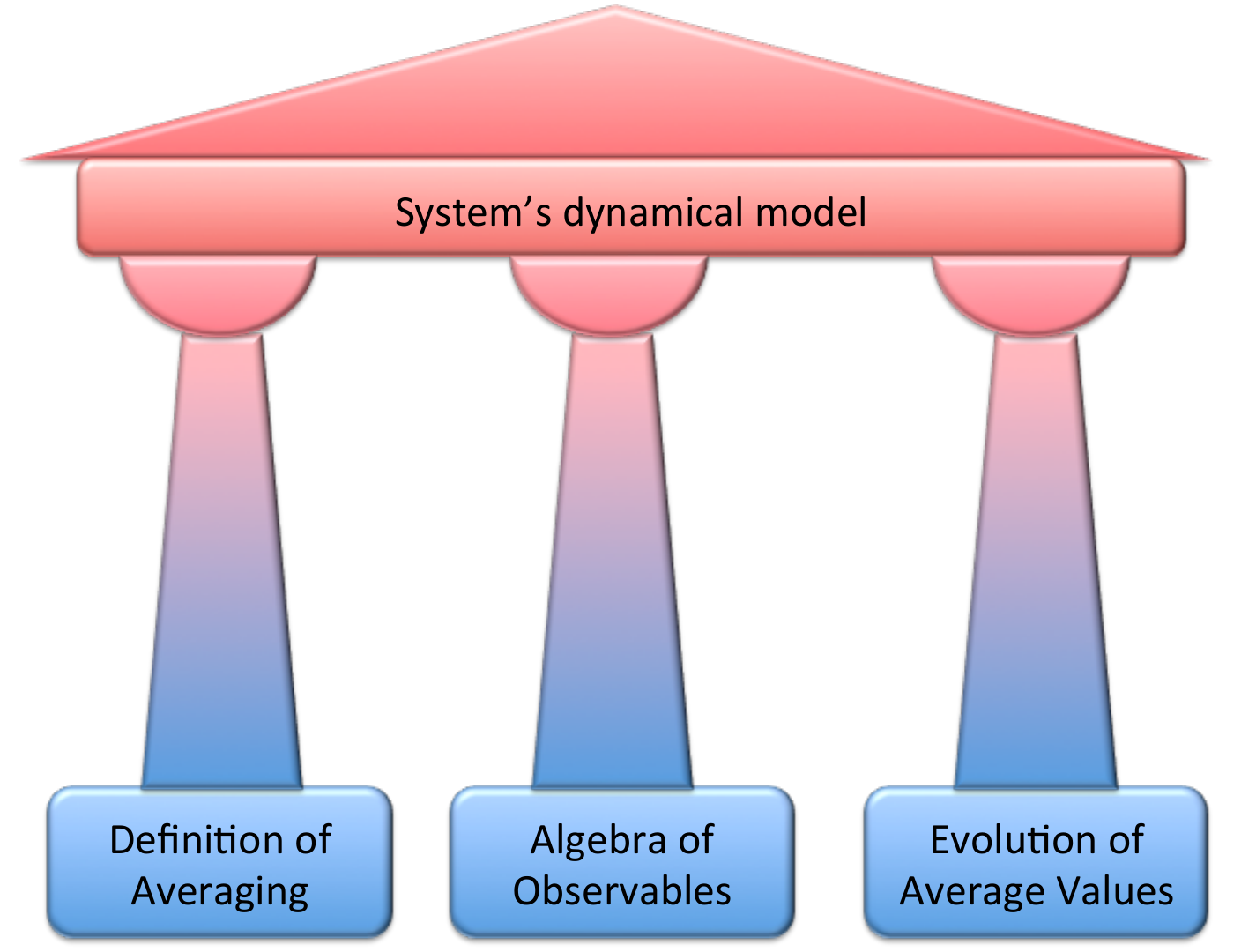}
		\caption{Observation-driven theory generating framework. Key idea: Any physical model (theory) is symbolic representation of experimental data.}\label{Fig_House_ODM}
	\end{center}
\end{figure*}
The mathematical symbolism for this purpose needs to be considered. A formalism specialized in describing a specific class of behavior (e.g., classical mechanics expressed in terms of phase-space trajectories) can be effective, but it may be unsuitable for connecting different classes of phenomena (e.g.,  unifying quantum and classical mechanics). In this case, a general and versatile formalism is preferred. Building a formalism around Hilbert space is a suitable candidate for this role. Hilbert space is well understood, rich in mathematical structure, and convenient for practical computations. 	

Consider the following postulates:
\begin{enumerate}
 \item The states of a system are represented by normalized vectors $| \Psi \rangle$ of a complex Hilbert space, and the observables are given by self-adjoint operators (see Sec.~\ref{Sec:Self_Adj_Op}) acting on this space;
\item The expectation value of a measurable $\hat{A}$ at time $t$ is $\overline{A}(t) =  \langle \Psi (t) | \hat{A} | \Psi(t) \rangle$; 
\item The probability that a measurement of an observable $\hat{A}$ at time $t$ yields $A$ is $\left|\langle A | \Psi(t) \rangle \right|^2$, where $\hat{A} | A \rangle = A | A \rangle$;
\item The state space of a composite system is the tensor product of the subsystems' state spaces. 
\end{enumerate}

 Having accepted these postulates, the rest—state spaces, observables, and the equations of motion—can be deduced directly from observable data. Importantly, these axioms are just the well-known quantum mechanical postulates with the adjective ``quantum'' {\it removed, since $| \Psi \rangle$ is a general state encompassing classical and quantum behavior.} We will demonstrate below that these postulates are sufficient to capture all the features of both quantum {\it and} classical mechanics, as well as the associated hybrid mechanics.

\section{Classical kinematics} \label{Sec_Class_Kinematics}

Without loss of generality, we consider only one-dimensional systems throughout.
Assume a system under investigation, hidden in a black box, that can be subjected only to the 
measurement of the (one-dimensional) coordinate and momentum (see Fig. \ref{Fig_Measurement2Commutators}). 
\begin{figure*}
	\begin{center}
		\includegraphics[width=\textwidth]{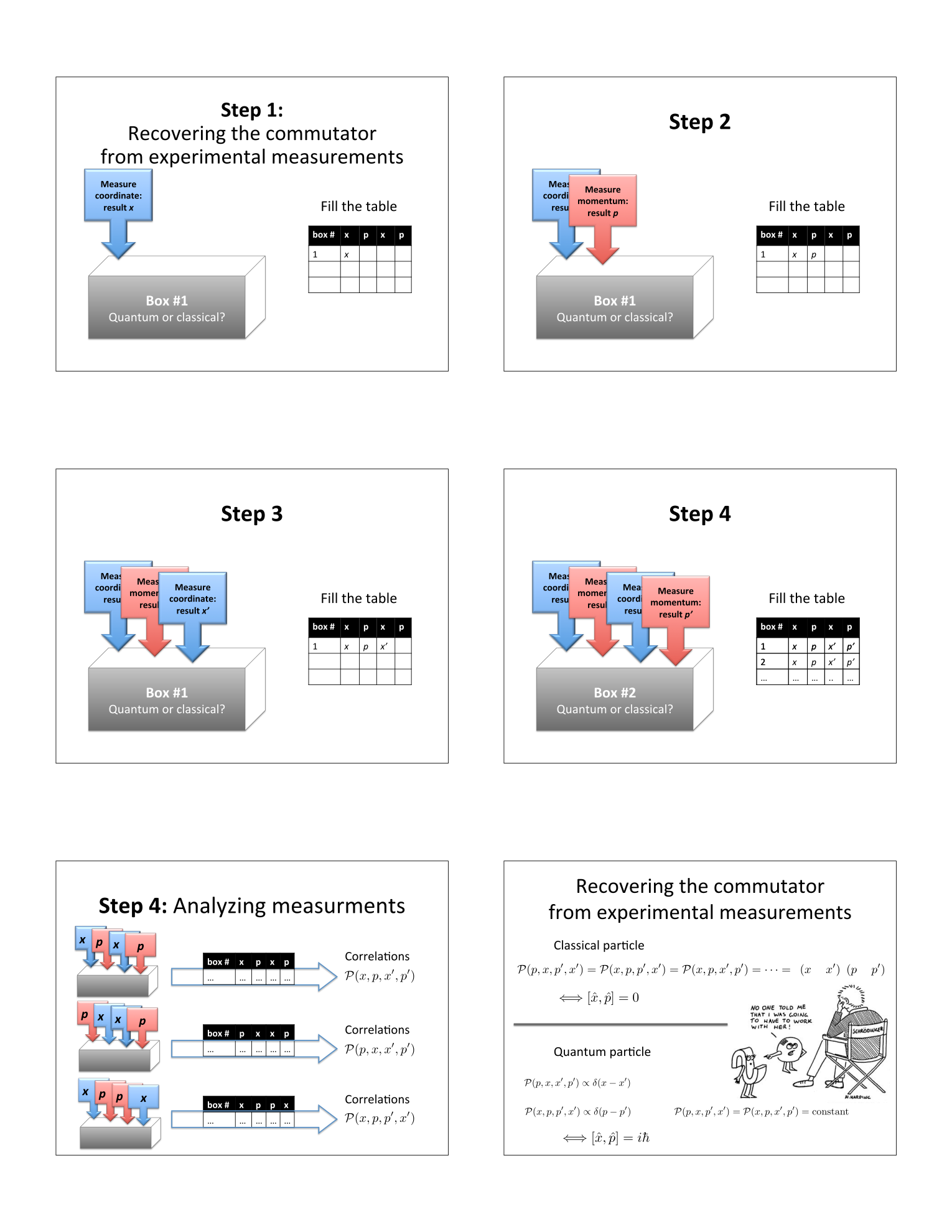}
		\caption{A thought experiment for recovering the commutator from experimental measurements}\label{Fig_Measurement2Commutators}
	\end{center}
\end{figure*}
Let $\mathcal{P}(p, x, p', x')$ denote the probability that the outcome of a sequence of measurements: momentum-coordinate-momentum-coordinate are $p$, $x$, $p'$, $x'$, respectively. By permuting the order of the measurements, one can define the probabilities $\mathcal{P}(x, p, p', x')$, $\mathcal{P}(x, p, x', p')$, etc. All these measurements are assumed to be performed instantaneously and without any delay between measurements\footnote{
	Obviously, there is a finite time delay between any two measurements in a real experiment. The adjective ``instantaneously'' implies that such a delay should be beyond the accuracy of the experiments. However, an experimentalist must assure that the correct order of measurements is preserved.
}.
Having performed an ensemble of the introduced measurements for a free classical particle, we obtain
\BoxedEquation{\label{Combined_ClassicalMeasurments_Results}
	\mathcal{P}(p, x, p', x') = \mathcal{P}(x, p, p', x') = \mathcal{P}(x, p, x', p') 
	 = \cdots = \delta (x-x') \delta(p-p').
}
Equation (\ref{Combined_ClassicalMeasurments_Results}) implies that the order of the measurements does not affect the outcome of the measurements. This property is taken as an operational definition of a classical particle. 

Next, we shall somehow represent the condition~(\ref{Combined_ClassicalMeasurments_Results}) in the Hilbert space. Let $\hat{x}$ and $\hat{p}$ be self-adjoint operators that represent the observables of coordinate and momentum.  In such a language, Eq.~(\ref{Combined_ClassicalMeasurments_Results}) can be derived from the axiom that the operators $\hat{x}$ and $\hat{p}$ have the same eigenfunctions,
\begin{align}\label{Definition_of_classicalXP}
	\hat{x}  | x \, p \rangle = x | x \, p \rangle, \qquad \hat{p}  | x \, p \rangle = p | x \, p \rangle,
\end{align}
or from the equivalent axiom that the classical momentum and coordinate commute, 
\BoxedEquation{\label{Ch2_ClassicalCommutator}
	[\hat{p}, \hat{x}]_{\textrm{class.}} = 0.
}
The equality $\langle x \, p | x' \, p' \rangle =  \delta (x-x') \delta(p-p')$ follows from the fact that $\hat{x}$ and $\hat{p}$ are self-adjoint operators. Using Born's rule, we obtain 
\begin{align}
	\mathcal{P}(p, x, p', x') = \left| \langle x \, p | x' \, p' \rangle  \right|^2 =  \delta (x-x') \delta(p-p'),
\end{align}
where the order of momenta and coordinates is irrelevant. Therefore, we indeed reached Eq.~(\ref{Combined_ClassicalMeasurments_Results}).

Note that Eq.~(\ref{Ch2_ClassicalCommutator}) is a condensate of all classical kinematics, and it is crucial for our subsequent discussion.

\section{Quantum kinematics}\label{Sec_Quantum_Kinematics}

Following the consideration in Sec.~\ref{Sec_Class_Kinematics}, we suppose that the black box contains \emph{a quantum particle}. Then, we observe that
\BoxedEquation{\label{Combined_QuantumlMeasurments_Results}
	\mathcal{P}(p,x,x',p') \propto \delta(x-x'), \quad
	\mathcal{P}(x,p,p',x') \propto \delta(p-p'), 
    \notag\\
	\mathcal{P}(p,x,p',x') = \mathcal{P}(x,p,x',p') = \mbox{const}.
}
Contrary to the classical case [see Eq.~(\ref{Combined_ClassicalMeasurments_Results})], the order of measurements does matter. It is a hallmark of quantum mechanics, and it can be taken as an operational definition of the quantum case. 

Collecting the outcomes of the sequence of two measurements rather than four, we conclude that
\begin{align}\label{P_TwoMeasured_Quantum}
	\mathcal{P}(p,p') = \delta(p-p'), \quad \mathcal{P}(x,x') = \delta(x-x'), 
	\quad \mathcal{P}(x,p) = \mbox{const}.
\end{align} 
To embed these relations into the Hilbert space formulation of quantum mechanics, we introduce self-adjoint operators of the coordinate $\hat{\bs{x}}$ and momentum $\hat{\bs{p}}$ with distinct eigenfunctions, 
\begin{align}
	\hat{\bs{x}} | \bs{x} \rangle = \bs{x} | \bs{x} \rangle, \quad \hat{\bs{p}} | \bs{p} \rangle = \bs{p} | \bs{p} \rangle, \quad 
    \int d\bs{x} | \bs{x} \rangle \langle \bs{x}| 
    = \int d\bs{p} | \bs{p} \rangle \langle \bs{p}| = 1.
\end{align} 
The first two equalities in Eq.~(\ref{P_TwoMeasured_Quantum}) follow from the self-adjointness of $\hat{\bs{x}}$ and $\hat{\bs{p}}$. The third condition from Eq.~\eqref{P_TwoMeasured_Quantum} is equivalent to
\begin{align}
	\left| \langle \bs{x} | \bs{p} \rangle \right|^2 = \mbox{const}. \label{quantum_xp_const}
\end{align}

In what follows, we use refined arguments originally put forth in Ref. \cite{Bondar2011a}. From Eq. (\ref{quantum_xp_const}), one readily concludes that
\begin{equation}\label{x_p_product_general}
    \langle \bs{x} | \bs{p} \rangle = C \exp[i f(\bs{x},\bs{p})],
\end{equation}
where $C$ is a real constant and $f(\bs{x},\bs{p})$ is an analytic real-valued function. The integral equation for the unknown function $f(\bs{x},\bs{p})$ can be obtained as 
\begin{equation}\label{IntegralEq_for_f_general}
	\delta (\bs{x}-\bs{x}') = \!\int\! d\bs{p} \, \langle \bs{x} | \bs{p} \rangle \langle \bs{p} | \bs{x}' \rangle = C^2 \!\!\int\! d\bs{p} \, e^{if(\bs{x},\bs{p}) - if(\bs{x}',\bs{p})},
\end{equation}
In the case of $\bs{x}\neq \bs{x}'$ we have
\begin{equation}
\label{IntegralEq_for_f_special}
	\int\! d\bs{p} \, e^{if(\bs{x},\bs{p}) - if(\bs{x}',\bs{p})} = 0 \qquad (\bs{x}\neq \bs{x}').
\end{equation}
Because $f(\bs{x},\bs{p})$ is analytic, we can expand it 
\begin{equation}\label{ExpansionF}	
	f(\bs{x},\bs{p}) = f_1(\bs{x}) + g_1(\bs{x})\bs{p} + g_2(\bs{x})\bs{p}^2 + g_3(\bs{x})\bs{p}^3 + \ldots
\end{equation}
If only the first two terms are kept, we have
\begin{align}
	e^{i f_1(\bs{x}) - i f_1(\bs{x}')} \int\! d\bs{p} \, e^{ i[g_1(\bs{x}) - g_1(\bs{x}')]\bs{p}} 
	= 2\pi \delta( g_1(\bs{x}) - g_1(\bs{x}')) e^{i f_1(\bs{x}) - i f_1(\bs{x}')},
\end{align}
which satisfies Eq.~(\ref{IntegralEq_for_f_special}). If we truncate the expansion (\ref{ExpansionF}) after the $n$th term (for arbitrary $n \geqslant 2$), we obtain the integral
\begin{equation}\nonumber
	\int\! d\bs{p} \, \exp\left( i\sum_{k=1}^n g_k \bs{p}^k \right).
\end{equation}
This type of integral is known as a diffraction integral (see, e.g., Ref. \cite{Gilmore1981}) and does not satisfy Eq.~(\ref{IntegralEq_for_f_special}) in general. Thus we assume 
$
	f(\bs{x},\bs{p}) = f_1(\bs{x}) + f_2(p) + g_1(\bs{x})\bs{p}
$.
Another integral equation for the function $f(\bs{x},\bs{p})$ reads
\begin{equation}
	\delta(\bs{p}-\bs{p}') = \!\int d\bs{x} \, \langle \bs{p}' | \bs{x} \rangle \langle \bs{x} | \bs{p} \rangle = C^2 \!\!\int d\bs{x} \, e^{if(\bs{x},\bs{p}) - if(\bs{x},\bs{p}')}.
\end{equation}
Instead of Eq.~(\ref{ExpansionF}), we expand $f(\bs{x},\bs{p})$ in the power series of $\bs{x}$ and use the previous argument to conclude that $f(\bs{x},\bs{p}) = c\bs{p}\bs{x} + f_1(\bs{x}) + f_2(\bs{p})$, where $c$ is a real constant. If we replace the left-hand side of Eq.~(\ref{IntegralEq_for_f_general}) by the Fourier representation of the delta function (\ref{DeltFunctProperty2}), we find
\begin{equation}\label{EqualityForCc}
	\int\!\frac{d\bs{p}}{2\pi\hbar} \, e^{i\bs{p}(\bs{x}-\bs{x}')/\hbar} = C^2\!\! \int\! d\bs{p}\, e^{ic\bs{p}(\bs{x}-\bs{x}')}.
\end{equation}
Therefore $C=1/\sqrt{2\pi\hbar}$ and $c=1/\hbar$, so that Eq.~(\ref{x_p_product_general}) reduces to
\begin{align}\label{bra_x_ket_p}
	\langle \bs{x} | \bs{p} \rangle = \frac{1}{\sqrt{2\pi\hbar}} e^{i\bs{x}\bs{p}/\hbar}.
\end{align}
The constant $\hbar$  was introduced in Eq.~(\ref{EqualityForCc}) for dimensional purposes. 

The canonical commutation relation for the position and momentum operators readily follows from Eqs. (\ref{bra_x_ket_p}) and (\ref{EqSummaryCommutatorSection}) from Sec.~\ref{Sec_Realizations_of_HeisenbergAlg},
\BoxedEquation{\label{XP_CommutationalRelation}
	 [ \hat{\bs{x}}, \hat{\bs{p}} ] = i\hbar.
}

\section{Derivation of the Heisenberg uncertainty principle}\label{Sec:WaveFuncHeisUncertaintyPrinc}

Let us establish the fact that an arbitrary vector $\ket{u}$ can be represented as a linear combination of two orthogonal vectors $\ket{z}$ and $\ket{v}$ (see Fig. \ref{Fig_UVZ_representation}),
\begin{align}
	& \ket{u} = \ket{v} \frac{\langle v \ket{u}}{ \langle v \ket{v}} + \ket{z} \Longrightarrow \langle v \ket{u} = \langle v \ket{u} + \langle v \ket{z} \Longrightarrow \langle v \ket{z} = 0, 
    \quad \langle u \ket{u} = \frac{| \langle u \ket{v} |^2 }{ \langle v \ket{v}} + \langle z \ket{z}, \nonumber
\end{align}
\begin{figure*}
	\begin{center}
		\includegraphics[width=0.4\hsize]{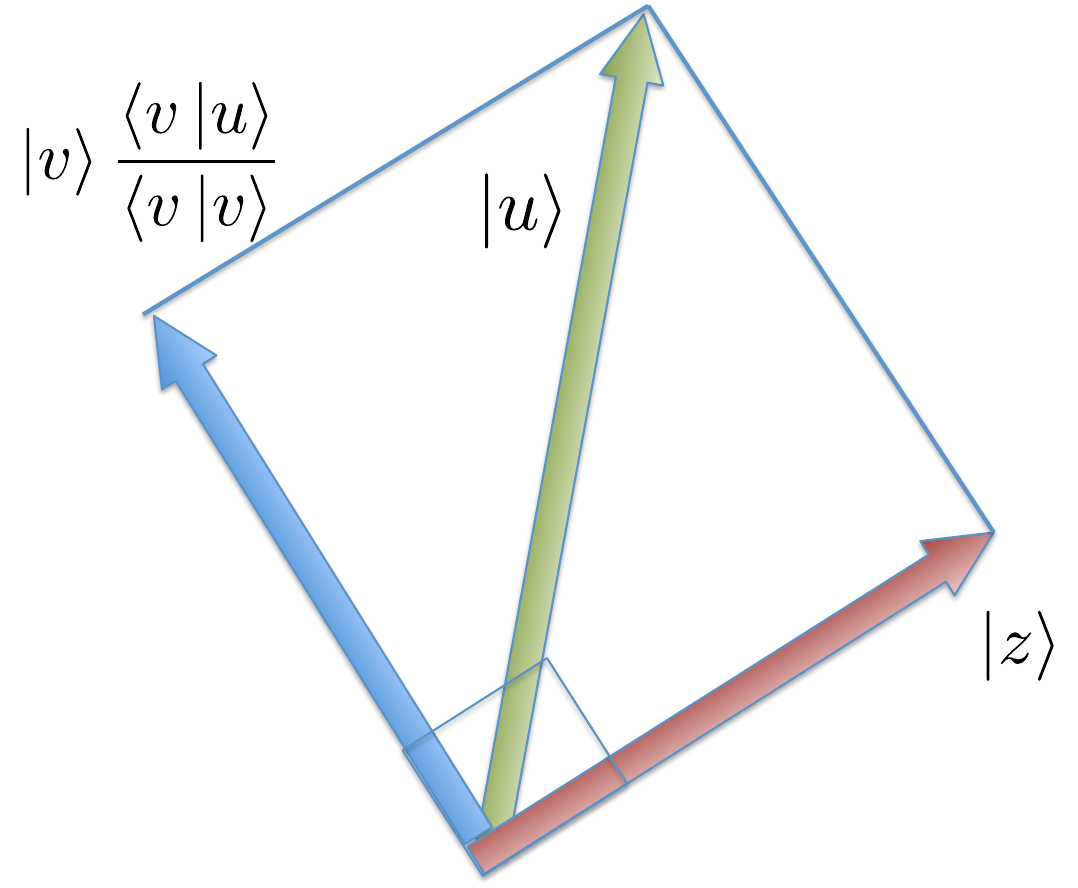}
		\caption{The representation of a vector $\ket{u}$ as a linear combination of two orthogonal vectors $\ket{z}$ and $\ket{v}$.}\label{Fig_UVZ_representation}
	\end{center}
\end{figure*}
Therefore,
\BoxedEquation{
	\langle u \ket{u} \langle v \ket{v} \geqslant | \langle u \ket{v} |^2. \label{CauchySchwarzIneq}
}
Inequality (\ref{CauchySchwarzIneq}) is known as \emph{the Cauchy--Schwarz inequality}.

Let $\ket{\psi}$ denote a normalized wave function. The standard deviation for the coordinate and momentum are defined as
\begin{align}
	& \sigma_{\bs{x}} = \sqrt{ \bra{\psi} \hat{\bs{x}}^2 \ket{\psi} - \bar{\bs{x}}^2 }, \quad 
	\sigma_{\bs{p}} = \sqrt{ \bra{\psi} \hat{\bs{p}}^2 \ket{\psi} - \bar{\bs{p}}^2 }, \nonumber
\end{align}
where the mean values are
\begin{align}
    \bar{\bs{x}} = \bra{\psi} \hat{\bs{x}} \ket{\psi}, 
    \quad \bar{\bs{p}} = \bra{\psi} \hat{\bs{p}} \ket{\psi}. 
    \nonumber
\end{align}
We can also write
\begin{align}
	& \sigma_{\bs{x}}^2 = \bra{\psi} (\hat{\bs{x}}^2 - \bar{\bs{x}}^2) \ket{\psi} = \bra{\psi} (\hat{\bs{x}} - \bar{\bs{x}})^2 \ket{\psi} 
		= \langle f \ket{f}, \quad \sigma_{\bs{p}}^2  = \langle g \ket{g}, 
    \nonumber
\end{align}
where
\begin{align}\label{eq:fg}
	\ket{f} \coloneqq \ket{  (\hat{\bs{x}} - \bar{\bs{x}}) \psi}, \qquad \ket{g} \coloneqq \ket{  (\hat{\bs{p}} - \bar{\bs{p}}) \psi}.
\end{align}
We arrive at the short notation
\BoxedEquation{
	\sigma_{\bs{x}}^2 = \langle f \ket{f}, \qquad \sigma_{\bs{p}}^2  = \langle g \ket{g}.
}

According to the Cauchy-Schwarz inequality~(\ref{CauchySchwarzIneq}), we obtain
\begin{align}
	\sigma_{\bs{x}}^2 \sigma_{\bs{p}}^2 = \langle f \ket{f} \langle g \ket{g} \geqslant | \langle f \ket{g} |^2.
\end{align}
For any complex number $z$
\begin{align}\label{AuxiliaryComplexIneq} 
	|z|^2  = \left(\Re z\right)^2 + \left(\Im z\right)^2 \geqslant \left(\Im z\right)^2 = \left( \frac{z- z^*}{2i} \right)^2,
\end{align}
meaning that
\begin{align}
	 | \langle f \ket{g} |^2 \geqslant \left( \frac{\langle f \ket{g} - \langle g \ket{f}}{2i} \right)^2.
\end{align} 
Recalling the definition~\eqref{eq:fg},
\begin{align}
	\langle f \ket{g} = \bra{\psi}(\hat{\bs{x}} - \bar{\bs{x}})(\hat{\bs{p}} - \bar{\bs{p}}) \ket{\psi} = \bra{\psi} \hat{\bs{x}}\hat{\bs{p}} \ket{\psi} - 
		\bra{\psi} \hat{\bs{x}} \ket{\psi} \bra{\psi} \hat{\bs{p}} \ket{\psi},
\end{align}
and analogously
\begin{align}
		\langle g \ket{f} =   \bra{\psi} \hat{\bs{p}}\hat{\bs{x}} \ket{\psi} - \bra{\psi} \hat{\bs{x}} \ket{\psi} \bra{\psi} \hat{\bs{p}} \ket{\psi} ,
\end{align}
we obtain
\begin{align}
 |\langle f \ket{g} |^2 \geqslant \left( \bra{\psi} \commut{  \hat{\bs{p}}, \hat{\bs{x}} } \ket{\psi}/ (2i) \right)^2.
\end{align} 

Finally, we write
 \BoxedEquation{
 	\sigma_{\bs{x}} \sigma_{\bs{p}} \geqslant |\bra{\psi} \commut{  \hat{\bs{p}}, \hat{\bs{x}} } \ket{\psi}| / 2 = \hbar / 2.
 }

\chapter{3. Formal analysis of evolution}\label{Chapter:3}
\section{Stone's theorem}\label{Sec:Stones_Th}

Let us define the unitary evolution operator $\hat{U}(t_2, t_1)$ with the following properties:
\begin{align}
	\mbox{Group product} &: \quad \hat{U}(t_3, t_2) \hat{U}(t_2, t_1) = \hat{U}(t_3, t_1), \label{EqUPCausality}  \\
	\mbox{The principle of reversibility} &: \quad \hat{U}(t_2, t_1)^{\dagger} = \hat{U}(t_2, t_1)^{-1} = \hat{U}(t_1, t_2),  \label{EqUBackwardDyn} \\
	\mbox{Group identity (``nothing happening'')} &: \quad \hat{U}(t,t) = 1. \label{EqUNoEvol} 
\end{align}
If $t_3 \geq t_2 \geq t_1$, then Eq. (\ref{EqUPCausality}) implies the principle of causality.

Assuming the differentiability with respect to the arguments, one gets from Eq. (\ref{EqUNoEvol})
\begin{align}
	\hat{U}(t + \delta t, t) = 1 + \hat{g}(t) \delta t + O\left( \delta t^2 \right),  \qquad
	\hat{g}(t) = \left. \frac{\partial}{\partial t'} \hat{U}(t', t) \right|_{t'=t}.
\end{align}
Now we verify Eq. (\ref{EqUPCausality})
\begin{align}
	&\hat{U}(t + \delta t, t') \hat{U}(t', t)
	= \left[ \hat{U}(t, t') + \frac{\partial}{\partial t} \hat{U}(t, t') \delta t + O\left( \delta t^2 \right) \right] \hat{U}(t', t) \notag\\
	&\quad
    = 1 + \left[ \frac{\partial}{\partial t} \hat{U}(t, t') \right] \hat{U}(t', t) \delta t + O\left( \delta t^2 \right)
		= \hat{U}(t + \delta t, t) =1 + \hat{g}(t) \delta t + O\left( \delta t^2 \right) \notag\\
	  &\quad\Longrightarrow
     \left[ \frac{\partial}{\partial t} \hat{U}(t, t') \right] \hat{U}(t', t) = \hat{g}(t) \Longrightarrow
	\frac{\partial}{\partial t} \hat{U}(t, t') = \hat{g}(t) \hat{U}(t', t)^{-1} = \hat{g}(t) \hat{U}(t, t').
\end{align}
Since $1 = \hat{U}(t, t')^{\dagger} \hat{U}(t, t')$, we have
\begin{align}
	0 &= \frac{\partial}{\partial t} \left[ \hat{U}(t, t')^{\dagger} \hat{U}(t, t') \right]
		= \frac{\partial}{\partial t} \left[ \hat{U}(t, t')^{\dagger} \right] \hat{U}(t, t')  
		+ \hat{U}(t, t')^{\dagger} \frac{\partial}{\partial t} \hat{U}(t, t') \notag\\
	   &= \hat{U}(t, t')^{\dagger} \hat{g}(t)^{\dagger} \hat{U}(t, t') + \hat{U}(t, t')^{\dagger} \hat{g}(t) \hat{U}(t, t')
	   = \hat{U}(t, t')^{\dagger} \left[ \hat{g}(t)^{\dagger} + \hat{g}(t) \right] \hat{U}(t, t') \notag\\
	   &\Longrightarrow \hat{g}(t)^{\dagger}  = - \hat{g}(t). 
\end{align}

Note that an arbitrary operator $\hat{A}$ can be represented as a sum of a self-adjoint and anti self-adjoint operators,
\BoxedEquation{
	\hat{A} = \underbrace{\frac{\hat{A} + \hat{A}^{\dagger}}{2}}_{\mbox{self-adjoint}} 
			+ \underbrace{\frac{\hat{A} - \hat{A}^{\dagger}}{2}}_{\mbox{anti self-adjoint}}
		= \underbrace{\frac{\hat{A} + \hat{A}^{\dagger}}{2}}_{\mbox{self-adjoint}}
		+ \, i \underbrace{\frac{\hat{A} - \hat{A}^{\dagger}}{2i}}_{\mbox{self-adjoint}}.
} 
Therefore, $\hat{g}(t) = -i\hat{G}(t)$, where $G(t)$ is self-adjoint. This finally leads us to the Stone's theorem~\cite{Reed1980, Araujo2008} (see also Ref.~\cite[Sec. X.12]{Reed1975}) 
\BoxedEquation{\label{EqStoneForPropagator}
	\mbox{Given Eqs.~(\ref{EqUPCausality})--(\ref{EqUNoEvol})} \quad \Longleftrightarrow \quad
	i \frac{\partial}{\partial t} \hat{U}(t, t') = \hat{G}(t) \hat{U}(t, t'), \quad
	\hat{G}(t)^{\dagger} = \hat{G}(t).
}
However, nothing more can be said about the form of $\hat{G}(t)$.
Assuming that $\ket{\psi(t)} = \hat{U}(t, 0) \ket{\psi(0)}$, then Eq. (\ref{EqStoneForPropagator}) yields
\BoxedEquation{\label{EqStoneTheorem}
	\ket{\psi(t)} \mbox{ evolves unitary} \quad \Longleftrightarrow \quad
	 i\frac{\partial}{\partial t} \ket{\psi(t)} = \hat{G}(t) \ket{\psi(t)}, \quad
	 \hat{G}(t)^{\dagger} = \hat{G}(t).
}

Stone's theorem guaranties not only the existence of unique solutions of Schr\"{o}dinger and Liouville equations, but also the conservation of the wave function norms.

\section{Time-ordered exponent}\label{Sec:TimeOrderedExponent}

For an arbitrary linear operator $\hat{\mathcal{G}}(t)$, consider the equation
\begin{align}\label{EqMotionForArbitraryG}
	i \frac{\partial}{\partial t} \hat{\mathcal{U}}(t, t') = \hat{\mathcal{G}}(t) \hat{\mathcal{U}}(t, t'). \qquad 
	\mbox{($\hat{\mathcal{G}}$ need not to be self-adjoint.)}
\end{align}
Let us show that the following series, known as the \emph{time-ordered exponent}, provides a formal solution of Eq. (\ref{EqMotionForArbitraryG}):
\BoxedEquation{\label{EqTExpGeneric}
	\hat{\mathcal{U}}(t, t') =& \hat{\mathcal{T}} \exp\left[ -i \int_{t'}^{t} \hat{\mathcal{G}}(\tau) d\tau \right] \notag\\
		=& 1 - i \int_{t'}^{t} d\tau_1 \hat{\mathcal{G}}(\tau_1) 
		+ (-i)^2 \int_{t'}^{t} d\tau_1 \int_{t'}^{\tau_1} d\tau_2 \hat{\mathcal{G}}(\tau_1) \hat{\mathcal{G}}(\tau_2) \notag\\
		& + (-i)^3 \int_{t'}^{t} d\tau_1 \int_{t'}^{\tau_1} d\tau_2 \int_{t'}^{\tau_2} d\tau_3 \hat{\mathcal{G}}(\tau_1) \hat{\mathcal{G}}(\tau_2) \hat{\mathcal{G}}(\tau_3)
		+ \cdots
}
Equality (\ref{EqTExpGeneric}) takes place for any generator of motion $G(t)$ that does not have to be  self-adjoint. By a direct substitution of Eq. (\ref{EqTExpGeneric}) into Eq. (\ref{EqStoneForPropagator}), we confirm that
\begin{align}
	i \frac{\partial}{\partial t} \hat{\mathcal{T}} \exp\left[ -i \int_{t'}^{t} \hat{\mathcal{G}}(\tau) d\tau \right]  =&
	 	\hat{\mathcal{G}}(t) 
		+ (-i)^1 \int_{t'}^{t} d\tau_2 \hat{\mathcal{G}}(t) \hat{\mathcal{G}}(\tau_2) 
		+ (-i)^2 \int_{t'}^{t} d\tau_2 \int_{t'}^{\tau_2} d\tau_3 \hat{\mathcal{G}}(t) \hat{\mathcal{G}}(\tau_2) \hat{\mathcal{G}}(\tau_3)
		+ \cdots \notag\\
	=& \hat{\mathcal{G}}(t) \left[ 1 -i \int_{t'}^{t} d\tau_1 \hat{\mathcal{G}}(\tau_1) 
		+ (-i)^2 \int_{t'}^{t} d\tau_1 \int_{t'}^{\tau_1} d\tau_2 \hat{\mathcal{G}}(\tau_1) \hat{\mathcal{G}}(\tau_2)
		+ \cdots \right] \notag\\
	=& \hat{\mathcal{G}}(t) \hat{\mathcal{T}} \exp\left[ -i \int_{t'}^{t} \hat{\mathcal{G}}(\tau) d\tau \right].
\end{align}
It is important to remember that $\hat{\mathcal{T}}$ is not a mere decoration. If we drop $\hat{\mathcal{T}}$ in Eq. (\ref{EqTExpGeneric}), we obtain
\begin{align}\label{EqExpWithoutT}
	\exp\left[ -i \int_{t'}^{t} \hat{\mathcal{G}}(\tau) d\tau \right] =& 1 -i \int_{t'}^{t} \hat{\mathcal{G}}(\tau) d\tau 
		+ \frac{1}{2!} \left[ -i \int_{t'}^{t} \hat{\mathcal{G}}(\tau) d\tau \right]^2 + \frac{1}{3!} \left[ -i \int_{t'}^{t} \hat{\mathcal{G}}(\tau) d\tau \right]^3 + \cdots \notag\\
		=&  1 -i \int_{t'}^{t} \hat{\mathcal{G}}(\tau) d\tau
		+ \frac{(-i)^2}{2!}  \int_{t'}^{t} d\tau_1 \int_{t'}^{t} d\tau_2 \hat{\mathcal{G}}(\tau_1) \hat{\mathcal{G}}(\tau_2) \notag\\
		& + \frac{(-i)^3}{3!} \int_{t'}^{t} d\tau_1 \int_{t'}^{t} d\tau_2 \int_{t'}^{t} d\tau_3 \hat{\mathcal{G}}(\tau_1) \hat{\mathcal{G}}(\tau_2) \hat{\mathcal{G}}(\tau_3) + \cdots.
\end{align}
We will return to the comparison between Eqs. (\ref{EqTExpGeneric}) and (\ref{EqExpWithoutT}) in Sec.~\ref{Sec:SplitOpSchrodinger} [in particular, see Eq. (\ref{EqTimeOrderingDropping})].

If the generator of motion is time independent, both the series (\ref{EqTExpGeneric}) and (\ref{EqExpWithoutT}) coincide 
\BoxedEquation{\label{EqTimeIndependentTExp}
	\hat{\mathcal{T}} \exp\left[ -i (t-t') \hat{\mathcal{G}}\right] = \exp\left[ -i (t-t') \hat{\mathcal{G}}\right]. 
}
To show this, we make use of the following integral identity\footnote{\url{https://dlmf.nist.gov/1.4.E31}}
\begin{align}\label{EqSequentialIntegrIdenity}
	\int_{t'}^{t} d\tau_1 \int_{t'}^{\tau_1} d\tau_2 \cdots \int_{t'}^{\tau_{n-1}} d\tau_n \, f(\tau_n) = \frac{1}{(n-1)!} \int_{t'}^{t} d\tau (t-\tau)^{n-1}  f(\tau) .
\end{align}
Eqs. (\ref{EqTExpGeneric}) and (\ref{EqSequentialIntegrIdenity}) yield
\begin{align}
	&\hat{\mathcal{T}} \exp\left[ -i (t-t') \hat{\mathcal{G}}\right]
     \notag\\
	&=
     1 -i\hat{\mathcal{G}}  (t-t')   + (-i \hat{\mathcal{G}})^2  \int_{t'}^{t} d\tau_1 \int_{t'}^{\tau_1} d\tau_2 \, 1 
		 + (-i \hat{\mathcal{G}})^3 \int_{t'}^{t} d\tau_1 \int_{t'}^{\tau_1} d\tau_2 \int_{t'}^{\tau_2} d\tau_3 \, 1 + \cdots \notag\\
	&= 1 -i\hat{\mathcal{G}}  (t-t')  + (-i \hat{\mathcal{G}})^2 \frac{1}{1!} \int_{t'}^{t} d\tau (t-\tau) + (-i \hat{\mathcal{G}})^3 \frac{1}{2!} \int_{t'}^{t} d\tau (t-\tau)^2 + \cdots \notag\\
	&= 1 -i\hat{\mathcal{G}}  (t-t')  + (-i \hat{\mathcal{G}})^2 \frac{(t-t')^2}{1! \cdot 2} + (-i \hat{\mathcal{G}})^3 \frac{(t-t')^3 }{2! \cdot 3} + \cdots = \exp\left[ -i (t-t') \hat{\mathcal{G}}\right].
\end{align}

\section{Time ordering operator $\hat{\mathcal{T}}$: Representing  $\hat{\mathcal{T}} \exp$ as the product of  $\hat{\mathcal{T}}$ and $\exp$}

Let us begin by establishing the differentiation rule
\begin{align}\label{EqLeibnizTimeDiffAB}
	\frac{d}{dt} \left[ \hat{A}(t)\hat{B}(t) \right] = \hat{A}(t) \frac{d \hat{B}(t)}{dt} + \frac{d \hat{A}(t)}{dt}\hat{B}(t),
\end{align}
which is derived as 
\begin{align}
	\frac{d}{dt} \left[ \hat{A}(t)\hat{B}(t) \right] &= \lim_{\delta t \to 0} \frac{1}{\delta t} \left[ \hat{A}(t + \delta t)\hat{B}(t + \delta t) -  \hat{A}(t)\hat{B}(t)  \right] \notag\\
		&=  \lim_{\delta t \to 0} \frac{1}{\delta t} \left[ \left(  \hat{A}(t) + \delta t \frac{d \hat{A}(t)}{dt} + O\left( \delta t^2 \right) \right)\left(  \hat{B}(t) + \delta t \frac{d \hat{B}(t)}{dt} + O\left( \delta t^2 \right) \right) -  \hat{A}(t)\hat{B}(t) \right] \notag\\
		&= \lim_{\delta t \to 0} \frac{1}{\delta t} \left[ \hat{A}(t)\hat{B}(t) + \delta t \hat{A}(t) \frac{d \hat{B}(t)}{dt} + O\left( \delta t^2 \right) + \delta t \frac{d \hat{A}(t)}{dt} \hat{B}(t) + O\left( \delta t^2 \right) -  \hat{A}(t)\hat{B}(t) \right]. \notag
\end{align}
Using the mathematical induction [in the similar fashion as it has been used to prove Eq.~\eqref{EqLeibnizCommutatorRule}], Eq. \eqref{EqLeibnizTimeDiffAB} can be generalized to
\BoxedEquation{\label{EqLeibnizMultipleTimeDiff}
	\frac{d}{dt} \left[ \hat{A}_1(t) \cdots \hat{A}_n(t) \right] = \sum_{k=1}^n \hat{A}_1(t) \cdots \hat{A}_{k-1}(t) \frac{d \hat{A}_k(t)}{dt}  \hat{A}_{k+1}(t) \cdots  \hat{A}_n(t).
}

We introduce \emph{the time ordering operator} $\hat{\mathcal{T}}$,
\BoxedEquation{\label{EqDefTimeOrderingOp}
	\hat{\mathcal{T}} \left[ \hat{\mathcal{G}}(t_1) \hat{\mathcal{G}}(t_2) \right]	=
	\left\{ 
		\hat{\mathcal{G}}(t_1) \hat{\mathcal{G}}(t_2) \mbox{ if } t_1 \geq t_2,
		\atop
		\hat{\mathcal{G}}(t_2) \hat{\mathcal{G}}(t_1) \mbox{ otherwise,} 
	\right.
}
assuming to obey the property of linearity 
\begin{align}\label{EqLinearityTimeOrdering}
	\hat{\mathcal{T}} \left[ \alpha \hat{\mathcal{G}}(t_1) \hat{\mathcal{G}}(t_2) + \beta \hat{\mathcal{G}}(t_3) \hat{\mathcal{G}}(t_4) \right]
	= \alpha \hat{\mathcal{T}} \left[ \hat{\mathcal{G}}(t_1) \hat{\mathcal{G}}(t_2)  \right]
 	+ \beta \hat{\mathcal{T}} \left[ \hat{\mathcal{G}}(t_3) \hat{\mathcal{G}}(t_4) \right].
\end{align}
The definition \eqref{EqDefTimeOrderingOp} is extendable to multiple product of the operators
\begin{align}\label{EqDefTimeOrderingOpMultiple}
	\hat{\mathcal{T}} \left[  \hat{\mathcal{G}}(t_1)\hat{\mathcal{G}}(t_2) \cdots \hat{\mathcal{G}}(t_n) \right]
	= \hat{\mathcal{G}}(t_{k_1}) \hat{\mathcal{G}}(t_{k_2}) \cdots \hat{\mathcal{G}}(t_{k_n}), 
	\qquad t_{k_1} \geq  t_{k_2} \geq \ldots \geq t_{k_n},
\end{align}
where the sequence $(k_1, k_2, \ldots, k_n)$ is a permutation of $(1, 2, \ldots, n)$ such that $t_{k_1} \geq  t_{k_2} \geq \ldots \geq t_{k_n}$. Note that the order of operator multiplication is irrelevant under the sign of the time ordering operator. As we will see in Eq. \eqref{EqMeaingTInTexp} below, the introduced $\hat{\mathcal{T}}$ in fact coincides with the same symbol used in Eq. \eqref{EqTExpGeneric}.
 
We are going to prove the identity
\begin{align}\label{EqTPowerIdentity}
	\hat{\mathcal{T}} \left\{ \left[ \int_{t'}^t \hat{\mathcal{G}}(\tau) d\tau \right]^N \right\}
	= N! \, \int_{t'}^{t} d\tau_1 \int_{t'}^{\tau_1} d\tau_2 \cdots \int_{t'}^{\tau_{N-1}} d\tau_N \,\hat{\mathcal{G}}(\tau_1) \hat{\mathcal{G}}(\tau_2) \cdots \hat{\mathcal{G}}(\tau_N). 
\end{align}
Note that, expect for $N!$, the r.h.s. of Eq. \eqref{EqTPowerIdentity} coincides with the $N$-th term in the definition of the time ordered exponent \eqref{EqTExpGeneric}. Consider the operator 
\begin{align}
	\hat{g}_N(t) = \hat{\mathcal{T}} \left\{ \left[ \int_{t'}^t \hat{\mathcal{G}}(\tau) d\tau \right]^N \right\}.
\end{align}
Due to the linearity \eqref{EqLinearityTimeOrdering},  $\hat{\mathcal{T}}$ commutes with the time-derivative; hence,
\begin{align}
	\frac{d \hat{g}_N(t)}{dt} &= 
	\hat{\mathcal{T}} \left\{ \frac{d}{dt} \left[ \int_{t'}^t \hat{\mathcal{G}}(\tau) d\tau \right]^N \right\} \notag\\
	&= \hat{\mathcal{T}} \left\{  
		\sum_{k=1}^N 
		\left( \int_{t'}^t \hat{\mathcal{G}}(\tau) d\tau \right)^{k-1}
		\left( \frac{d}{dt} \int_{t'}^t \hat{\mathcal{G}}(\tau) d\tau \right)
		\left( \int_{t'}^t \hat{\mathcal{G}}(\tau) d\tau \right)^{N-k}
	\right\}  
    \qquad \mbox{[using Eq. \eqref{EqLeibnizMultipleTimeDiff}]} \notag\\
	&= \hat{\mathcal{T}} \left\{  
		\sum_{k=1}^N 
		\left( \int_{t'}^t \hat{\mathcal{G}}(\tau) d\tau \right)^{k-1}
		\hat{\mathcal{G}}(t)
		\left( \int_{t'}^t \hat{\mathcal{G}}(\tau) d\tau \right)^{N-k}
	\right\}
    \notag\\
	&= \hat{\mathcal{G}}(t) \hat{\mathcal{T}} \left\{  
		\sum_{k=1}^N 
		\left( \int_{t'}^t \hat{\mathcal{G}}(\tau) d\tau \right)^{N-1}
	\right\} \mbox{ [using Eq. \eqref{EqDefTimeOrderingOpMultiple}]} \notag\\
	&= N \hat{\mathcal{G}}(t) \hat{g}_{N-1}(t) \qquad \Longrightarrow \notag\\
\hat{g}_N(t) &= \int_{t'}^t d\tau_1 \frac{d \hat{g}_N(\tau_1)}{d\tau_1}
		= N \int_{t'}^t d\tau_1 \hat{\mathcal{G}}(\tau_1) \hat{g}_{N-1}(\tau_1)
		= N \int_{t'}^t d\tau_1 \hat{\mathcal{G}}(\tau_1)  \int_{t'}^{\tau_1} d\tau_2 \frac{d \hat{g}_{N-1}(\tau_2)}{d \tau_2} \notag\\
		&= N (N-1) \int_{t'}^t d\tau_1 \hat{\mathcal{G}}(\tau_1)  \int_{t'}^{\tau_1} d\tau_2 \hat{\mathcal{G}}(\tau_2) \hat{g}_{N-2}(\tau_2) 
		= \ldots \Longrightarrow \mbox{Eq. \eqref{EqTPowerIdentity}}. \notag
\end{align}
From Eqs. \eqref{EqTExpGeneric}, \eqref{EqExpWithoutT}, and \eqref{EqTPowerIdentity}, it immediately follows that
\BoxedEquation{\label{EqMeaingTInTexp}
	\hat{\mathcal{T}} \exp\left[ -i \int_{t'}^{t} \hat{\mathcal{G}}(\tau) d\tau \right]
	= \hat{\mathcal{T}} \left\{ 
		\exp\left[ -i \int_{t'}^{t} \hat{\mathcal{G}}(\tau) d\tau \right]
	\right\}.
}
As a simple consequence, Eq. \eqref{EqTimeIndependentTExp} readily follows from the just derived representation.

Finally, we note that the time ordering operator $\hat{\mathcal{T}}$ plays an important role in \emph{Wick's theorem} heavily used in many-body and quantum field theory (see, e.g., Ref. \cite{mattuck1992guide}).

\section{The battle of exponents: $\hat{\mathcal{T}} \exp$ vs $\exp$}

Motivated by applications in numerical simulations, we want understand how well (or badly) the ordinary exponent (\ref{EqExpWithoutT}) approximates the time-ordered exponential (\ref{EqTExpGeneric}) in the case of time dependent generator of motion $\hat{\mathcal{G}}(t)$. Throughout this section $\delta t$ is assumed to be small. 

Consider the Taylor expansion of the function 
\begin{align}
	F(T) &= \int_{t}^{T} f(\tau) d\tau, \notag\\
	F(t + \delta t) &= F'(t) \delta t + \frac{1}{2} F''(t) \delta t^2 + O\left(\delta t^3\right)
		= f(t) \delta t +  \frac{1}{2} f'(t) \delta t^2 + O\left(\delta t^3\right) 
        = f(t) \delta t
        \notag\\
		& +  \frac{1}{2} \left[ \frac{f(t + \delta t) - f(t)}{\delta t} + O\left(\delta t\right) \right] \delta t^2 + O\left(\delta t^3\right)
		= \frac{\delta t}{2} \left[ f(t+\delta t) + f(t) \right] + O\left(\delta t^3\right). 
\end{align}
where we have used the forward difference approximation (\ref{EqForwardDiffApprox}). Hence, we have obtained the trapezoidal integration rule,
\begin{align}
	\int_{t}^{t+\delta t} f(\tau) d\tau &= \frac{\delta t}{2} \left[ f(t+\delta t) + f(t) \right] + O\left( \delta t^3 \right) \notag\\
		&= \frac{\delta t}{2} \left[ f\left(t + \frac{\delta t}{2}  + \frac{\delta t}{2} \right) 
			+ f\left(t + \frac{\delta t}{2}  - \frac{\delta t}{2} \right) \right] + O\left( \delta t^3 \right) \notag\\
		&= \frac{\delta t}{2} \left[ f\left(t + \frac{\delta t}{2} \right) + \frac{\delta t}{2} f'\left(t + \frac{\delta t}{2} \right)
			+ f\left(t + \frac{\delta t}{2}\right) - \frac{\delta t}{2} f'\left(t + \frac{\delta t}{2} \right) + O\left( \delta t^2 \right) \right] 
			 \notag\\
		&+ O\left( \delta t^3 \right)=  f\left(t + \frac{\delta t}{2}\right) \delta t + O\left( \delta t^3 \right). \notag
\end{align}
Summarizing this, we have
\BoxedEquation{\label{EqTrapezoidalIntegralRule}
	\int_{t}^{t+\delta t} f(\tau) d\tau &= f\left(t + \frac{\delta t}{2}\right) \delta t + O\left( \delta t^3 \right)
		= \frac{\delta t}{2} \left[ f(t+\delta t) + f(t) \right] + O\left( \delta t^3 \right) \notag\\
		&= f\left(t \right) \delta t + O\left( \delta t^2 \right)
		= f\left(t + \delta t\right) \delta t + O\left( \delta t^2 \right).
}

By employing Eq.~(\ref{EqTrapezoidalIntegralRule}), let us now evaluate the following difference:
\begin{align}
	& \exp\left[- i \int_{t}^{t + \delta t} \hat{\mathcal{G}}(\tau) d\tau \right] - \hat{\mathcal{T}} \exp\left[- i \int_{t}^{t + \delta t} \hat{\mathcal{G}}(\tau) d\tau \right]= 1 - i \int_{t}^{t + \delta t} dt' \hat{\mathcal{G}}(t')
    \notag\\
	&   -\frac{1}{2} \left[ \int_{t}^{t + \delta t} dt' \hat{\mathcal{G}}(t') \right]^2 + O\left( \delta t^3 \right)  -1 + i \int_{t}^{t + \delta t} dt' \hat{\mathcal{G}}(t')  +  \int_{t}^{t + \delta t} dt' \int_{t}^{t'} dt'' \hat{\mathcal{G}}(t')\hat{\mathcal{G}}(t'')  \notag\\
	& = -\frac{1}{2} \left[ \int_{t}^{t + \delta t} dt' \hat{\mathcal{G}}(t') \right]^2 +  \int_{t}^{t + \delta t} dt' \int_{t}^{t'} dt'' \hat{\mathcal{G}}(t')\hat{\mathcal{G}}(t'')  + O\left( \delta t^3 \right) \notag\\
	&= -\frac{1}{2} \left[ \hat{\mathcal{G}}\left(t + \frac{\delta t}{2}\right) \delta t + O\left( \delta t^3 \right) \right]^2 + \delta t  \hat{\mathcal{G}}\left(t + \frac{\delta t}{2}\right) \int_{t}^{t + \delta t / 2} dt'' \hat{\mathcal{G}}(t'')  + O\left( \delta t^3 \right) \notag\\
	&=-\frac{1}{2} \hat{\mathcal{G}}\left(t + \frac{\delta t}{2}\right)^2 \delta t^2 + \frac{\delta t^2}{2}  \left[ \hat{\mathcal{G}}\left(t + \frac{\delta t}{2}\right)^2 + O\left( \delta t^2 \right) \right]  + O\left( \delta t^3 \right). \notag
\end{align}
This gives us
\BoxedEquation{\label{EqTimeOrderingDropping}
	\hat{\mathcal{T}} \exp\left[- i \int_{t}^{t + \delta t} \hat{\mathcal{G}}(\tau) d\tau \right]
	&=  \exp\left[-i \int_{t}^{t + \delta t} \hat{\mathcal{G}}(\tau) d\tau \right] + O\left( \delta t^3 \right) \notag\\
	&=  \exp\left[-i \delta t \hat{\mathcal{G}}\left(t + \frac{\delta t}{2}\right) \right] + O\left( \delta t^3 \right).
}
The resulting accuracy can be further improved [see Eq. \eqref{EqBandraukThirdOrderTimeOrderedExp}]. To accomplish this task, we need to investigate other representations of the time-ordered exponent.

\section{The time-ordered exponent as an exponential product and the Trotter formula}

Consider the partitioning of the time interval $(t, t')$: $t_k = t' + k \delta t$, $k=0,\ldots,N$, $\delta t = (t - t')/N$. Thus, from the semi group (i.e., not allowing to go back in time) property \eqref{EqUPCausality}, we obtain
\begin{align}
	\hat{\mathcal{U}}(t, t') &= \hat{\mathcal{U}}(t_N, t_{N-1})\hat{\mathcal{U}}(t_{N-1}, t_{N-2}) \cdots \hat{\mathcal{U}}(t_1, t_0)\notag
        \\
			& =  \hat{\mathcal{T}} \exp\left[- i \int_{t_{N-1}}^{t_{N-1} + \delta t} \hat{\mathcal{G}}(\tau) d\tau \right] \cdots
			\hat{\mathcal{T}} \exp\left[- i \int_{t_{0}}^{t_{0} + \delta t} \hat{\mathcal{G}}(\tau) d\tau \right] \notag\\
			&= \left[ e^{-i \delta t \hat{\mathcal{G}}\left(t_{N-1} + \frac{\delta t}{2} \right)} + O\left( \delta t^3 \right) \right] \cdots \left[ e^{-i \delta t \hat{\mathcal{G}}\left(t_{0} + \frac{\delta t}{2} \right)} + O\left( \delta t^3 \right) \right]
			 \qquad \mbox{[using Eq. \eqref{EqTimeOrderingDropping}]}  \notag\\
			& = \left[ e^{-i \delta t \hat{\mathcal{G}}\left(t_{N} \right)} + O\left( \delta t^2 \right) \right] 
			\cdots \left[ e^{-i \delta t \hat{\mathcal{G}}\left(t_{1} \right)} + O\left( \delta t^2 \right) \right] \qquad \mbox{[using Eq. \eqref{EqTrapezoidalIntegralRule}]}  \notag\\
			&= \left[ e^{-i \delta t \hat{\mathcal{G}}\left(t_{N-1} \right)} + O\left( \delta t^2 \right) \right] 
			\cdots \left[ e^{-i \delta t \hat{\mathcal{G}}\left(t_{0} \right)} + O\left( \delta t^2 \right) \right]. \qquad \mbox{[using Eq. \eqref{EqTrapezoidalIntegralRule}]} 
\end{align}
Since $\delta t \to 0$ as $N \to \infty$, we finally get
\BoxedEquation{\label{EqTimeExpAsProdExp}
	\hat{\mathcal{U}}(t, t') &= \lim_{N \to \infty} \left( e^{-i \delta t \hat{\mathcal{G}}\left(t_{N-1} + \frac{\delta t}{2} \right)} \cdots e^{-i \delta t \hat{\mathcal{G}}\left(t_{0} + \frac{\delta t}{2} \right)}
 \right) \notag\\
 	&= \lim_{N \to \infty} \left( e^{-i \delta t \hat{\mathcal{G}}\left(t_{N-1} \right)} \cdots e^{-i \delta t \hat{\mathcal{G}}\left(t_{0} \right)} \right) 
	= \lim_{N \to \infty}  \left( e^{-i \delta t \hat{\mathcal{G}}\left(t_{N} \right)} \cdots e^{-i \delta t \hat{\mathcal{G}}\left(t_{1} \right)} \right).
}
This identity implies that not only the time-ordered exponent can be sliced into infinitely many ordinary operatorial exponents, but also that these slices must be \emph{ordered in time} from the earlier time on the right to the latest on the left. Thus, the name \emph{time-ordered} exponent. 

An analogue of Eq.~\eqref{EqTimeExpAsProdExp} exists for arbitrary exponents, and it is known as the Trotter product formula, playing an important role in many applications. It reads as
\BoxedEquation{\label{TrotterProductEq}
	e^{\hat{A} + \hat{B}} = \lim_{N\to\infty}\left( e^{\hat{A}/N} e^{\hat{B}/N} \right)^N
		= \lim_{N\to\infty}\left( e^{\hat{B}/N} e^{\hat{A}/N} \right)^N.
}
Equation~(\ref{TrotterProductEq}) follows from simple relations by taking the limit $\delta \to 0$,
\begin{align}\label{SecondOrderTrotterApproxEq}
	 & 1 + (\hat{A} + \hat{B}) \delta + O\left( \delta^2 \right) = \left[ 1 + \hat{A}\delta + O\left( \delta^2 \right) \right]\left[ 1 + \hat{B}\delta + O\left( \delta^2 \right) \right] 
	 \Longrightarrow e^{\hat{A}\delta + \hat{B}\delta} = e^{\hat{A}\delta}e^{\hat{B}\delta} + O\left(\delta^2\right), \notag\\
	 &1 + (\hat{A} + \hat{B}) \delta + O\left( \delta^2 \right) = \left[ 1 + \hat{B}\delta + O\left( \delta^2 \right) \right]\left[ 1 + \hat{A}\delta + O\left( \delta^2 \right) \right] 
	 \Longrightarrow e^{\hat{A}\delta + \hat{B}\delta} = e^{\hat{B}\delta}e^{\hat{A}\delta} + O\left(\delta^2\right).
\end{align}

\section{The Suzuki representation}

We begin by rewriting the Taylor expansion as the exponentiation of the derivative
\begin{align}
	e^{a\frac{\partial}{\partial t}} f(t) 
	= \sum_{n=0}^{\infty} \frac{a^n}{n!} \frac{\partial^n}{\partial t^n} f(t) 
	= \underbrace{\sum_{n=0}^{\infty} \frac{([t+a] - t)^n}{n!} \frac{\partial^n}{\partial t^n} f(t)}_{\mbox{Taylor series for $f(t + a)$}\atop \mbox{about a point $t$}},
    \notag
\end{align}
or simply
\BoxedEquation{\label{DisplacementOp}
	e^{a\frac{\partial}{\partial t}} f(t) =  f(t + a).
}
We see that $e^{a\frac{\partial}{\partial t}}$ can be called the displacement operator. 

Let $\overleftarrow{\frac{\partial}{\partial t}}$ denote the forward time derivative. The arrow indicates the direction of action; in this case, the derivative acts only to the left. Note that directional derivatives play a prominent role in Sec. \ref{Sec:PhaseSpaceQM}. Therefore,
\begin{align}\label{EqDefTimeForwardDerivative}
	\hat{F}(t) e^{\Delta t \overleftarrow{\frac{\partial}{\partial t}}} \hat{G}(t) 
	=  \left[ e^{\Delta t {\frac{\partial}{\partial t}}} \hat{F}(t) \right] \hat{G}(t)
	= \hat{F}(t + \Delta t) \hat{G}(t),
\end{align}
for any $\hat{G}$ and $\hat{F}$. Now we evaluate using Eq. \eqref{TrotterProductEq}
\begin{align}
	e^{\Delta t \left(-i\hat{\mathcal{G}}(t) + \overleftarrow{\frac{\partial}{\partial t}} \right)}
	&= \lim_{N \to \infty} \left( 
		e^{-i \delta t \hat{\mathcal{G}}(t)} 
		e^{\delta t \overleftarrow{\frac{\partial}{\partial t}}} 
	\right)^N \qquad
	[\delta t= \Delta t / N] \notag\\
	&= \lim_{N \to \infty}  \left(\prod_{k=1}^{N-1} e^{-i \delta t \hat{\mathcal{G}}(t)} 
		e^{\delta t \overleftarrow{\frac{\partial}{\partial t}}} 
	\right) e^{-i \delta t \hat{\mathcal{G}}(t)} 
		e^{\delta t \overleftarrow{\frac{\partial}{\partial t}}} 
	= \lim_{N \to \infty}  \left(\prod_{k=1}^{N-1} e^{-i \delta t \hat{\mathcal{G}}(t + \delta t)} 
		e^{\delta t \overleftarrow{\frac{\partial}{\partial t}}} 
	\right) e^{-i \delta t \hat{\mathcal{G}}(t + \delta t)} \notag\\
	&= \lim_{N \to \infty}  \left(\prod_{k=1}^{N-2} e^{-i \delta t \hat{\mathcal{G}}(t + \delta t)} 
		e^{\delta t \overleftarrow{\frac{\partial}{\partial t}}} 
	\right) e^{-i \delta t \hat{\mathcal{G}}(t + \delta t)} 
		e^{\delta t \overleftarrow{\frac{\partial}{\partial t}}}
		e^{-i \delta t \hat{\mathcal{G}}(t + \delta t)} \notag\\
	&= \lim_{N \to \infty}  \left(\prod_{k=1}^{N-2} e^{-i \delta t \hat{\mathcal{G}}(t + 2\delta t)} 
		e^{\delta t \overleftarrow{\frac{\partial}{\partial t}}} 
	\right) e^{-i \delta t \hat{\mathcal{G}}(t + 2\delta t)} 
		e^{-i \delta t \hat{\mathcal{G}}(t + \delta t)} \notag\\
	& = \cdots = \lim_{N \to \infty} \left( 
		e^{-i \delta t \hat{\mathcal{G}}(t + N\delta t)} 
		\cdots
		e^{-i \delta t \hat{\mathcal{G}}(t + 2\delta t)} 
		e^{-i \delta t \hat{\mathcal{G}}(t + 1\delta t)} 
	\right) \notag\\
	& = \hat{\mathcal{U}}(t + \Delta t, t).
	\quad \mbox{[using Eq. \eqref{EqTimeExpAsProdExp}]}
\end{align}
In summary, we obtained the Suzuki representation of the time-ordered exponent \cite{suzuki1993general},
\BoxedEquation{\label{EqSuzukiTimeOrderedExp}
	\hat{\mathcal{T}} \exp\left[- i \int_{t}^{t + \Delta t} \hat{\mathcal{G}}(\tau) d\tau \right] = \exp\left[\Delta t \left(-i\hat{\mathcal{G}}(t) + \overleftarrow{\frac{\partial}{\partial t}} \right)\right],
}
playing an important role to get high accuracy numerical methods (see, e.g., Ref. \cite{chin2002gradient}). Equation \eqref{EqSuzukiTimeOrderedExp} can be interpreted as a reduction of the time dependent  (r.h.s.) to time-independent case (l.h.s.) with the modified generator of motion. Alternatively, we can say that due to the non-commutativity of $\overleftarrow{\frac{\partial}{\partial t}}$  and $\hat{\mathcal{G}}$, insertion of $\overleftarrow{\frac{\partial}{\partial t}}$ scrambles an ordinary exponent into a weird time-ordered exponential. A conceptually similar transition is done when the numerical methods for classical dynamics are developed in Sec. \ref{Sec:SymplecticPropagator}.


\section{The battle of exponents: $\hat{\mathcal{T}} \exp$ vs $\exp$ II}

Here, let us provide an improved version of Eq. \eqref{EqTimeOrderingDropping} with a higher-order  error estimate. We begin with the approximation \cite{bandrauk1992higher, bandrauk1993exponential}
\begin{align}\label{EqBandraukForthOrder}
	e^{\delta t (\hat{A} + \hat{B})} = e^{\delta t \frac{s}2 \hat{A}} e^{\delta t s \hat{B}} 
	e^{\frac{1-s}2 \delta t \hat{A}} e^{\delta t (1-2s) \hat{B}} e^{\frac{1-s}2 \delta t \hat{A}} 
	e^{\delta t s \hat{B}} e^{\delta t \frac{s}2 \hat{A}} 
	+ O\left( \delta t^4 \right),
\end{align}
where 
\begin{align}\label{EqValSConstBandrauk}
	s = 2^{1/3} / 3 + 2^{2/3} / 6 + 2/ 3 \approx 1.35
\end{align}
is the real solution of the algebraic equation $2s^3 + (1-2s)^3 = 0$.  Relation \eqref{EqBandraukForthOrder} is established by recursively merging the exponents on the l.h.s. of Eq. \eqref{EqBandraukForthOrder} into one by recursively applying Eq. \eqref{EqBCH}. This derivation is cumbersome and is best performed using symbolic computation packages capable of handling non-commuting quantities. Using Eqs.~\eqref{EqDefTimeForwardDerivative}, \eqref{EqSuzukiTimeOrderedExp}, and \eqref{EqBandraukForthOrder}, we obtain
\begin{align}
	\hat{\mathcal{U}}(t + \delta t, t) =& 
		e^{-i\delta t \frac{s}2 \hat{\mathcal{G}}(t)} 
		e^{\delta t s \overleftarrow{\frac{\partial}{\partial t}}} 
		e^{-i\frac{1-s}2 \delta t \hat{\mathcal{G}}(t)} 
		e^{\delta t (1-2s) \overleftarrow{\frac{\partial}{\partial t}}} 
		e^{-i\frac{1-s}2 \delta t \hat{\mathcal{G}}(t)} 
		e^{\delta t s \overleftarrow{\frac{\partial}{\partial t}}} 
		e^{-i\delta t \frac{s}2 \hat{\mathcal{G}}(t)} 
	+ O\left( \delta t^4 \right) \notag\\
		=& e^{-i\delta t \frac{s}2 \hat{\mathcal{G}}(t + s\delta t)} 
		e^{\delta t s \overleftarrow{\frac{\partial}{\partial t}}} 
		e^{-i\frac{1-s}2 \delta t \hat{\mathcal{G}}(t + s\delta t)} 
		e^{\delta t (1-2s) \overleftarrow{\frac{\partial}{\partial t}}} 
		e^{-i\frac{1-s}2 \delta t \hat{\mathcal{G}}(t + s\delta t)} 
		e^{-i\delta t \frac{s}2 \hat{\mathcal{G}}(t)} 
	+ O\left( \delta t^4 \right) \notag\\
		=& e^{-i\delta t \frac{s}2 \hat{\mathcal{G}}(t + (1-s)\delta t)} 
		e^{\delta t s \overleftarrow{\frac{\partial}{\partial t}}} 
		e^{-i\frac{1-s}2 \delta t \hat{\mathcal{G}}(t + (1-s)\delta t)} 
		e^{-i\frac{1-s}2 \delta t \hat{\mathcal{G}}(t + s\delta t)} 
		e^{-i\delta t \frac{s}2 \hat{\mathcal{G}}(t)} 
	+ O\left( \delta t^4 \right). \notag
\end{align}
\BoxedEquation{\label{EqBandraukThirdOrderTimeOrderedExp}
	\hat{\mathcal{T}} \exp\left[- i \int_{t}^{t + \delta t} \hat{\mathcal{G}}(\tau) d\tau \right] 
	=& e^{-i\delta t \frac{s}2 \hat{\mathcal{G}}(t + \delta t)} 
		e^{-i\frac{1-s}2 \delta t \hat{\mathcal{G}}(t + (1-s)\delta t)} 
		e^{-i\frac{1-s}2 \delta t \hat{\mathcal{G}}(t + s\delta t)} 
		e^{-i\delta t \frac{s}2 \hat{\mathcal{G}}(t)} 
	+ O\left( \delta t^4 \right). 
}
This equation is a better approximation for the time-ordered exponent than Eq. \eqref{EqTimeOrderingDropping}. If you account for the value of $s$ [Eq. \eqref{EqValSConstBandrauk}], two central exponents in Eq. \eqref{EqBandraukThirdOrderTimeOrderedExp} have positive coefficients. This may not be acceptable for some applications (e.g., the Monte-Carlo simulations in classical mechanics). In this case, other available approximations should be employed. For further discussions see, e.g., Refs. \cite{chin2002gradient, blanes2016concise}.

\section{The integral equations for the time-ordered exponent}

We are going to prove the identities
\BoxedEquation{
	\hat{\mathcal{U}}(t, t') &= \hat{\mathcal{T}} \exp\left[ -i \int_{t'}^{t} \left[ \hat{\mathcal{G}}_0(\tau) + \hat{\mathcal{V}}(\tau) \right]  d\tau \right], 
	\quad 
	\hat{\mathcal{U}}_0(t, t') = \hat{\mathcal{T}} \exp\left[ -i \int_{t'}^{t} \hat{\mathcal{G}}_0(\tau) d\tau \right], \label{EqUU0Deffs} \\
	\hat{\mathcal{U}}(t, t') &= \hat{\mathcal{U}}_0(t, t') 
		- i \int_{t'}^{t} d\tau_1 \, \hat{\mathcal{U}}(t, \tau_1)  \hat{\mathcal{V}}(\tau_1) \hat{\mathcal{U}}_0(\tau_1, t') \label{EqDysonEqPostForm} \\
	&= \hat{\mathcal{U}}_0(t, t') 
		- i \int_{t'}^{t} d\tau_1 \, \hat{\mathcal{U}}_0(t, \tau_1)  \hat{\mathcal{V}}(\tau_1) \hat{\mathcal{U}}(\tau_1, t'),  \label{EqDysonEqPriorForm} \\
	& \mbox{for arbitrary $\hat{\mathcal{G}}_0(\tau)$ and  $\hat{\mathcal{V}}(\tau)$.} \notag
}
According to Eqs. (\ref{EqMotionForArbitraryG}), (\ref{EqTExpGeneric}), and (\ref{EqUU0Deffs}), we have
\begin{align}\label{EqUU0EvolutionEq}
	i \frac{\partial}{\partial t} \hat{\mathcal{U}}(t, t') = \left[ \hat{\mathcal{G}}_0(t) + \hat{\mathcal{V}}(t) \right] \hat{\mathcal{U}}(t, t'),
	\qquad
	i \frac{\partial}{\partial t} \hat{\mathcal{U}}_0(t, t') = \hat{\mathcal{G}}_0(t) \hat{\mathcal{U}}_0(t, t').
\end{align}
Recall Leibniz's theorem for differentiation of integrals\footnote{\url{https://dlmf.nist.gov/1.5.E22}}
\begin{align}\label{EqLeibnizDiffInt}
	\frac{d}{dt} \int_{\alpha(t)}^{\beta(t)} f(t, \tau_1) d\tau_1 = f(t, \beta(t)) \beta'(t) - f(t, \alpha(t)) \alpha'(t) +  \int_{\alpha(t)}^{\beta(t)} \frac{\partial}{\partial t} f(t, \tau_1) d\tau_1. 
\end{align}
Differentiating both sides of Eq. (\ref{EqDysonEqPostForm}) with respect to $t$ and using Eq. (\ref{EqUU0EvolutionEq}), we obtain
\begin{align}
	&\left[ \hat{\mathcal{G}}_0(t) + \hat{\mathcal{V}}(t) \right] \hat{\mathcal{U}}(t, t') 
	= \hat{\mathcal{G}}_0(t) \hat{\mathcal{U}}_0(t, t') 
		- i^2 \frac{\partial}{\partial t} \int_{t'}^{t} d\tau_1 \, \hat{\mathcal{U}}(t, \tau_1)  \hat{\mathcal{V}}(\tau_1) \hat{\mathcal{U}}_0(\tau_1, t') \notag\\
	&~~= \hat{\mathcal{G}}_0(t) \hat{\mathcal{U}}_0(t, t') + \hat{\mathcal{U}}(t, t)  \hat{\mathcal{V}}(t) \hat{\mathcal{U}}_0(t, t')
		+ \int_{t'}^{t} d\tau_1 \, \frac{\partial \hat{\mathcal{U}}(t, \tau_1)}{\partial t}  \hat{\mathcal{V}}(\tau_1) \hat{\mathcal{U}}_0(\tau_1, t')
		\quad \mbox{[using Eq. (\ref{EqLeibnizDiffInt})]} \notag\\
	&~~= \hat{\mathcal{G}}_0(t) \hat{\mathcal{U}}_0(t, t') + \hat{\mathcal{V}}(t) \hat{\mathcal{U}}_0(t, t')
		-i\left[ \hat{\mathcal{G}}_0(t) + \hat{\mathcal{V}}(t) \right] 
			\int_{t'}^{t} d\tau_1 \, \hat{\mathcal{U}}(t, \tau_1) \hat{\mathcal{V}}(\tau_1) \hat{\mathcal{U}}_0(\tau_1, t') \notag\\
	&~~= \left[ \hat{\mathcal{G}}_0(t) + \hat{\mathcal{V}}(t) \right] \hat{\mathcal{U}}_0(t, t')
		+ \left[ \hat{\mathcal{G}}_0(t) + \hat{\mathcal{V}}(t) \right] \left[  \hat{\mathcal{U}}(t, t') -  \hat{\mathcal{U}}_0(t, t') \right]
		\qquad \mbox{[using Eq. (\ref{EqDysonEqPostForm})]} \notag\\
	&~~= \left[ \hat{\mathcal{G}}_0(t) + \hat{\mathcal{V}}(t) \right] \hat{\mathcal{U}}(t, t'). \notag
\end{align}
The relation (\ref{EqDysonEqPriorForm}) can be verified in the same way.

Note that the integral relations (\ref{EqDysonEqPostForm}) and  (\ref{EqDysonEqPriorForm}) are known as the \emph{post} and \emph{prior} forms because the `full' propagator $\hat{\mathcal{U}}$ follows the `perturbation' $\hat{\mathcal{V}}$ in the former case, whereas $\hat{\mathcal{V}}$ precedes $\hat{\mathcal{U}}$ in the latter.

\section{Time-dependent perturbation theory}\label{SecGeneralTimeDependentPertTheory}

Let us recursively substitutive Eq. (\ref{EqDysonEqPostForm}) into itself 
\begin{align}
	{\color{magenta} \hat{\mathcal{U}}(t, t')} &= \hat{\mathcal{U}}_0(t, t') 
		- i \int_{t'}^{t} d\tau_1 \, {\color{magenta} \hat{\mathcal{U}}(t, \tau_1)} \hat{\mathcal{V}}(\tau_1) \hat{\mathcal{U}}_0(\tau_1, t') \notag\\
		&= \hat{\mathcal{U}}_0(t, t') 
		- i \int_{t'}^{t} d\tau_1 \left[ \hat{\mathcal{U}}_0(t, \tau_1) 
		- i \int_{\tau_1}^{t} d\tau_2 \, {\color{magenta} \hat{\mathcal{U}}(t, \tau_2)}  \hat{\mathcal{V}}(\tau_2) \hat{\mathcal{U}}_0(\tau_2, \tau_1) \right] 
		\hat{\mathcal{V}}(\tau_1) \hat{\mathcal{U}}_0(\tau_1, t') \notag\\
		&= \hat{\mathcal{U}}_0(t, t') 
		- i \int_{t'}^{t} d\tau_1 \, \hat{\mathcal{U}}_0(t, \tau_1)  \hat{\mathcal{V}}(\tau_1) \hat{\mathcal{U}}_0(\tau_1, t') \notag\\
	    &~~\qquad\qquad + (-i)^2 \int_{t'}^{t} d\tau_1 \int_{\tau_1}^{t} d\tau_2 \, {\color{magenta} \hat{\mathcal{U}}(t, \tau_2)}  \hat{\mathcal{V}}(\tau_2) \hat{\mathcal{U}}_0(\tau_2, \tau_1) \hat{\mathcal{V}}(\tau_1) \hat{\mathcal{U}}_0(\tau_1, t'). \notag
\end{align}
Continuing this recursive substitution \emph{ad infinitum}, we arrive at the perturbation theory series in the post form,
\BoxedEquation{\label{EqTimeDepPertTheoryPost}
	\hat{\mathcal{U}}(t, t') &= \hat{\mathcal{U}}_0(t, t')  
		-i \int_{t'}^{t} d\tau_1 \, \hat{\mathcal{U}}_0(t, \tau_1)  \hat{\mathcal{V}}(\tau_1) \hat{\mathcal{U}}_0(\tau_1, t') \notag\\
	& + (-i)^2 \int_{t'}^{t} d\tau_1 \int_{\tau_1}^{t} d\tau_2 \, \hat{\mathcal{U}}_0 (t, \tau_2)  \hat{\mathcal{V}}(\tau_2) \hat{\mathcal{U}}_0(\tau_2, \tau_1) \hat{\mathcal{V}}(\tau_1) \hat{\mathcal{U}}_0(\tau_1, t') \notag\\
	& + (-i)^3 \int_{t'}^{t} d\tau_1 \int_{\tau_1}^{t} d\tau_2 \int_{\tau_2}^{t} d\tau_3  \, 
	\hat{\mathcal{U}}_0 (t, \tau_3)  \hat{\mathcal{V}}(\tau_3) \hat{\mathcal{U}}_0(\tau_3, \tau_2) \hat{\mathcal{V}}(\tau_2) \hat{\mathcal{U}}_0(\tau_2, \tau_1) \hat{\mathcal{V}}(\tau_1) \hat{\mathcal{U}}_0(\tau_1, t') + \cdots
}
Applying the same recursive procedure onto Eq. (\ref{EqDysonEqPriorForm}), we get the perturbation theory expansion in the prior form,
 \BoxedEquation{\label{EqTimeDepPertTheoryPrior}
	\hat{\mathcal{U}}(t, t') &= \hat{\mathcal{U}}_0(t, t')  
		-i \int_{t'}^{t} d\tau_1 \, \hat{\mathcal{U}}_0(t, \tau_1)  \hat{\mathcal{V}}(\tau_1) \hat{\mathcal{U}}_0(\tau_1, t') \notag\\
	& + (-i)^2 \int_{t'}^{t} d\tau_1 \int_{t'}^{\tau_1} d\tau_2 \, \hat{\mathcal{U}}_0 (t, \tau_1)  \hat{\mathcal{V}}(\tau_1) \hat{\mathcal{U}}_0(\tau_1, \tau_2) \hat{\mathcal{V}}(\tau_2) \hat{\mathcal{U}}_0(\tau_2, t') \notag\\
	& + (-i)^3 \int_{t'}^{t} d\tau_1 \int_{t'}^{\tau_1} d\tau_2 \int_{t'}^{\tau_2} d\tau_3  \, 
	\hat{\mathcal{U}}_0 (t, \tau_1)  \hat{\mathcal{V}}(\tau_1) \hat{\mathcal{U}}_0(\tau_1, \tau_2) \hat{\mathcal{V}}(\tau_2) \hat{\mathcal{U}}_0(\tau_2, \tau_3) \hat{\mathcal{V}}(\tau_3) \hat{\mathcal{U}}_0(\tau_3, t') + \cdots
}
Note that the only difference between the post (\ref{EqTimeDepPertTheoryPost}) and prior (\ref{EqTimeDepPertTheoryPrior}) forms is 
in the limits of integration. Equation (\ref{EqTExpGeneric}) is recovered by substituting $\hat{\mathcal{G}}_0 = 0$ and $\hat{\mathcal{V}}(\tau) = \hat{\mathcal{G}}(\tau)$ into Eq.~(\ref{EqTimeDepPertTheoryPrior}).

While looking cumbersome, Eqs.~(\ref{EqTimeDepPertTheoryPost}) and (\ref{EqTimeDepPertTheoryPrior})  have a clear structure, which is best visible if a simple diagrammatic language is used, e.g., in the case of the prior expansion~(\ref{EqTimeDepPertTheoryPrior})
\BoxedEquation{\label{EqFeynmanTimeDepPertTheoryPrior}
		\hat{\mathcal{U}}(t, t') = 
		\begin{tikzpicture}
			\draw [ultra thick, <-] (0,0) node[below]{$t$} -- (1,0) node[below]{$t'$};
		\end{tikzpicture}
		+
		\begin{tikzpicture}
			\draw [ultra thick, <-] (0,0) node[below]{$t$} -- (1,0);
			\draw [ultra thick] (1,-0.2) node[below]{$\tau_1$} -- (1,0.3);
			\draw [ultra thick, <-] (1,0) -- (2,0) node[below]{$t'$};
		\end{tikzpicture}
		+
		\begin{tikzpicture}
			\draw [ultra thick, <-] (0,0) node[below]{$t$} -- (1,0);
			\draw [ultra thick] (1,-0.2) node[below]{$\tau_1$} -- (1,0.3);
			\draw [ultra thick, <-] (1,0) -- (2,0);
			\draw [ultra thick] (2,-0.2) node[below]{$\tau_2$} -- (2,0.3);
			\draw [ultra thick, <-] (2,0) -- (3,0) node[below]{$t'$};
		\end{tikzpicture}
		+
		\begin{tikzpicture}
			\draw [ultra thick, <-] (0,0) node[below]{$t$} -- (1,0);
			\draw [ultra thick] (1,-0.2) node[below]{$\tau_1$} -- (1,0.3);
			\draw [ultra thick, <-] (1,0) -- (2,0);
			\draw [ultra thick] (2,-0.2) node[below]{$\tau_2$} -- (2,0.3);
			\draw [ultra thick, <-] (2,0) -- (3,0);
			\draw [ultra thick] (3,-0.2) node[below]{$\tau_3$} -- (3,0.3);
			\draw [ultra thick, <-] (3,0) -- (4,0) node[below]{$t'$};
		\end{tikzpicture}
		+ \cdots, \\
		\mbox{Rules :} 
		\begin{tikzpicture}
			\draw [ultra thick, <-] (0,0) node[below]{$t_2$} -- (1,0) node[below]{$t_1$};
		\end{tikzpicture}
		\mbox{ equals to }
		\hat{\mathcal{U}}_0(t_2, t_1),
		\quad
		\begin{tikzpicture}
			\draw [ultra thick] (0,-0.2) node[below]{$\tau$} -- (0,0.3);
		\end{tikzpicture}
		\mbox{ equals to }
		(-i) \hat{\mathcal{V}}(\tau), \mbox{ and prior ordering}\notag\\
		\mbox{integration is assumed over all the introduced time variables [as in Eq. (\ref{EqTimeDepPertTheoryPrior})].} \notag
}
Such representation is the simplest form of the \emph{Feynman diagrams} that are ubiquitously used in physics and chemistry.
Feynman diagrams play a similar role in science as emojis do in communication: these are powerful graphical tools for delivering complex concepts.

\section{Supplementary: Representing the truncated perturbation series expansion in the form of a non-homogeneous differential equation}

Consider a partial sum up to the $n$-th term of the series (\ref{EqFeynmanTimeDepPertTheoryPrior})
\begin{align}
	\hat{\mathcal{U}}_n(t, t') = 
		\begin{tikzpicture}
			\draw [ultra thick, <-] (0,0) node[below]{$t$} -- (1,0) node[below]{$t'$};
		\end{tikzpicture}
		+
		\begin{tikzpicture}
			\draw [ultra thick, <-] (0,0) node[below]{$t$} -- (1,0);
			\draw [ultra thick] (1,-0.2) node[below]{$\tau_1$} -- (1,0.3);
			\draw [ultra thick, <-] (1,0) -- (2,0) node[below]{$t'$};
		\end{tikzpicture}
		+ \cdots +
		\begin{tikzpicture}
			\draw [ultra thick, <-] (0,0) node[below]{$t$} -- (1,0);
			\draw [ultra thick] (1,-0.2) node[below]{$\tau_1$} -- (1,0.3);
			\draw [ultra thick, dotted,  <-] (1,0) -- (2,0);
			\draw [ultra thick] (2,-0.2) node[below]{$\tau_{n}$} -- (2,0.3);
			\draw [ultra thick, <-] (2,0) -- (3,0) node[below]{$t'$};
		\end{tikzpicture}.
\end{align}
Using Eqs. (\ref{EqLeibnizDiffInt}) and (\ref{EqUU0EvolutionEq}), we obtain
\begin{align}
	i\frac{\partial}{\partial t} \left( 
		\begin{tikzpicture}
			\draw [ultra thick, <-] (0,0) node[below]{$t$} -- (1,0);
			\draw [ultra thick] (1,-0.2) node[below]{$\tau_1$} -- (1,0.3);
			\draw [ultra thick, <-] (1,0) -- (2,0);
			\draw [ultra thick] (2,-0.2) node[below]{$\tau_2$} -- (2,0.3);
			\draw [ultra thick, dotted, <-] (2,0) -- (3,0);
			\draw [ultra thick] (3,-0.2) node[below]{$\tau_{n}$} -- (3,0.3);
			\draw [ultra thick, <-] (3,0) -- (4,0) node[below]{$t'$};
		\end{tikzpicture}
	\right) &= i \times
		\begin{tikzpicture}
			\draw [ultra thick] (1,-0.2) node[below]{$t$} -- (1,0.3);
			\draw [ultra thick, <-] (1,0) -- (2,0);
			\draw [ultra thick] (2,-0.2) node[below]{$\tau_2$} -- (2,0.3);
			\draw [ultra thick, dotted, <-] (2,0) -- (3,0);
			\draw [ultra thick] (3,-0.2) node[below]{$\tau_{n}$} -- (3,0.3);
			\draw [ultra thick, <-] (3,0) -- (4,0) node[below]{$t'$};
		\end{tikzpicture}
	+ \hat{\mathcal{G}}_0(t) \times
		\begin{tikzpicture}
			\draw [ultra thick, <-] (0,0) node[below]{$t$} -- (1,0);
			\draw [ultra thick] (1,-0.2) node[below]{$\tau_1$} -- (1,0.3);
			\draw [ultra thick, <-] (1,0) -- (2,0);
			\draw [ultra thick] (2,-0.2) node[below]{$\tau_2$} -- (2,0.3);
			\draw [ultra thick, dotted, <-] (2,0) -- (3,0);
			\draw [ultra thick] (3,-0.2) node[below]{$\tau_{n}$} -- (3,0.3);
			\draw [ultra thick, <-] (3,0) -- (4,0) node[below]{$t'$};
		\end{tikzpicture} 
	\notag\\
	&= 	\hat{\mathcal{V}}(t) \times
		\begin{tikzpicture}
			\draw [ultra thick, <-] (1,0) node[below]{$t$} -- (2,0);
			\draw [ultra thick] (2,-0.2) node[below]{$\tau_2$} -- (2,0.3);
			\draw [ultra thick, dotted, <-] (2,0) -- (3,0);
			\draw [ultra thick] (3,-0.2) node[below]{$\tau_{n}$} -- (3,0.3);
			\draw [ultra thick, <-] (3,0) -- (4,0) node[below]{$t'$};
		\end{tikzpicture}
		+ \hat{\mathcal{G}}_0(t) \times
		\begin{tikzpicture}
			\draw [ultra thick, <-] (0,0) node[below]{$t$} -- (1,0);
			\draw [ultra thick] (1,-0.2) node[below]{$\tau_1$} -- (1,0.3);
			\draw [ultra thick, <-] (1,0) -- (2,0);
			\draw [ultra thick] (2,-0.2) node[below]{$\tau_2$} -- (2,0.3);
			\draw [ultra thick, dotted, <-] (2,0) -- (3,0);
			\draw [ultra thick] (3,-0.2) node[below]{$\tau_{n}$} -- (3,0.3);
			\draw [ultra thick, <-] (3,0) -- (4,0) node[below]{$t'$};
		\end{tikzpicture} 
\notag
\end{align}
and, therefore,
\begin{align}
	i \frac{\partial}{\partial t} \hat{\mathcal{U}}_n(t, t') =&
		\hat{\mathcal{G}}_0(t) \times
		\begin{tikzpicture}
			\draw [ultra thick, <-] (0,0) node[below]{$t$} -- (1,0) node[below]{$t'$};
		\end{tikzpicture}
		+ \hat{\mathcal{V}}(t) \times
		\begin{tikzpicture}
			\draw [ultra thick, <-] (0,0) node[below]{$t$} -- (1,0) node[below]{$t'$};
		\end{tikzpicture}
		+ \hat{\mathcal{G}}_0(t) \times
		\begin{tikzpicture}
			\draw [ultra thick, <-] (0,0) node[below]{$t$} -- (1,0);
			\draw [ultra thick] (1,-0.2) node[below]{$\tau_1$} -- (1,0.3);
			\draw [ultra thick, <-] (1,0) -- (2,0) node[below]{$t'$};
		\end{tikzpicture}
		+ \cdots \notag\\
	& + \hat{\mathcal{V}}(t) \times
		\begin{tikzpicture}
			\draw [ultra thick, <-] (0,0) node[below]{$t$} -- (1,0);
			\draw [ultra thick] (1,-0.2) node[below]{$\tau_2$} -- (1,0.3);
			\draw [ultra thick, dotted,  <-] (1,0) -- (2,0);
			\draw [ultra thick] (2,-0.2) node[below]{$\tau_{n}$} -- (2,0.3);
			\draw [ultra thick, <-] (2,0) -- (3,0) node[below]{$t'$};
		\end{tikzpicture}
		+ \hat{\mathcal{G}}_0(t)  \times
		\begin{tikzpicture}
			\draw [ultra thick, <-] (0,0) node[below]{$t$} -- (1,0);
			\draw [ultra thick] (1,-0.2) node[below]{$\tau_1$} -- (1,0.3);
			\draw [ultra thick, dotted,  <-] (1,0) -- (2,0);
			\draw [ultra thick] (2,-0.2) node[below]{$\tau_{n}$} -- (2,0.3);
			\draw [ultra thick, <-] (2,0) -- (3,0) node[below]{$t'$};
		\end{tikzpicture} \notag\\
	=& \hat{\mathcal{G}}_0(t) \times \left(
		\begin{tikzpicture}
			\draw [ultra thick, <-] (0,0) node[below]{$t$} -- (1,0) node[below]{$t'$};
		\end{tikzpicture}
		+
		\begin{tikzpicture}
			\draw [ultra thick, <-] (0,0) node[below]{$t$} -- (1,0);
			\draw [ultra thick] (1,-0.2) node[below]{$\tau_1$} -- (1,0.3);
			\draw [ultra thick, <-] (1,0) -- (2,0) node[below]{$t'$};
		\end{tikzpicture}
		+ \cdots +
		\begin{tikzpicture}
			\draw [ultra thick, <-] (0,0) node[below]{$t$} -- (1,0);
			\draw [ultra thick] (1,-0.2) node[below]{$\tau_1$} -- (1,0.3);
			\draw [ultra thick, dotted,  <-] (1,0) -- (2,0);
			\draw [ultra thick] (2,-0.2) node[below]{$\tau_{n}$} -- (2,0.3);
			\draw [ultra thick, <-] (2,0) -- (3,0) node[below]{$t'$};
		\end{tikzpicture}
	\right) \notag\\ 
	&+ \hat{\mathcal{V}}(t) \times \left(
		\begin{tikzpicture}
			\draw [ultra thick, <-] (0,0) node[below]{$t$} -- (1,0) node[below]{$t'$};
		\end{tikzpicture}
		+ \underbrace{
		\begin{tikzpicture}
			\draw [ultra thick, <-] (0,0) node[below]{$t$} -- (1,0);
			\draw [ultra thick] (1,-0.2) node[below]{$\tau_2$} -- (1,0.3);
			\draw [ultra thick, <-] (1,0) -- (2,0) node[below]{$t'$};
		\end{tikzpicture}
		}_{\mbox{comes from the second order}}
		+ \cdots + 
		\begin{tikzpicture}
			\draw [ultra thick, <-] (0,0) node[below]{$t$} -- (1,0);
			\draw [ultra thick] (1,-0.2) node[below]{$\tau_2$} -- (1,0.3);
			\draw [ultra thick, dotted,  <-] (1,0) -- (2,0);
			\draw [ultra thick] (2,-0.2) node[below]{$\tau_{n}$} -- (2,0.3);
			\draw [ultra thick, <-] (2,0) -- (3,0) node[below]{$t'$};
		\end{tikzpicture}
	\right) \notag\\
	=& \hat{\mathcal{G}}_0(t) \hat{\mathcal{U}}_n(t, t') + 
	\hat{\mathcal{V}}(t) \left(
		\begin{tikzpicture}
			\draw [ultra thick, <-] (0,0) node[below]{$t$} -- (1,0) node[below]{$t'$};
		\end{tikzpicture}
		+ \underbrace{
		\begin{tikzpicture}
			\draw [ultra thick, <-] (0,0) node[below]{$t$} -- (1,0);
			\draw [ultra thick] (1,-0.2) node[below]{$\tau_2$} -- (1,0.3);
			\draw [ultra thick, <-] (1,0) -- (2,0) node[below]{$t'$};
		\end{tikzpicture}
		+ \cdots + 
		\begin{tikzpicture}
			\draw [ultra thick, <-] (0,0) node[below]{$t$} -- (1,0);
			\draw [ultra thick] (1,-0.2) node[below]{$\tau_1$} -- (1,0.3);
			\draw [ultra thick, dotted,  <-] (1,0) -- (2,0);
			\draw [ultra thick] (2,-0.2) node[below]{$\tau_{n-1}$} -- (2,0.3);
			\draw [ultra thick, <-] (2,0) -- (3,0) node[below]{$t'$};
		\end{tikzpicture}
		}_{\mbox{after renaming variables: $\tau_2\to\tau_1, \ldots,  \tau_{n}\to\tau_{n-1}$}}
	\right). \notag
\end{align}
Finally, we have arrived at the inhomogeneous equation of motion for the partial sum of the perturbation theory:
\BoxedEquation{\label{EqForPartialUNPertrubTheor}
	i \frac{\partial}{\partial t} \hat{\mathcal{U}}_n(t, t') = \hat{\mathcal{G}}_0(t) \hat{\mathcal{U}}_n(t, t') + \hat{\mathcal{V}}(t) \left(
		\begin{tikzpicture}
			\draw [ultra thick, <-] (0,0) node[below]{$t$} -- (1,0) node[below]{$t'$};
		\end{tikzpicture}
		+ \cdots +
		\begin{tikzpicture}
			\draw [ultra thick, <-] (0,0) node[below]{$t$} -- (1,0);
			\draw [ultra thick] (1,-0.2) node[below]{$\tau_1$} -- (1,0.3);
			\draw [ultra thick, dotted,  <-] (1,0) -- (2,0);
			\draw [ultra thick] (2,-0.2) node[below]{$\tau_{n-1}$} -- (2,0.3);
			\draw [ultra thick, <-] (2,0) -- (3,0) node[below]{$t'$};
		\end{tikzpicture}
	\right).
}
Equation (\ref{EqForPartialUNPertrubTheor}) allows us to see the condition of applicability of the infinite series (\ref{EqTimeDepPertTheoryPrior}) or (\ref{EqFeynmanTimeDepPertTheoryPrior}),
\begin{align}
	i \frac{\partial}{\partial t} \hat{\mathcal{U}}_n(t, t') =& \hat{\mathcal{G}}_0(t) \hat{\mathcal{U}}_n(t, t') + \hat{\mathcal{V}}(t) \left( 
		\hat{\mathcal{U}}_n(t, t')
		- 
		\begin{tikzpicture}
			\draw [ultra thick, <-] (0,0) node[below]{$t$} -- (1,0);
			\draw [ultra thick] (1,-0.2) node[below]{$\tau_1$} -- (1,0.3);
			\draw [ultra thick, dotted,  <-] (1,0) -- (2,0);
			\draw [ultra thick] (2,-0.2) node[below]{$\tau_{n}$} -- (2,0.3);
			\draw [ultra thick, <-] (2,0) -- (3,0) node[below]{$t'$};
		\end{tikzpicture}
	\right) \notag\\
	= & \left[ \hat{\mathcal{G}}_0(t) + \hat{\mathcal{V}}(t) \right]  \hat{\mathcal{U}}_n(t, t') - i \times  
		\begin{tikzpicture}
			\draw [ultra thick] (0,-0.2) node[below]{$t$} -- (0,0.3);
			\draw [ultra thick, <-] (0,0) -- (1,0);
			\draw [ultra thick] (1,-0.2) node[below]{$\tau_1$} -- (1,0.3);
			\draw [ultra thick, dotted,  <-] (1,0) -- (2,0);
			\draw [ultra thick] (2,-0.2) node[below]{$\tau_{n}$} -- (2,0.3);
			\draw [ultra thick, <-] (2,0) -- (3,0) node[below]{$t'$};
		\end{tikzpicture}.
	\notag
\end{align}
Comparing this with the equation of motion (\ref{EqUU0EvolutionEq}) for $\hat{\mathcal{U}}$, we find the necessary condition for convergence of the perturbation theory series:
\begin{align}
	\hat{\mathcal{U}}(t, t') = \lim_{n\to \infty} \hat{\mathcal{U}}_n(t, t') \quad \Longrightarrow \quad 
	\lim_{n\to \infty} 
		\begin{tikzpicture}
			\draw [ultra thick] (0,-0.2) node[below]{$t$} -- (0,0.3);
			\draw [ultra thick, <-] (0,0) -- (1,0);
			\draw [ultra thick] (1,-0.2) node[below]{$\tau_1$} -- (1,0.3);
			\draw [ultra thick, dotted,  <-] (1,0) -- (2,0);
			\draw [ultra thick] (2,-0.2) node[below]{$\tau_{n}$} -- (2,0.3);
			\draw [ultra thick, <-] (2,0) -- (3,0) node[below]{$t'$};
		\end{tikzpicture}
	= 0,
\end{align}
which may be satisfied if the perturbation $\hat{\mathcal{V}}$ is small.

\chapter{4. Prerequisites for quantum dynamics analysis}\label{Chapter:4}
\section{Processing dynamically measured data}\label{Sec:DynamicThoughtExp}

\begin{figure*}
	\begin{center}
		\includegraphics[width=0.8\hsize]{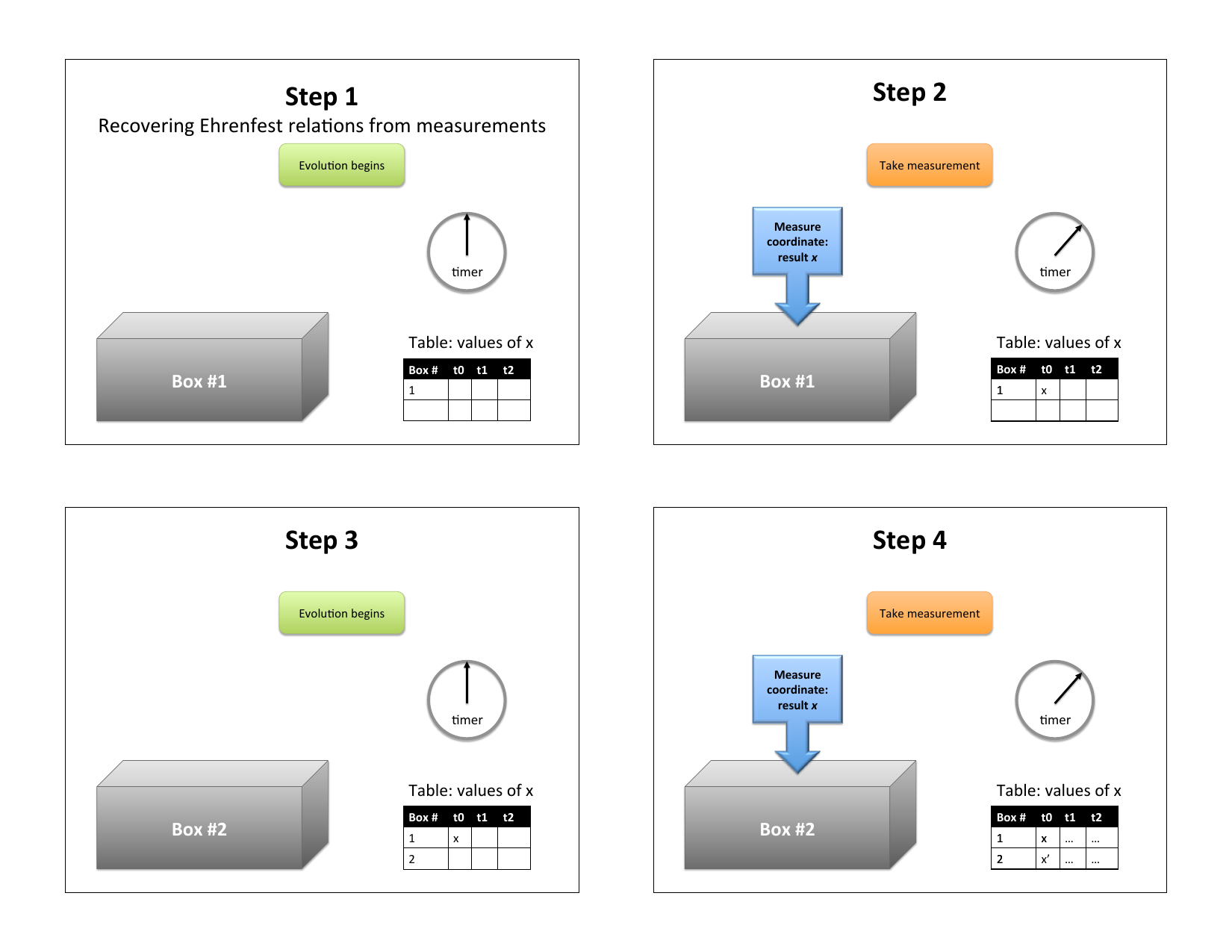}
		\caption{Thought experiments to recover the Ehrenfest relations (\ref{AveragedEquationsForXandP}) from experiment.}\label{Fig_Measurement2Ehrenfest}
	\end{center}
\end{figure*}

Assume we have multiple copies of either a quantum or classical system  (without loss of generality,, we consider single-particle one-dimensional systems throughout). Suppose we can precisely measure different copies of the particle's coordinate $x$ and momentum $p$ at times $\{ t_k \}_{k=1}^K$.  Upon performing ideal measurements of the coordinate or momentum on the $n$-th copy, we experimentally obtain $\{ x_n(t_k) \}$
 and $\{ p_n(t_k) \}$, $n=1,\ldots,N$, requiring a total of $2KN$ observations. Time interpolation of these data points returns the functions $x_n(t)$ and $p_n(t)$. We may then calculate the  statistical moments 
$ \overline{ [x(t)]^l } = \frac{1}{N} \sum_{n=1}^N [x_n (t)]^l$ and
$\overline{ [p(t)]^l } = \frac{1}{N} \sum_{n=1}^N [p_n (t)]^l$ for $l=1,2,3,\ldots$  We make the {\it ansatz}, resembling a Taylor series with coefficients $a_l$, $b_l$, $c_{k,l}$, $d_l$, $e_l$, and $f_{k,l}$, that the first derivative of $\overline{x(t)} = \overline{ [x(t)]^1 }$ and $\overline{p(t)} = \overline{ [p(t)]^1 }$ satisfy
\begin{align}
	\frac{d}{dt} \overline{x(t)} &= \sum_l \left( a_l \overline{[x(t)]^l} + b_l \overline{[p(t)]^l} \right)
		+ \sum_{k,l \neq 0} c_{k,l} \overline{[x(t)]^l [p(t)]^k}, \notag\\
	\frac{d}{dt} \overline{p(t)} &= \sum_l \left( d_l \overline{[x(t)]^l} + e_l \overline{[p(t)]^l} \right)
		+ \sum_{k,l \neq 0} f_{k,l} \overline{[x(t)]^l [p(t)]^k}. \notag
\end{align}
For non-dissipative quantum and classical systems, these relations reduce to
\begin{align}\label{AveragedEquationsForXandP}
	m\frac{d}{dt} \overline{x(t)} = \overline{p(t)}, \qquad \frac{d}{dt} \overline{p(t)} = \overline{-U'(x)}(t),
\end{align}
where $\overline{-U'(x)}(t) = \sum_l d_l  \overline{[x(t)]^l}$.

\section{Quantum dynamics: Inference of Schr\"{o}dinger equation}\label{Sec:ODMSchrodinger} 

The evolution of expectation values of the quantum coordinate and momentum is governed by the Ehrenfest theorems (\ref{AveragedEquationsForXandP}). 
\begin{align}\label{EqEhrenfestInHilbertSpace}
	m \frac{d}{dt} \langle \hat{x} \rangle = \langle \hat{p} \rangle, &\qquad
	\frac{d}{dt} \langle \hat{p} \rangle = \langle -U'(\hat{x}) \rangle, \notag\\
	m \frac{d}{dt} \bra{\Psi(t)} \hat{x} \ket{\Psi(t)} = \bra{\Psi(t)} \hat{p} \ket{\Psi(t)}, &\qquad
	\frac{d}{dt} \bra{\Psi(t)} \hat{p} \ket{\Psi(t)} = \bra{\Psi(t)} -U'(\hat{x}) \ket{\Psi(t)}. 
\end{align}
According to Stone's theorem (\ref{EqStoneTheorem}), there exits an \emph{unknown} generator of motion $\hat{H}$ such that
\begin{align}\label{AbstractSchrodingerEq}
	i\hbar \ket{d \Psi(t)/dt} = \hat{H} \ket{\Psi(t)}, \qquad
	-i\hbar \bra{d \Psi(t)/dt} =  \bra{\Psi(t)} \hat{H}.
\end{align}
Thus, Eq. (\ref{EqEhrenfestInHilbertSpace}) can be rewritten as
\begin{align}
	m \frac{d}{dt} \bra{\Psi(t)} \hat{x} \ket{\Psi(t)} 
	&= m\bra{d\Psi(t)/dt} \hat{x} \ket{\Psi(t)} + m\bra{\Psi(t)} \hat{x} \ket{d\Psi(t) / dt} \notag\\
	&= \frac{im}{\hbar} \bra{\Psi(t)} \hat{H} \hat{x} \ket{\Psi(t)} - \frac{im}{\hbar} \bra{\Psi(t)} \hat{x} \hat{H} \ket{\Psi(t)} \notag\\
	&= \frac{im}{\hbar} \bra{\Psi(t)} [\hat{H}, \hat{x}] \ket{\Psi(t)} = \bra{\Psi(t)} \hat{p} \ket{\Psi(t)}.
\end{align}
The averaging can dropped since the equality should be valid for any time $t$ as well as any initial condition
\begin{align}\label{Commut_Eq_for_Schrod}
	im \commut{\hat{H}, \hat{x}} = \hbar \hat{p}, \qquad i\commut{\hat{H}, \hat{p}} = -\hbar U'(\hat{x}).
\end{align}
Now we are going to make use of the canonical commutation relations (\ref{XP_CommutationalRelation}), $\commut{ \hat{x}, \hat{p} } = i\hbar$, to transform the commutator equations (\ref{Commut_Eq_for_Schrod}) with respect to the unknown $\hat{H}$ into partially differential equations. Assuming $\hat{H} = H(\hat{x}, \hat{p})$, we make use of Eq. (\ref{EqWelCommutatorTheorem}) 
\begin{align}
	im \frac{\partial}{\partial \hat{x}} H(\hat{x}, \hat{p}) [\hat{x}, \hat{x}] + im \frac{\partial}{\partial \hat{p}} H(\hat{x}, \hat{p}) [\hat{p}, \hat{x}] = \hbar \hat{p}, &\quad
	i \frac{\partial}{\partial \hat{x}} H(\hat{x}, \hat{p}) [\hat{x}, \hat{p}] + i \frac{\partial}{\partial \hat{p}} H(\hat{x}, \hat{p}) [\hat{p}, \hat{p}] = -\hbar U'(\hat{x}) \notag\\
    \Longrightarrow
	m \frac{\partial}{\partial \hat{p}} H(\hat{x}, \hat{p}) = \hat{p}, &\qquad
	\frac{\partial}{\partial \hat{x}} H(\hat{x}, \hat{p}) = U'(\hat{x}).
\end{align}
Furthermore, using Eq. (\ref{Weyl_derivative_of_function}), we obtain differentials with respect to the ordinary functions
\begin{align}
	m \frac{\partial}{\partial p} H(x, p) = p, \qquad \frac{\partial}{\partial x} H(x, p) = U'(x).
\end{align}
which can be readily solved 
\begin{align}
	m \frac{\partial}{\partial p} H(x, p) = p \Longrightarrow H(x, p) = \frac{p^2}{2m} + F(x) 
	\Longrightarrow \frac{\partial}{\partial x} H(x, p) = F'(x) = U'(x).
\end{align}
Whence, the familiar quantum Hamiltonian readily follows
\BoxedEquation{\label{Quantum_Hamiltonian}
	\hat{H} = \frac{\hat{p}^2}{2m} + U(\hat{x}).
} 
Since the Schr\"{o}dinger equation was derived from the Ehrenfest theorems (\ref{AveragedEquationsForXandP}) assuming the canonical commutation relation (\ref{XP_CommutationalRelation}), the presentation suggests that the Ehrenfest theorems are more fundamental than the Schr\"{o}dinger equation.

\section{Solving numerically time-independent Schr\"{o}dinger equation}\label{Sec:StationarySchNumeSol} 

In this section, we shall review a different method for finding eigenvalues (i.e., energy spectrum) and eigenfunctions of the Hamiltonian $\hat{H}$ in the coordinate representation [see Eq. (\ref{EqSummaryCommutatorSection})] utilizing atomic units where $\hbar=m=1$
\begin{align}\label{ExactHamiltonianCoordinateRep}
	\langle x | \hat{H} | \psi \rangle =\left[ -\frac{1}{2}\frac{\partial^2}{\partial x^2} + U(x) \right] \langle x | \psi \rangle = E \langle x | \psi \rangle.
\end{align}

\subsection{Forward and backward difference methods}

We shall represent wave function as a vector $\langle x_k | \psi \rangle$ on equally spaced coordinate grid $x_k = k \Delta x$, where $k=0,1,2,\ldots,N-1$ is an integer. According to the forward difference approximation,  the derivative of a function $f(x)$ is approximated by
\begin{align}
	f'(x) \approx [ f(x+\Delta x) - f(x) ] / \Delta x = \left[ f(x) + f'(x) \Delta x + O\left(\Delta x^2 \right) - f(x) \right] / \Delta x
	= f'(x) + O\left( \Delta x \right),
    \label{EqForwardDiffApprox} 
\end{align}
which gives us access to the second derivative,
\begin{align}
	f''(x) \approx [ f'(x+\Delta x) - f'(x) ] / \Delta x  = [ f(x+2\Delta x) - 2f(x + \Delta x) + f(x)] / \Delta x^2 + O\left( \Delta x \right). 
\end{align}
The discretized Eq. (\ref{ExactHamiltonianCoordinateRep}) reads
\begin{align}\label{ForwardDiffHamiltonianDiscreet}
	-\frac{\langle x_{k+2} | \psi \rangle - 2\langle x_{k+1} | \psi \rangle + \langle x_k | \psi \rangle}{2\Delta x^2} + U(x_k)  \langle x_{k} | \psi \rangle = E  \langle x_k | \psi \rangle,
	\qquad k=0,1,2,\ldots,N-1.
\end{align}
In other words, Eq.~(\ref{ExactHamiltonianCoordinateRep}) turns into the following matrix eigenvalue-eigenfunction problem\footnote{See the implementation \url{https://github.com/dibondar/QuantumClassicalDynamics/blob/master/demo_finite_difference.ipynb}}:
\begin{align}\label{ForwardDiffHamiltonianCoordinateRep}
	\left[
		\frac{-1}{2 \Delta x^2} 
		\begin{pmatrix}
			1	& -2	& 1 	&	&	\\
				& 1	& -2	& 1	&	\\
				&	& \ddots& \ddots & \ddots \\
				&	&	& 1	& -2	\\
				&	&	&	& 1	\\
		\end{pmatrix}
		+
		\begin{pmatrix}
			U(x_0)	&	&	&	&	& \\
				& U(x_1) &	&	&	& \\
				& 	& 	& \ddots &	& \\
				&	& 	& 	& U(x_{N-2})  & \\
				&	&	& 	& 	& U(x_{N-1})
		\end{pmatrix}
	\right]
	\begin{pmatrix}
	 \langle x_0 | \psi \rangle \\
	 \langle x_1 | \psi \rangle \\
	 \vdots \\
	 \langle x_{N-2} | \psi \rangle \\
	 \langle x_{N-1} | \psi \rangle
	\end{pmatrix}
 \notag\\
	= E
	\begin{pmatrix}
	 \langle x_0 | \psi \rangle \\
	 \langle x_1 | \psi \rangle \\
	 \vdots \\
	 \langle x_{N-2} | \psi \rangle \\
	 \langle x_{N-1} | \psi \rangle
	\end{pmatrix}
\end{align} 
Note that in order to transform Eq.~(\ref{ForwardDiffHamiltonianDiscreet}) into Eq.~(\ref{ForwardDiffHamiltonianCoordinateRep}), the following additional boundary conditions have to be used
\begin{align}
	\langle x_N | \psi \rangle = \langle x_{N+1} | \psi \rangle = 0.
\end{align}
This condition was chosen from the fact that the exact eigenfunction $\langle x | \psi \rangle$ for a bound state must be a square integrable, therefore $\langle \pm\infty | \psi \rangle = 0$. Different boundary conditions must be imposed if we want to find the wavefunction for continuous eigenstate (i.e., scattering states) or to find stationary states for solid state systems.

Even though Eq. (\ref{ExactHamiltonianCoordinateRep}) contains the self-adjoint Hamiltonian, the discretization with the forward difference approximation (\ref{ForwardDiffHamiltonianCoordinateRep}) is non-Hermitian, thus eigenenergies are not guaranteed to be real. This shows that not all numerical methods preserve physical structure!

The same conclusion holds when we use the backward difference method\footnote{See the implementation \url{https://github.com/dibondar/QuantumClassicalDynamics/blob/master/demo_finite_difference.ipynb}}:
\begin{align}
	f'(x) &\approx [ f(x) - f(x-\Delta x) ] / \Delta x = f'(x) + O\left( \Delta x \right) \\
	f''(x) &\approx [ f'(x) - f'(x-\Delta x) ] / \Delta x  = [ f(x) - 2f(x - \Delta x) + f(x- 2\Delta x)] / \Delta x^2 + O\left( \Delta x \right). 
\end{align}
In fact both forward and backward finite difference methods lead to disastrous results! 
Can you see why?\footnote{Answer: In the case of the forward (backward) difference approximation, we get lower (upper) triangular Hamiltonians with the diagonal values of $-1/(2\Delta x^2) + U(x_k)$. Recall that the eigenvalues of an upper or lower triangular matrix lie on the diagonal.}

\subsection{Central difference method}

However, the central difference method
\begin{align}
	f'(x) &\approx [ f(x+\Delta x/2) - f(x-\Delta x/2) ] / \Delta x = f'(x) + O\left( \Delta x^2 \right) \Longrightarrow \label{CentralFinitDiffApprox} \\
	f''(x) &\approx  [ f'(x+\Delta x/2) - f'(x-\Delta x/2) ] / \Delta x  = [ f(x + \Delta x) - 2f(x) + f(x- \Delta x)] / \Delta x^2  + O\left( \Delta x^2 \right). 
\end{align}
leads to the Hermitian finite dimensional approximation of the Hamiltonian:
\begin{align}\label{CentralDiffHamiltonianDiscreet}
	-\frac{\langle x_{k+1} | \psi \rangle - 2\langle x_{k} | \psi \rangle + \langle x_{k-1} | \psi \rangle}{2\Delta x^2} + U(x_k)  \langle x_k | \psi \rangle = E  \langle x_k | \psi \rangle,
	\qquad k=0,1,2,\ldots,N-1,
\end{align}
\begin{align}\label{CentralDiffHamiltonianCoordinateRep}
	\left[
		\frac{-1}{2 \Delta x^2} 
		\begin{pmatrix}
			-2	& 1 &	&	& \\
			1	& -2	& 1 &	& \\
				 & \ddots & \ddots & \ddots & \\
				&	& 1	& -2	& 1 \\
				&	&	& 1	& -2
		\end{pmatrix}
		+
		\begin{pmatrix}
			U(x_0)	&	&	&	&	& \\
				& U(x_1) &	&	&	& \\
				& 	& 	& \ddots &	 & \\
				&	& 	& 	& U(x_N)  & \\
				&	&	& 	& 	& U(x_{N-1})
		\end{pmatrix}
	\right]
	\begin{pmatrix}
	 \langle x_0 | \psi \rangle \\
	 \langle x_1 | \psi \rangle \\
	 \vdots \\
	 \langle x_{N-2} | \psi \rangle \\
	 \langle x_{N-1} | \psi \rangle
	\end{pmatrix}
  \notag\\
	= E
	\begin{pmatrix}
	 \langle x_0 | \psi \rangle \\
	 \langle x_1 | \psi \rangle \\
	 \vdots \\
	 \langle x_{N-2} | \psi \rangle \\
	 \langle x_{N-1} | \psi \rangle
	\end{pmatrix}.
\end{align} 
To rewrite Eq. (\ref{CentralDiffHamiltonianDiscreet}) in the form of Eq. (\ref{CentralDiffHamiltonianCoordinateRep}), the following boundary condition must be imposed 
\begin{align}
	 \langle x_{-1} | \psi \rangle =  \langle x_N | \psi \rangle = 0,
\end{align}
which is a grid representation of $\langle \pm\infty | \psi \rangle = 0$.

Thus, the central finite difference preserves the physical structure\footnote{See the  implementation \url{https://github.com/dibondar/QuantumClassicalDynamics/blob/master/demo_finite_difference.ipynb}} and it leads to reliable numerical results.

\subsection{Mutually unbiased bases}

In linear algebra, two orthonormal bases $| p_1 \rangle, | p_2 \rangle, \ldots, | p_N \rangle$ and $| x_1 \rangle, | x_2 \rangle, \ldots, | x_N \rangle$ are called mutually unbiased bases if $\left| \langle x_k | p_l \rangle \right|^2 = \mbox{const}$. The latter is the relationship we observed in our thought experiment [see Eq. (\ref{quantum_xp_const})]. To generate such bases, one can employ the methods outlined, e.g., in Refs. \cite{Schwinger2003, Torre2003}. We will follow a more numerically convenient route.

We introduce the dedicated notation for the direct ($\mathcal{F}$) and inverse ($\mathcal{F}^{-1}$) Fourier transform
\BoxedEquation{\label{EqContiniousFourierTransform}
	\mathcal{F}_{x \to p}[ f(x) ] = \int \frac{dx}{\sqrt{2\pi\hbar}} e^{-ixp/\hbar} f(x) = \int dx \, \langle p | x \rangle f(x), 
     \notag\\
	\mathcal{F}_{p \to x}^{-1}[ g(p) ] = \int \frac{dp}{\sqrt{2\pi\hbar}} e^{ixp/\hbar} g(p)  = \int dp \, \langle x | p \rangle g(p).
}
We can obtain 
\begin{align}
	\langle x | p \rangle =  \langle x | \left( \int dx' | x' \rangle\langle x' | \right) | p \rangle 
		= \int dx' \delta(x-x') \langle x' | p \rangle = \mathcal{F}_{x' \to p}^{-1}\left[ \delta(x-x') \right]. 
\end{align}

Finally we can represent the Hamiltonian $\hat{H}$ of a very general form
\begin{align}
	\hat{H} &= K(\hat{p}) + U(\hat{x}), \notag\\
	\langle x | \hat{H} | x' \rangle 
		& = \langle x | \left( \int dp | p \rangle\langle p | \right) K(\hat{p}) \left( \int dp' | p' \rangle\langle p' | \right) | x' \rangle + \langle x | U(\hat{x}) | x' \rangle  \notag\\
		& = \int dp dp' \langle x | p \rangle K(p) \delta(p-p') \langle p' | x' \rangle + U(x) \delta(x-x'), 
\end{align}
\BoxedEquation{\label{EqMUBHamiltonianXX}
	\langle x | \hat{H} | x' \rangle = \mathcal{F}_{p \to x}^{-1} \Big[ \mathcal{F}_{p' \to x'}[ K(p) \delta(p-p') ] \Big] + U(x) \delta(x-x').
}
In order to make use of these identities, we need to be able to efficiently calculate the Fourier transform.

\section{Method 1: Calculating the continuous Fourier transform via the Fast Fourier Transform}\label{Sec_FTviaFFT}

The algorithm of Fast Fourier Transform (FFT), considered to be one of the best of the 20th century \cite{cipra_best_2000}, efficiently calculates the following sums
\BoxedEquation{
	y_k = FFT_{l \to k} [x_l] = \sum_{l=0}^{N-1} x_l e^{-2\pi i kl/N}, \quad
	x_k = iFFT_{l \to k}[y_l] = \frac{1}{N} \sum_{l=0}^{N-1} y_l e^{2\pi i kl/N}, 
	\quad k = 0, \ldots, N-1.
}
The FFT and its inverse iFFT perform these calculations with complexity $O(N\log N)$ instead of $O(N^2)$ if the summation is done straightforwardly. The best performance of FFT/iFFT is achieved when $N$ is a power of $2$. \texttt{SciPy} contains a whole suite devoted to these transformations\footnote{\url{http://docs.scipy.org/doc/scipy/reference/tutorial/fftpack.html}}. One might expect that FFT and iFFT are merely straightforward implementations of $\mathcal{F}$ and $\mathcal{F}^{-1}$ (\ref{EqContiniousFourierTransform}), respectively. This intuitive expectation disappears by running this demo\footnote{\url{https://github.com/dibondar/QuantumClassicalDynamics/blob/master/fourier_transform.py}}.

However, Ref. \cite{bailey1994fast} provides methods how to utilize FFT to approximate the continuous Fourier transform $g(p) = \mathcal{F}_{x \to p}[ f(x) ]$.
Let us assume $f(x)$ vanishes outside the interval $[-L, +L)$. Define the following grid representations for the continuous variables $x$ and $p$ 
\BoxedEquation{\label{EqCoordinateMomentumGrids1}
	x_k = \left(k - N / 2 \right) \Delta x, \quad p_k = \left(k - N/2 \right)\pi / L,
	\qquad 
	k = 0, \ldots, N-1, \quad \Delta x = 2L / N,
}
in particular $[x_k] = [-L, -L + \Delta x, -L + 2\Delta x, \ldots, L - \Delta x]$ note that the last point, $+L$, is missing. 
\begin{align}
	g(p_k) &= \int_{-\infty}^{+\infty} f(x) e^{-ixp_k} dx =  \int_{-L}^{+L} f(x) e^{-ixp_k} dx 
		\approx \sum_{l=0}^{N-1} f(x_l) e^{-ix_l p_k} \Delta x \notag\\
		&= \Delta x \sum_{l=0}^{N-1} f(x_l) e^{-i(l - N/2)(k - N/2) \frac{\pi}{L} \frac{2L}{N}} 
		= \Delta x e^{-i\pi N/2} e^{i\pi k} \sum_{l=0}^{N-1} f(x_l) e^{i\pi l} e^{-i2\pi kl/N}. 
\end{align}
\BoxedEquation{\label{EqCFTviaFFT}
	g(p_k) =  (-1)^k FFT_{l \to k} \left[ (-1)^l f(x_l) \right] \Delta x , \quad
	f(x_k) =  (-1)^k iFFT_{l \to k} \left[ (-1)^l g(p_l) \right] / \Delta x, \notag\\
	 \mbox{for $N$ divisible by 4,} \notag\\
	 \mbox{or symbolically } \mathcal{F}_{l \to k} \approx (-)^k FFT_{l \to k} (-)^l, 
	 \qquad  \mathcal{F}_{l \to k}^{-1} \approx (-)^k iFFT_{l \to k} (-)^l, 
}
Using this computational trick, Eq. (\ref{EqMUBHamiltonianXX}) is finally can be written as
\begin{align}
	\langle x_k | \hat{H} | x_{k'} \rangle &= (-)^{k} iFFT_{l \to k} \Big[ (-)^{l} (-)^{k'} FFT_{l' \to k'} \left[ (-)^{l'} K(p_l) \delta_{l, l'} \right] \Big] + U(x_k) \delta_{k, k'} \notag\\
	&= (-)^{k + k'} iFFT_{l \to k} \Big[ FFT_{l' \to k'} \left[ (-)^{l + l'} K(p_l) \delta_{l, l'} \right] \Big] + U(x_k) \delta_{k, k'},
\end{align}
which is utilized for diagonalizing the Hamiltonian in the example\footnote{\url{https://github.com/dibondar/QuantumClassicalDynamics/blob/master/demo_mub_qhamiltnian.ipynb}}.

Note that in this approach [see Eq. (\ref{EqCoordinateMomentumGrids1})] we are free to choose the coordinate grid, the momentum grid is nevertheless fixed. 

\section{Method 2: Calculating the continuous Fourier transform via the Fractional Fourier Transform}\label{Sec_FTviaFRFT}

Employing FFT an $O(N \log N)$ algorithm can be implemented to calculate the \emph{Fractional Fourier Transform} (FRFT) defined by the sum
\BoxedEquation{
	y_k^{(\alpha)} = FRFT_{l \to k}^{(\alpha)} \left[ x_l \right]= \sum_{l=0}^{N-1} x_l e^{-2\pi i k l \alpha}, \qquad k = 0, \ldots, N-1,	
}
for some fixed real value $\alpha$. 

Using FRFT, we can overcome the shortcoming of the previous method that the step size of the $p$ grid must be fixed by the parameter of the grids. Define the following grid representations for the continuous variables $x$ and $p$:
\BoxedEquation{\label{EqCoordinateMomentumGrids2}
	x_k = \left(k - N / 2 \right) \Delta x, \quad p_k = \left(k - N/2 \right)\Delta p,
	\qquad 
	k = 0, \ldots, N-1, \quad \Delta x = 2L / N.
}
Note that both $\Delta x$ and $\Delta p$ can be made arbitrary.
\begin{align}
	g(p_k) &= \int_{-\infty}^{+\infty} f(x) e^{-ixp_k} dx =  \int_{-L}^{+L} f(x) e^{-ixp_k} dx 
		\approx \sum_{l=0}^{N-1} f(x_l) e^{-ix_l p_k} \Delta x \notag\\
		&= \Delta x \sum_{l=0}^{N-1} f(x_l) e^{-i(l - N/2)(k - N/2) \Delta x \Delta p} 
		= \Delta x e^{\pi i (k - N/2)N \delta}  \sum_{l=0}^{N-1} f(x_l) e^{\pi i l N \delta} e^{-2\pi i k l \delta}, 
			\qquad \delta = \frac{\Delta x \Delta p}{2 \pi}.
\end{align}
\BoxedEquation{
	g(p_k) = \Delta x e^{\pi i (k - N/2)N \delta} FRFT_{l \to k}^{(\delta)} \left[ f(x_l) e^{\pi i l N \delta}  \right], 
		\qquad \delta = \frac{\Delta x \Delta p}{2 \pi}, \notag\\
	 \mbox{or symbolically } \mathcal{F}_{l \to k} \approx e^{\pi i (k - N/2)N \delta} FRFT_{l \to k}^{(\delta)} e^{\pi i l N \delta}.
}

\chapter{5. Time-dependent Schr\"{o}\-din\-ger evolution}\label{Chapter:5}
\section{The Trotter product formula and split-operator method}\label{Sec:SplitOpSchrodinger}

The term ``split-operator'' comes from the fact that we want to splits the single operator $e^{\hat{A} + \hat{B}}$ into constitute components $e^{\alpha \hat{A}}$ and $e^{\beta \hat{B}/N}$ that are easier to evaluate. For example, such a splitting is done in the Trotter formula (\ref{TrotterProductEq}). Below we shall make use of the following identity, which is easily proven by the Taylor expanding both sides (always remember $\hat{A}\hat{B} \neq \hat{B}\hat{A}$):
\BoxedEquation{\label{SecondOrderSplitting}
	e^{\hat{A}\delta + \hat{B}\delta} = e^{\hat{B}\delta/2}e^{\hat{A}\delta}e^{\hat{B}\delta/2} + O\left(\delta^3\right).
}

Now, assuming $\hat{H}(t) = K(t, \hat{p}) + U(t, \hat{x})$ and using Eqs. (\ref{EqTrapezoidalIntegralRule}), (\ref{EqTimeOrderingDropping}), and (\ref{SecondOrderSplitting}), we get
\begin{align}\label{SecondOrderTexpPartEq}
	\hat{U}(t + \delta t, t) = \exp\left[- \frac{i}{\hbar} \int_{t}^{t + \delta t} \hat{H}(\tau) d\tau \right] + O\left( \delta t^3 \right)
		= \exp\left[- \frac{i}{\hbar} \hat{H}\left( t + \frac{\delta t}{2} \right) \delta t + O\left( \delta t^3 \right) \right] + O\left( \delta t^3 \right) \notag
        \\
		=  \underbrace{\exp\left[ -i \frac{\delta t}{2\hbar} U\left(t + \frac{\delta t}{2}, \hat{x}\right) \right] 
			\exp\left[ -i \frac{\delta t}{\hbar} K\left(t + \frac{\delta t}{2}, \hat{p}\right) \right]
			\exp\left[ -i \frac{\delta t}{2\hbar} U\left(t + \frac{\delta t}{2}, \hat{x}\right) \right]}_{\mbox{This is unitary, see Eq. \eqref{EqUnitaritySplitOpMethodSchrodinger} below.}} 
		+ O\left( \delta t^3 \right). 
\end{align}
\begin{align}\label{EqUnitaritySplitOpMethodSchrodinger}
	& e^{-i \frac{\delta t}{2\hbar} U\left(t + \frac{\delta t}{2}, \hat{x}\right)} 
	e^{-i \frac{\delta t}{\hbar} K\left(t + \frac{\delta t}{2}, \hat{p}\right)}
	e^{-i \frac{\delta t}{2\hbar} U\left(t + \frac{\delta t}{2}, \hat{x}\right)}
	\left[ 
		e^{-i \frac{\delta t}{2\hbar} U\left(t + \frac{\delta t}{2}, \hat{x}\right)} 
		e^{-i \frac{\delta t}{\hbar} K\left(t + \frac{\delta t}{2}, \hat{p}\right)}
		e^{-i \frac{\delta t}{2\hbar} U\left(t + \frac{\delta t}{2}, \hat{x}\right)}
	\right]^{\dagger} \notag\\
	&~~~= e^{-i \frac{\delta t}{2\hbar} U\left(t + \frac{\delta t}{2}, \hat{x}\right)} 
	e^{-i \frac{\delta t}{\hbar} K\left(t + \frac{\delta t}{2}, \hat{p}\right)}
	\underbrace{
		e^{-i \frac{\delta t}{2\hbar} U\left(t + \frac{\delta t}{2}, \hat{x}\right)}
		e^{i \frac{\delta t}{2\hbar} U\left(t + \frac{\delta t}{2}, \hat{x}\right)}
	 }_{1}
	e^{i \frac{\delta t}{\hbar} K\left(t + \frac{\delta t}{2}, \hat{p}\right)}
	e^{i \frac{\delta t}{2\hbar} U\left(t + \frac{\delta t}{2}, \hat{x}\right) } \notag\\
	&~~~= e^{-i \frac{\delta t}{2\hbar} U\left(t + \frac{\delta t}{2}, \hat{x}\right)} 
	\underbrace{
		e^{-i \frac{\delta t}{\hbar} K\left(t + \frac{\delta t}{2}, \hat{p}\right)}
		e^{i \frac{\delta t}{\hbar} K\left(t + \frac{\delta t}{2}, \hat{p}\right)}
	 }_{1}
	e^{i \frac{\delta t}{2\hbar} U\left(t + \frac{\delta t}{2}, \hat{x}\right)} 
	= \hat{1}.
\end{align}

Using the coordinate representation
\begin{align}
	\langle x &\ket{\psi(t + \delta t)} = \bra{x} \hat{U}(t + \delta t, t) \ket{\psi(t) } \notag\\
		&= \bra{x} e^{-i \frac{\delta t}{2\hbar} U\left(t + \frac{\delta t}{2}, \hat{x}\right)} 
			e^{-i \frac{\delta t}{\hbar} K\left(t + \frac{\delta t}{2}, \hat{p}\right)} \hat{1}
			e^{-i \frac{\delta t}{2\hbar} U\left(t + \frac{\delta t}{2}, \hat{x}\right)} \hat{1} \ket{\psi(t)}  + O\left( \delta t^3 \right) \notag\\
		&= e^{ -i \frac{\delta t}{2\hbar} U\left(t + \frac{\delta t}{2}, x\right) } \bra{x}
			e^{-i \frac{\delta t}{\hbar} K\left(t + \frac{\delta t}{2}, \hat{p}\right)} \int dp |p \rangle\langle p|
			e^{-i \frac{\delta t}{2\hbar} U\left(t + \frac{\delta t}{2}, \hat{x}\right)} 
			\int dx' |x' \rangle\langle x' \ket{\psi(t)} + O\left( \delta t^3 \right), \notag
\end{align}
where the last line is obtained by using Eq. \eqref{EqFunctionActingOnEigenVals}.
Therefore,
\BoxedEquation{\label{SpectralSplitOpMEq}
	\langle x \ket{\psi(t + \delta t)} = e^{ -i \frac{\delta t}{2\hbar} U\left(t + \frac{\delta t}{2}, x\right) } 
			\mathcal{F}_{p \to x}^{-1}\left[
				e^{-i \frac{\delta t}{\hbar} K\left(t + \frac{\delta t}{2}, p\right)}
			\mathcal{F}_{x' \to p} \left[ 
				e^{-i \frac{\delta t}{2\hbar} U\left(t + \frac{\delta t}{2}, x'\right)} \langle x' \ket{\psi(t)}
			\right]\right] + O\left( \delta t^3 \right),
}
where the integrals were rewritten according to Eq.~\eqref{EqContiniousFourierTransform}.

Equation (\ref{SpectralSplitOpMEq}) is the split-operator method valid for a very general class of time-dependent Hamiltonian. The computational complexity of this operator is $O(N\log N)$ and the storage requirement of $O(N)$, where $N$ denotes the length of array storing the wave function. The 1d implementation is provided in
\footnote{\url{https://github.com/dibondar/QuantumClassicalDynamics/blob/master/demo_split_op_schrodinger1D.ipynb}},
whereas, 2d is given in \footnote{\url{https://github.com/dibondar/QuantumClassicalDynamics/blob/master/demo_split_op_schrodinger2D.py}}.

In particular, utilizing the method described in Sec.~\ref{Sec_FTviaFFT} [Eq.~(\ref{EqCFTviaFFT})], we obtain
\begin{align}
	\langle x_{l'} \ket{\psi(t + \delta t)} \approx e^{ -i \frac{\delta t}{2\hbar} U\left(t + \frac{\delta t}{2}, x_{l'}\right) } 
			(-1)^{l'} iFFT_{k \to l'} \left\{
				(-1)^k e^{-i \frac{\delta t}{\hbar} K\left(t + \frac{\delta t}{2}, p_k\right)}
			(-1)^k \right.\qquad\qquad
            \notag\\
            \left.
            \times FFT_{l \to k} \left[ 
				(-1)^l e^{-i \frac{\delta t}{2\hbar} U\left(t + \frac{\delta t}{2}, x_l\right)} \langle x_l \ket{\psi(t)}
			\right]\right\} \notag\\
		 = (-1)^{l'} e^{ -i \frac{\delta t}{2\hbar} U\left(t + \frac{\delta t}{2}, x_{l'}\right) } 
			 iFFT_{k \to l'} \left\{
				e^{-i \frac{\delta t}{\hbar} K\left(t + \frac{\delta t}{2}, p_k\right)}
				FFT_{l \to k} \left[ 
				(-1)^l e^{-i \frac{\delta t}{2\hbar} U\left(t + \frac{\delta t}{2}, x_l\right)} \langle x_l \ket{\psi(t)}
			\right]\right\}.
\end{align}

Note that the split-operator approach discussed in this section leads to the Feynman path integral formulations for both quantum and classical mechanics \cite{Schulman2005, gozzi2015path} after appropriate transformations in Eq. \eqref{SpectralSplitOpMEq}. 

\section{Hamiltonian diagonalization via the time propagation}

In this section, we assume that the Hamiltonian $\hat{H} = K(\hat{p}) + U(\hat{x})$ is time independent. Let $| n \rangle$ denote the eigenstates, $\hat{H} | n \rangle = E_n | n \rangle$, $n=0,1,\ldots$. Then, if we propagate some initial state $ \sum_n c_n | n \rangle$, we get
\begin{align}
	| \psi(t) \rangle = e^{-it\hat{H}/\hbar} \sum_{n=0} c_n | n \rangle = \sum_{n=0} e^{-itE_n /\hbar} c_n | n \rangle. 
    \label{TimeIndependentPropExpansion}
\end{align}
\BoxedEquation{\label{EqEighViaTimeProp}
	\mathcal{F}_{t \to E}^{-1} \Big[ \langle x | \psi(t) \rangle \Big] =  \sum_{n=0} c_n \sqrt{2\pi\hbar} \delta(E-E_n) \langle x | n \rangle,
}
where we made use of the Fourier representation of the delta function \eqref{DeltFunctProperty2}.

For the sake of completeness, let us provide another derivation of Eq.~\eqref{EqEighViaTimeProp}. Starting from Eq. \eqref{TimeIndependentPropExpansion}, one obtains
\begin{align}\label{EqIntuitiveDerEqEighViaTimeProp}
	\int_{-T}^{+T}  \langle x | \psi(t) \rangle e^{iEt/\hbar} dt &= \sum_n c_n \langle x | n \rangle  \int_{-T}^{+T} e^{i(E-E_n)t/\hbar} dt
		= \sum_n c_n \langle x | n \rangle T \frac{e^{i(E-E_n)T/\hbar} - e^{-i(E-E_n)T/\hbar}}{i(E-E_n)T/\hbar} \notag\\
	&= 2 \sum_n c_n \langle x | n \rangle  T \frac{\sin[(E-E_n)T/\hbar]}{(E-E_n)T/\hbar} 
	\underset{T \to +\infty}{\longrightarrow} 2 \sum_n c_n \langle x | n \rangle \times \left\{ \infty \mbox{ if } E = E_n, \atop 0 \mbox{ otherwise.} \right.
\end{align}
Thus, the inverse Fourier transform of $\langle x | \psi(t) \rangle$ with respect to $t$ leads to the sum of spiky functions peaked at the eigenvalues. 

The calculation of the Fourier transform of the autocorrelation function, defined as $\langle \psi(0) | \psi(t) \rangle $, with the help of Eq. \eqref{DeltFunctProperty2} yields
\begin{align}
	\mathcal{F}_{t \to E}^{-1} \Big[ \langle \psi(0) | \psi(t) \rangle \Big] 
	= \sum_{n,k=0} c_k^* c_n \langle k | n \rangle \mathcal{F}_{t \to E}^{-1} \left[ e^{-itE_n/\hbar} \right] 
	= \sum_{n=0} \sqrt{2\pi\hbar} |c_n|^2 \delta(E-E_n). \notag
\end{align}
Therefore, we write
\BoxedEquation{\label{EqFTAutocorrelation}
	\mathcal{F}_{t \to E}^{-1} \Big[ \langle \psi(0) | \psi(t) \rangle \Big] 
	= \sum_{n=0} \sqrt{2\pi\hbar} |c_n|^2 \delta(E-E_n).
}
The derivation \eqref{EqIntuitiveDerEqEighViaTimeProp} can be utilized to arrive at Eq. \eqref{EqFTAutocorrelation} as well. 

Equations \eqref{EqEighViaTimeProp} and \eqref{EqFTAutocorrelation} allow for the diagonalization of the Hamiltonian with a computational complexity of $O(N^2 \log N)$ and a storage requirement of $O(N^2)$, where $N$ denotes the length of the array storing a wave function. See\footnote{\url{https://github.com/dibondar/QuantumClassicalDynamics/blob/master/eigh_via_time_propagation.ipynb}} for the implementation. Recall that linear algebra routines require $O(N^3)$ operations to diagonalize an $N \times N$ matrix.

\section{Imaginary time propagation to obtain eigenstates}

In practical applications, we need to know only the first few states (i.e., the ground and first excited states). These low lying states can be found with $O(N \log N)$ operations [and $O(N)$ storage] in the following fashion:

Let us perform the substitution (known as the Wick rotation) $t \to -i\tau$ in Eq. (\ref{TimeIndependentPropExpansion}),
\begin{align}\label{GroundStateImaginaryTime}
	| \psi(\tau) \rangle &= e^{-\tau \hat{H}/\hbar} \sum_{n=0} c_n | n \rangle = \sum_{n=0} e^{-\tau E_n /\hbar} c_n | n \rangle 
	\notag\\
    &= e^{-\tau E_0 /\hbar} \left[ c_0 | 0 \rangle + \sum_{n=1} e^{-\tau \overbrace{(E_n  - E_0)}^{\mbox{positive}} /\hbar} c_n | n \rangle \right]
	\underset{\tau\to+\infty}{\longrightarrow} \text{const}\cdot  | 0 \rangle,
\end{align} 
since $E_0 < E_1 < E_2 < \cdots$ The wave function $| \psi(\tau) \rangle$ obeys the Schr\"{o}dinger equation written in imaginary time: $-\hbar \frac{d}{d\tau} | \psi(\tau) \rangle = \hat{H}| \psi(\tau) \rangle$. According to Eq. (\ref{GroundStateImaginaryTime}), almost for \emph{any}  initial guess, we will end up with the ground state (up to a constant) if we propagate the Schr\"{o}dinger equation long enough in imaginary time. This method is also useful to evaluate excited states as shown in the following equations:
\begin{align}
	& \lim_{\tau\to+\infty} \Big( 1 - | 0 \rangle\langle 0 | \Big) | \psi(\tau) \rangle 
	= \lim_{\tau\to+\infty} \sum_{n=0} e^{-\tau E_n /\hbar} c_n \Big( | n \rangle  - | 0 \rangle\langle 0 | n \rangle \Big)
	\notag
    \\
    &= \lim_{\tau\to+\infty} \sum_{n=1} e^{-\tau E_n /\hbar} c_n | n \rangle
	= \text{const}\cdot  | 1 \rangle,
	\\
	& \lim_{\tau\to+\infty} \Big( 1 - | 0 \rangle\langle 0 | - | 1 \rangle\langle 1 | \Big) | \psi(\tau) \rangle = \text{const}\cdot  | 2 \rangle.
\end{align} 
The operator $1 - | 0 \rangle\langle 0 |$ projects out the ground state; whereas, $1 - | 0 \rangle\langle 0 | - | 1 \rangle\langle 1 |$ gets rid of the ground and first exited states simultaneously. Thus,
\BoxedEquation{
	\lim_{\tau\to+\infty} \ket{\psi(t)} \Big|_{t \to -i\tau} \propto \ket{0},
	\qquad
	\lim_{\tau\to+\infty} \left. \left( \ket{\psi(t)} - \sum_{k=0}^{n-1} | k \rangle\langle k\ket{\psi(t)} \right) \right|_{t \to -i\tau} \propto \ket{n}.
}

The imaginary time method is implemented in \footnote{\url{https://github.com/dibondar/QuantumClassicalDynamics/blob/master/demo_imag_time_propagation.ipynb}} to find  ground and first excited states.

\section{Imaginary time propagation to obtain the spectral gap}

Here, we formulate the following theorem.
\begin{theorem}\label{th:sp.gap}
    Let $H$ be a self-adjoint Hamiltonian such that its spectral decomposition reads 
   \begin{align}
       H = \sum_{n=0}^N E_n\Pi_n, \label{eq:h_decomp}
   \end{align} where $\Pi_n$ are orthogonal projectors, i.e., $H \Pi_n = E_n \Pi_n$, $\Pi_n\Pi_m = \delta_{mn}\Pi_n$, and $E_0 < E_1 < E_2 < \cdots < E_N$ are eigenenergies.  Also let \begin{align}\label{eq:imag_time_prop}
       \ket{\psi(\tau)} = \mathcal{N} e^{-\tau H}\ket{\psi(0)} = \mathcal{N}\sum_{n=0}^N e^{-\tau E_n}\Pi_n\ket{\psi(0)},
   \end{align}
   denote a state $\ket{\psi(0)}$ propagated in imaginary time $\tau$ by preserving the norm. Here, $\mathcal{N}$ is the normalization constant ensuring that $\braket{\psi(\tau)}{\psi(\tau)} = 1$ for all $\tau$.
   
   For a self-adjoint observable $O$, if $\bra{\psi(0)}\Pi_0 O\Pi_1\ket{\psi(0)}$ is not a real number, i.e.,
   \begin{align}\label{eq:condition_for_first_spectral_gap}
        \bra{\psi(0)}\Pi_0 O\Pi_1\ket{\psi(0)} \neq \bra{\psi(0)}\Pi_1 O\Pi_0\ket{\psi(0)},    
   \end{align}
   then as $\tau \to \infty$,
   \begin{align}\label{eq:first_spectral_gap}
       \ln\big|\bra{\psi(\tau)} [H,O] \ket{\psi(\tau)} \big|
        &= -\tau \Delta + \mathcal{O}(1), \qquad
        \Delta = E_1-E_0. 
   \end{align}
Furthermore, if a physical observable $O$ is such that $ \bra{\psi(0)}\Pi_0 O\Pi_1\ket{\psi(0)}$ is real, but  $\bra{\psi(0)}\Pi_0 O\Pi_2\ket{\psi(0)}$ is non real, i.e.,
\begin{align}\label{eq:condition_for_second_spectral_gap}
        \bra{\psi(0)}\Pi_0 O\Pi_1\ket{\psi(0)} = \bra{\psi(0)}\Pi_1 O\Pi_0\ket{\psi(0)} \mbox{ and}\notag\\
        \bra{\psi(0)}\Pi_0 O\Pi_2\ket{\psi(0)} \neq \bra{\psi(0)}\Pi_2 O\Pi_0\ket{\psi(0)},
\end{align}
then as $\tau \to \infty$,
\begin{align}\label{eq:second_spectral_gap}
       \ln\big|\bra{\psi(\tau)} [H,O] \ket{\psi(\tau)} \big|
        = -\tau(E_2-E_0) + \mathcal{O}(1).
\end{align}
\end{theorem}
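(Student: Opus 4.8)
The plan is to expand $\bra{\psi(\tau)}[H,O]\ket{\psi(\tau)}$ in the spectral basis and simply read off which exponential $e^{-\tau(\cdot)}$ survives as $\tau\to\infty$. Using \eqref{eq:imag_time_prop} together with $\Pi_m\Pi_n=\delta_{mn}\Pi_n$, I would first write
\begin{align}
    \bra{\psi(\tau)}[H,O]\ket{\psi(\tau)} = \mathcal{N}^2 \sum_{m,n=0}^N e^{-\tau(E_m+E_n)}\, \bra{\psi(0)}\Pi_m [H,O] \Pi_n\ket{\psi(0)}. \notag
\end{align}
The key algebraic step is the identity $\Pi_m [H,O]\Pi_n = \Pi_m(HO-OH)\Pi_n = (E_m-E_n)\Pi_m O\Pi_n$, immediate from $H\Pi_n=E_n\Pi_n$. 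Thus the diagonal $m=n$ terms vanish; setting $O_{mn}\coloneqq\bra{\psi(0)}\Pi_m O\Pi_n\ket{\psi(0)}$ and using $O=O^\dagger$ (so $O_{nm}=O_{mn}^*$), I would pair the $(m,n)$ and $(n,m)$ contributions to obtain
\begin{align}
    \bra{\psi(\tau)}[H,O]\ket{\psi(\tau)} = 2i\,\mathcal{N}^2 \sum_{0\le m<n\le N} e^{-\tau(E_m+E_n)}(E_m-E_n)\,\Im O_{mn}. \notag
\end{align}
As a consistency check this is purely imaginary, as it must be since $[H,O]$ is anti-self-adjoint.

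Next I would control the normalization. From $\braket{\psi(\tau)}{\psi(\tau)}=1$ one gets $\mathcal{N}^{-2}=\sum_n e^{-2\tau E_n}p_n$ with $p_n\coloneqq\bra{\psi(0)}\Pi_n\ket{\psi(0)}\ge 0$. Here I would observe that hypothesis \eqref{eq:condition_for_first_spectral_gap} — or \eqref{eq:condition_for_second_spectral_gap} — forces $\Pi_0\ket{\psi(0)}\neq 0$ (otherwise $O_{01}=0$, hence real), so $p_0>0$ and therefore $\mathcal{N}^2=e^{2\tau E_0}p_0^{-1}\bigl(1+\mathcal{O}(e^{-2\tau(E_1-E_0)})\bigr)$ as $\tau\to\infty$. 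Among pairs $m<n$, the smallest exponent $E_m+E_n$ is attained uniquely at $(0,1)$, the unique next smallest at $(0,2)$, and every remaining pair has exponent at least $\min(E_0+E_3,E_1+E_2)>E_0+E_2$ (using $E_0<E_1<E_2<\cdots$). Multiplying by $\mathcal{N}^2$ shifts every exponent by $-2E_0$, so the only candidate leading behaviours are $e^{-\tau(E_1-E_0)}$ with coefficient proportional to $\Im O_{01}$, then $e^{-\tau(E_2-E_0)}$ with coefficient proportional to $\Im O_{02}$, all else strictly faster-decaying.

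Finally I would match the two cases. Condition \eqref{eq:condition_for_first_spectral_gap} is exactly $O_{01}\neq O_{01}^*$, i.e. $\Im O_{01}\neq 0$, so $\bra{\psi(\tau)}[H,O]\ket{\psi(\tau)}=c\,e^{-\tau(E_1-E_0)}\bigl(1+\mathcal{O}(e^{-\tau\gamma})\bigr)$ with $c\neq 0$ and some $\gamma>0$; taking $\ln|\cdot|$ gives $-\tau\Delta+\ln|c|+o(1)=-\tau\Delta+\mathcal{O}(1)$, which is \eqref{eq:first_spectral_gap}. Under \eqref{eq:condition_for_second_spectral_gap}, the first equality forces $\Im O_{01}=0$, annihilating the $(0,1)$ term, while $\Im O_{02}\neq 0$ promotes the $(0,2)$ term to leading order, and the identical log-estimate yields \eqref{eq:second_spectral_gap}. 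The step I expect to require the most care — the only real bookkeeping — is showing that every omitted contribution (the tail of the normalization series, and all pairs other than the retained one) is genuinely suppressed by a factor $e^{-\tau\gamma}$, $\gamma>0$, relative to the surviving term, so that it washes out to $o(1)$ after the logarithm; this rests squarely on the strict ordering of the eigenvalues and on $p_0>0$, itself extracted from the hypotheses.
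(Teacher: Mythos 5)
Your proposal is correct and follows essentially the same route as the paper's proof: the same spectral expansion of $\bra{\psi(0)}e^{-\tau H}[H,O]e^{-\tau H}\ket{\psi(0)}$, the same commutator identity $\Pi_m[H,O]\Pi_n=(E_m-E_n)\Pi_mO\Pi_n$, the same observation that the hypothesis forces $\Pi_0\ket{\psi(0)}\neq 0$ to control $\mathcal{N}^2\sim e^{2\tau E_0}$, and the same identification of the leading pair $(0,1)$ (resp.\ $(0,2)$). Your rephrasing of $(E_1-E_0)(O_{10}-O_{01})$ as $2i(E_1-E_0)\Im O_{01}$ is a cosmetic reorganization of the same step, not a different argument.
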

\begin{proof}
We begin 
\begin{align}\label{eq:starting_point}
    \bra{\psi(0)} & e^{-\tau H}[H,O] e^{-\tau H}\ket{\psi(0)}  \notag\\
    =& \sum_{l,k = 0}^N e^{-\tau (E_l + E_k) }\bra{\psi(0)} \Pi_l (HO - OH) \Pi_k \ket{\psi(0)} \notag\\
    =& \sum_{l,k = 0}^N e^{-\tau (E_l + E_k) } (E_l - E_k) \bra{\psi(0)} \Pi_l O \Pi_k \ket{\psi(0)}.
\end{align}
Under the condition~\eqref{eq:condition_for_first_spectral_gap}, we get
\begin{align}\label{eq:Phi0adAsympt}
     \bra{\psi(0)} & e^{-\tau H}[H,O] e^{-\tau H}\ket{\psi(0)}  \notag\\
    =& e^{-\tau (E_0 + E_1) } \Big[(E_0 - E_1) \bra{\psi(0)} \Pi_0 O \Pi_1 \ket{\psi(0)} 
    + (E_1 - E_0) \bra{\psi(0)} \Pi_1 O \Pi_0 \ket{\psi(0)} \Big] 
    \notag\\
    &\qquad 
    + \mathcal{O}\left( e^{-\tau (E_0 + E_2) } \right) \notag\\
    =& (E_1 - E_0) e^{-\tau (E_0 + E_1) } \bra{\psi(0)} \Pi_1 O \Pi_0 
    - \Pi_0 O \Pi_1 \ket{\psi(0)}\left[ 1 + \mathcal{O}\left( e^{-\tau (E_2 - E_1) } \right)\right].
\end{align}

Condition~\eqref{eq:condition_for_first_spectral_gap} also implies that the state $\ket{\psi(0)}$ has a population in the ground state, i.e.,
\begin{align}
    & \bra{\psi(0)}\Pi_1 O\Pi_0\ket{\psi(0)} \neq 0 \Longrightarrow
    \Pi_0\ket{\psi(0)} \neq \ket{\varnothing} \Longrightarrow 
    \bra{\psi(0)} \Pi_0 \ket{\psi(0)} \neq 0,
\end{align}
where $\ket{\varnothing}$ denotes the null vector. With this in mind, we can estimate the normalization constant in Eq.~\eqref{eq:imag_time_prop} as
\begin{align}\label{eq:NAsympt}
    \mathcal{N}^2 &= \left[\bra{\psi(0)} e^{-2\tau H} \ket{\psi(0)} \right]^{-1} \notag\\
        &= \left[e^{-2\tau E_0} \bra{\psi(0)} \Pi_0 \ket{\psi(0)} + \mathcal{O}\left( e^{-2\tau E_1}\right) \right]^{-1} \notag\\
        &= e^{2\tau E_0} \mathcal{O}(1), \qquad \tau \to \infty.
\end{align}
Using Eqs.~\eqref{eq:Phi0adAsympt} and \eqref{eq:NAsympt}, we obtain
\begin{align}
    \ln & \big|\bra{\psi(\tau)} [H,O] \ket{\psi(\tau)} \big| \notag\\
    &= \ln \mathcal{N}^2 + \ln \left| \bra{\psi(0)}  e^{-\tau H}[H,O] e^{-\tau H}\ket{\psi(0)} \right| \notag\\
    & = 2 \tau E_0 + \mathcal{O}(1) -\tau (E_0 + E_1) \notag\\
    &= -\tau (E_1 - E_0) + \mathcal{O}(1),
\end{align}
thereby completing the proof of Eq.~\eqref{eq:first_spectral_gap}.

Under the condition~\eqref{eq:condition_for_first_spectral_gap}, the leading term in expansion~\eqref{eq:starting_point} reads
\begin{align}
    \bra{\psi(0)} & e^{-\tau H}[H,O] e^{-\tau H}\ket{\psi(0)} = \mathcal{O}(1) e^{-\tau (E_0 + E_2) },
\end{align}
whence Eq.~\eqref{eq:second_spectral_gap} follows.
\end{proof}

We note that Theorem~\ref{th:sp.gap} is a modification of the theorem in Ref.~\cite{leamer2023spectral} and that previously, other works have targeted the development of similar, albeit approximate, spectral gap expressions (see, e.g., Refs.~\cite{koh_effects_2016, hlatshwayo2023quantum}). It has also been long recognized that the energies of the low lying excited states can be extracted via the multi-exponential fitting of a correlation function (see, e.g., Refs.~\cite[Sec.VII.C]{ceperley_calculation_1988} and \cite{caffarel_development_1988}). However, such a fit is numerically unstable. Importantly, Eq.~\eqref{eq:first_spectral_gap} does not suffer from this drawback.

In the thermodynamic limit, quantum systems (e.g., the 2d and 3d transverse-field Ising models studied in Ref.~\cite{Lukin2024}) often have a continuum spectrum in addition to discrete energies. Equation~\eqref{eq:first_spectral_gap} is also applicable in such systems; $E_1$ is either the first excited state, if the system has at least two discrete energy levels, or the infimum of the continuous spectrum, if there is only one bound state.

\section{Absorbing boundary for time propagation in bounded potentials}\label{Sec_AbsorbibgBoundary}

If you need to propagate a system with a bounded potential (which does not grow to infinity when $x \to\pm\infty$), e.g., a soft-core potential modeling an atom, it is important to ``disable'' the periodicity built into FFT. This is done using a numerical trick known as \emph{the absorbing boundary}, which constitutes in  altering  the potential part of the split-operator propagator (\ref{SpectralSplitOpMEq}) as 
\BoxedEquation{\label{HowAbsBoundWorks}
	e^{-i \delta t U(t, x)/ (2\hbar) } \to e^{-i \delta t U(t, x)/ (2\hbar) } w(x), \qquad 0 \leq w(x) \leq 1,
}
where $w(x)$ similar to a window function. To understand its physical meaning, consider an identity
\begin{align}
	e^{-i \delta t U(t, x)/ (2\hbar) } w(x) = e^{-i \delta t U(t, x)/ (2\hbar) + \log w(x) }; \notag
\end{align}
thus, the replacement (\ref{HowAbsBoundWorks}) is equivalent to
\begin{align}\label{EqAbsBoundaryCoreection}
	U(x, t) \to U(x, t) -i B(x), \qquad B(x) = -\left( 2 \hbar / \delta t \right) \log w(x)  \geq 0.
\end{align}
adding an imaginary potential, which imitates absorption.

If $B(x)$ is vanishingly small everywhere except the boundary, then according to Eq. (\ref{HowAbsBoundWorks}), the wave packet approaching the boundary gets exponentially suppressed/absorbed, hence the method's name. Note that it is crucial to choose $w(x)$ so as not to affect the dynamics occurring in the middle of the coordinate grid (in particular, the Ehrenfest theorems are numerically obeyed, \emph{etc.}); at the same time, the function $w(x)$ should vary smoothly such that it performs as an absorber rather than a reflector. Regarding an example of numerical implementation, see\footnote{\url{https://github.com/dibondar/QuantumClassicalDynamics/blob/master/absorbing_boundary.ipynb}}.

\section{Evolution in systems with an internal degree of freedom}\label{Sec:PauliLikeEqu}

We shall now turn our attention to the quantum system an additional internal degree of freedom such that the Hamiltonian can be written as
\begin{align}\label{Eq:PauliLikeHamiltonian}
	\hat{H}(t) = \sum_{j=0}^3 [ K_j(t, \hat{p}) + U_j(t, \hat{x})] \sigma_j,
\end{align}
where $\sigma_j$ are the celebrated Pauli matrices
\begin{align}
	\sigma_0 = \left( 
		\begin{array}{cc}
			1 & 0 \\
			0 & 1 \\
		\end{array}
	\right),
	\quad
	\sigma_1 = \left( 
		\begin{array}{cc}
			0 & 1 \\
			1 & 0 \\
		\end{array}
	\right),	
	\quad
	\sigma_2 = \left( 
		\begin{array}{cc}
			0 & -i \\
			i & 0 \\
		\end{array}
	\right),
	\quad	
	\sigma_3 = \left( 
		\begin{array}{cc}
			1 & 0 \\
			0 & -1 \\
		\end{array}
	\right).
\end{align}

We calculate the unitary propagator $\hat{U}(t,t')$ for the Hamiltonian \eqref{Eq:PauliLikeHamiltonian}
\begin{align}\label{Eq:SecondOrdPauli}
	\hat{U}(t + \delta t, t) &= \exp\left[- \frac{i}{\hbar} \int_{t}^{t + \delta t} \hat{H}(\tau) d\tau \right] \notag\\
	&=\exp\left[- \frac{i}{\hbar} \hat{H}\left( t + \frac{\delta t}{2} \right) \delta t \right] + O\left( \delta t^3 \right) \qquad\qquad \mbox{[using Eqs. (\ref{EqTrapezoidalIntegralRule}) and \eqref{EqTimeOrderingDropping}]} \notag\\	
	&=  \exp\left[ -i \frac{\delta t}{2\hbar} \sum_{j=0}^3 \sigma_j U_j\left(t + \frac{\delta t}{2}, \hat{x}\right) \right] 
			\exp\left[ -i \frac{\delta t}{\hbar} \sum_{j=0}^3 \sigma_j K_j\left(t + \frac{\delta t}{2}, \hat{p}\right) \right]  \notag\\
	& \quad \times \exp\left[ -i \frac{\delta t}{2\hbar} \sum_{j=0}^3 \sigma_j U_j\left(t + \frac{\delta t}{2}, \hat{x}\right) \right]
		+ O\left( \delta t^3 \right).
	\qquad \mbox{[using Eqs. (\ref{SecondOrderSplitting}) and \eqref{Eq:PauliLikeHamiltonian}]}	
\end{align}
Similarly to Eq.~\eqref{SecondOrderTexpPartEq}, if we ignore the $O\left( \delta t^3 \right)$ term in r.h.s. of Eq.~\eqref{Eq:SecondOrdPauli}, the remaining expression is unitary. Using the identity
\begin{align}
	e^{ia\sum_{j=1}^3 n_j \sigma_j} = \cos(a) + i\sin(a) \sum_{j=1}^3 n_j \sigma_j,
	\quad
	\sum_{j=1}^3 n_j^2 = 1,
\end{align}
we introduce an auxiliary operator  $\mathcal{P}$  acting on a two-component vector $\left(\psi_1, \psi_2 \right)^T$ as 
\begin{align}
	\mathcal{P}(a; c_j)  \left({\psi_1 \atop \psi_2}\right) =&
	e^{ ia \sum_{j=0}^3 c_j \sigma_j }  \left(\psi_1 \atop \psi_2 \right)\notag\\
    =&
		e^{ia c_0} \left( 
			\cos(ab)\psi_1 + i \sin(ab)\left(c_3 \psi_1 + [c_1 -ic_2]\psi_2\right) / b  
		\atop 
			\cos(ab) \psi_2 + i \sin(ab)\left([c_1 + ic_2]\psi_1 - c_3 \psi_2 \right) / b
		\right),
\end{align}
where $b = \sqrt{ c_1^2 + c_2^2 + c_3^2 }$.
Using the coordinate representation in Eq. \eqref{Eq:SecondOrdPauli}, we obtain
\begin{align}
	\langle x \ket{\psi(t + \delta t)} &= \bra{x} \hat{U}(t + \delta t, t) \ket{\psi(t)} 
		=  \mathcal{P}_x \int dp \langle x | p \rangle \mathcal{P}_p  \int dx' \langle p | x' \rangle  \mathcal{P}_{x'} \langle x' \ket{\psi(t)} + O\left( \delta t^3 \right) \notag\\
		&= \mathcal{P}_x  \mathcal{F}_{p \to x'}^{-1}\left[ \mathcal{P}_p  \mathcal{F}_{x' \to p} \left[ \mathcal{P}_{x'} \langle x' \ket{\psi(t)} \right] \right] + O\left( \delta t^3 \right), 
\end{align}
where
\begin{align}
		\mathcal{P}_x = \mathcal{P}\left[  -\frac{\delta t}{2\hbar}; U_j\left(t + \frac{\delta t}{2}, x\right) \right], \quad
		\mathcal{P}_p = \mathcal{P}\left[  -\frac{\delta t}{\hbar}; K_j\left(t + \frac{\delta t}{2}, p\right) \right].
\end{align}
This propagator has been implemented in \footnote{\url{https://github.com/dibondar/QuantumClassicalDynamics/blob/master/split_op_pauli-like1D.py}}.

\section{Translationally-invariant lattice systems}\label{Sec:SolidState}

Assume that the potential $U(x)=U(x+a)$ is a periodic function with $a$ denoting the \emph{lattice constant}. The system is invariant under the translation $x \to x + a$. 
Recall Eq. \eqref{DisplacementOp}: $e^{a\frac{\partial}{\partial x}} f(x) =  f(x + a)$. Hence, the corresponding translation operator acting on the wavefunction~$\psi(x)$ reads::
\begin{align}
	\hat{T}_a \psi(x) = \psi(x + a), \qquad \hat{T}_a = e^{a\frac{\partial}{\partial x}} = e^{ia\frac{-i\hbar}{\hbar} \frac{\partial}{\partial x}} = e^{ia\hat{p}/\hbar}.
\end{align}
The physical eigenfunction~$\psi(x)$ of this lattice system is assumed to be an eigenvector of the translation operator $\hat{T}_a $,
\begin{align}\label{EigenFuncTa}
	\hat{T}_a  \psi(x) = \lambda_T \psi(x).
\end{align}
What are the physically permissible values of $\lambda$? If $|\lambda_T| > 1$, then the repeated application $\hat{T}_a$ onto the wavefunction $\psi$ increases its magnitude. If $|\lambda_T| <1$, then the action of $\hat{T}_a$ onto $\psi$ decreases its magnitude. In both cases, the translational symmetry is broken; thus, the only permissible form for the eigenvalues  $\lambda_T$ of the translation operators is $|\lambda_T| = 1$. The latter can be re-written as $\lambda_T = e^{ika}$, where $\hbar k$ is called \emph{quasimomentum}.

Since $\lambda_T$ is unchanged under the substitution $k \to k + 2\pi/a$, the physically accessible region of quasimomenta is confined to the interval $(-2\pi\hbar/a, 2\pi\hbar/a)$.
  
Consider the case of a one-dimensional lattice system consisting of a finite number ($N$) of elementary cells. The translation invariance is preserved only if the system has a ring geometry obeying the periodic  boundary condition
\begin{align}
	\psi(x + Na) = \psi(x) 
	\Longrightarrow \lambda_T^N = e^{ikaN} = 1.
\end{align}
The latter equation has the following $N$ distinct solutions: $k = 2\pi n / (aN)$, $n = 0,\ldots, N-1$. Therefore, the number of elementary cells $N$ coincides with the number of physically attainable values of quasimomenta.

Let us represent the eigenfunction $\psi(x)$ of the translation operator (\ref{EigenFuncTa}) as
\begin{align}\label{BlochThrm1}
	\psi(x) = e^{ikx} \phi(x).
\end{align}
Then
\begin{align}
	\hat{T}_a \psi(x) = \psi(x+a) = e^{ika} \psi(x)
	\Longrightarrow
	e^{ikx} e^{ika} \phi(x + a) = e^{ika} e^{ikx} \phi(x)
	\Longrightarrow
	\phi(x + a) = \phi(x).
\end{align}
In other words, the function $\phi(x)$ is periodic.  Equation (\ref{BlochThrm1}) is the celebrated \emph{Bloch theorem}. 

Now, we aim to find a common set of eigenvectors for the translation operator $\hat{T}_a$ and the Hamiltonian $\hat{H} = H(\hat{x}, \hat{p})$. As we will see below, such eigenstates need to be labeled by two quantum numbers: the quasimomentum $k$ and the band index $n$. Assuming the resolution of identity
\begin{align}
	1 = \int_0^a dx | x \rangle\langle x |,
\end{align}
we rewrite the Bloch theorem \eqref{BlochThrm1} as
\begin{align}
	\langle x | \psi_{k,n} \rangle = e^{ikx} \langle x | \phi_{k,n} \rangle
	= \langle x | e^{ik\hat{x}} | \phi_{k,n} \rangle
	\Longrightarrow
	| \psi_{k,n} \rangle = e^{ik\hat{x}} | \phi_{k,n} \rangle,
\end{align}
which is an eigenvector of the translation operator.
Employing Eq. (\ref{Eq_U_Weyl_U_inv}), let us find $E_n(k)$ such that
\begin{align}\label{Eq_main_method_find_band_structure}
	\hat{H} | \psi_{k,n} \rangle = E_n(k) | \psi_{k,n} \rangle,
\end{align}
which gives us
\begin{align}\label{Eq_main_method_find_band_structure2}
	\hat{H} e^{ik\hat{x}} | \phi_{k,n} \rangle = E_n(k) e^{ik\hat{x}} | \phi_{k,n} \rangle 
\end{align}
\begin{align}\label{Eq_main_method_find_band_structure3}
	E_n(k)  | \phi_{k,n} \rangle = e^{-ik\hat{x}} H(\hat{x}, \hat{p}) e^{ik\hat{x}} | \phi_{k,n} \rangle 
	=   H\left(\hat{x}, e^{-ik\hat{x}} \hat{p} e^{ik\hat{x}} \right)  | \phi_{k,n} \rangle.
\end{align}
We can summarize:
\BoxedEquation{
	\left\{
		\hat{T}_a | \psi_{k,n} \rangle = e^{ika} | \psi_{k,n} \rangle,  \atop
		\hat{H} | \psi_{k,n} \rangle = E_n(k) | \psi_{k,n} \rangle,
	\right.
	\Longleftrightarrow 
	\left\{
		\hat{T}_a  | \phi_{k,n} \rangle = | \phi_{k,n} \rangle, \atop
		H\left(\hat{x}, \hat{p} + \hbar k \right)  | \phi_{k,n} \rangle = E_n(k)  | \phi_{k,n} \rangle.
	\right.
}
Note that in the last step \eqref{Eq_main_method_find_band_structure3}, we have used the Hausdorff expansion
\begin{equation}\label{HausdorffExpansionEq}
    e^{\hat{A}} \hat{B} e^{-\hat{A}} = \hat{B} + [ \hat{A}, \hat{B}] + \frac{1}{2!} [ \hat{A}, [ \hat{A}, \hat{B}]] + \cdots 
		= \sum_{n=0}^{\infty} \frac{1}{n!} \underbrace{[ \hat{A}, [ \hat{A}, [\cdots [ \hat{A}}_{\mbox{$n$-times}}, \hat{B}] \cdots ].
\end{equation}

In conclusion, for a fixed value of the quasimomentum $k$, we determine $| \phi_{k,n} \rangle$ and $E_n(k)$ as eigenfunctions and eigenvalues (labeled by $n$) of the operator $H\left(\hat{x}, \hat{p} + \hbar k \right)$ with the periodic boundary condition.
Equations (\ref{Eq_main_method_find_band_structure})--(\ref{Eq_main_method_find_band_structure3}) are employed in numerical simulations to calculate the band structure\footnote{\url{https://github.com/dibondar/QuantumClassicalDynamics/blob/master/solid_state_band_structure.py}}.

\chapter{6. Classical dynamics}\label{Chapter:6}

\section{Inference of classical Liouville equation from Ehrenfest theorems}\label{ODMLiouvilleSec} 

We are back to classical mechanics.

Let $\hat{x}$ and $\hat{p}$ be self-adjoint operators representing the \emph{classical} coordinate and momentum observables. The commutation relation 
\begin{align}\label{Ch2_ClassicalCommutator2}
	\commut{\hat{x}, \hat{p}}_{\textrm{class.}} = 0,
\end{align}
encapsulates two basic experimental facts of classical kinematics: i) the position and momentum can be measured simultaneously with arbitrary accuracy, ii) the observed values do not depend on the order of performing the measurements. In terms of our axioms, the dynamical observations of the classical particle's position and momentum are summarized in 
\begin{align}\label{EhrenfestThs}
	m\frac{d}{dt} \bra{\Psi(t)} \hat{x} \ket{\Psi(t)} & = \bra{\Psi(t)} \hat{p} \ket{\Psi(t)}, \notag\\
	\frac{d}{dt} \bra{\Psi(t)} \hat{p} \ket{\Psi(t)} & = \bra{\Psi(t)} -U'(\hat{x}) \ket{\Psi(t)}.
\end{align}

We now derive the equation of motion for a classical state. The application of the chain rule to Eq. (\ref{EhrenfestThs}) gives
\begin{align}\label{Ch2_ExpandedClassicsErhenfest_Th}
	\bra{d\Psi/dt} \hat{x} \ket{\Psi} + \bra{\Psi} \hat{x} \ket{d\Psi/dt} &= \bra{\Psi} \hat{p}/m \ket{\Psi}, \notag\\
	\bra{d\Psi/dt} \hat{p} \ket{\Psi}  + \bra{\Psi} \hat{p} \ket{d\Psi/dt} & = \bra{\Psi} -U'(\hat{x}) \ket{\Psi},
\end{align}
into which we substitute a consequence of Stone's theorem 
\begin{align}\label{KvNEquation}
	i \ket{d\Psi(t)/dt} = \hat{L} \ket{\Psi(t)},
\end{align}
and obtain
\begin{align}\label{Avaraged_Equations_for_Liouvillian}
	im \bra{\Psi(t)} \commut{\hat{L}, \hat{x} } \ket{\Psi(t)} &= \bra{\Psi(t)} \hat{p} \ket{\Psi(t)}, \notag\\
	i \bra{\Psi(t)} \commut{\hat{L}, \hat{p}} \ket{\Psi(t)} &= - \bra{\Psi(t)} U'(\hat{x}) \ket{\Psi(t)}.
\end{align}
Since Eq. (\ref{Avaraged_Equations_for_Liouvillian}) must be valid for all possible initial states, the averaging can be dropped, and we have the  system of commutator equations for the motion generator $\hat{L}$,
\begin{align}\label{Ch2_Equations_for_Liouvillian}
	im  \commut{\hat{L}, \hat{x}}  =  \hat{p} , \qquad i \commut{\hat{L}, \hat{p}} = -U'(\hat{x}).
\end{align}
Since $\hat{p}$ and $\hat{x}$ commute, the solution $\hat{L}$ cannot be found by simply assuming $\hat{L} = L(\hat{x}, \hat{p})$; otherwise, Eq. (\ref{Ch2_Equations_for_Liouvillian}) immediately leads to the contradiction: $0 = \hat{p}$ and $0 =  -U'(\hat{x})$.

Therefore, we add into consideration two new operators $\hat{\lambda}$ and $\hat{\theta}$ such that
\begin{align}\label{Complete_classical_algebra}
	\commut{ \hat{x}, \hat{\lambda} } = \commut{ \hat{p}, \hat{\theta} } = i, 
\end{align}
and the other commutators among $\hat{x}$, $\hat{p}$, $\hat{\lambda}$, and $\hat{\theta}$ vanish. (A curious fact is that the operators $\hat{\lambda}$ and $\hat{\theta}$ were first discovered in the phase space representation of quantum mechanics and known as the Bopp operators \cite{Bopp1956}.) The need to introduce auxiliary operators arises in classical dynamics because all the observables commute;  hence, the notion of an individual trajectory can be introduced. Moreover, the choice of the commutation relationships (\ref{Complete_classical_algebra}) is unique. Equation (\ref{Complete_classical_algebra}) can be considered as an additional axiom. Now we seek the generator $\hat{L}$ in the form $\hat{L} = L(\hat{x}, \hat{\lambda}, \hat{p}, \hat{\theta})$. Utilizing Eq. (\ref{Weyl_derivative_of_function}), we convert the commutator equations (\ref{Ch2_Equations_for_Liouvillian}) into the differential equations
\begin{align}
	m L'_{\lambda} (x, \lambda, p, \theta) = p, \qquad L'_{\theta} (x, \lambda, p, \theta) = -U'(x),
\end{align}
from which, the generator of classical dynamics $\hat{L}$ is found to be
\begin{align}\label{KvNGenerator}
	\hat{L} = \hat{p} \hat{\lambda} / m - U'(\hat{x}) \hat{\theta} + f(\hat{x}, \hat{p}),
\end{align}
where $f(x,p)$ is an arbitrary real-valued function.
Equations (\ref{KvNEquation}), (\ref{Complete_classical_algebra}), and (\ref{KvNGenerator}) represent classical dynamics in an abstract form. 

Let us find the equation of motion for $|\langle p \, x \ket{\Psi(t)}|^2$ by rewriting Eq. (\ref{KvNEquation}) in the $xp$-representation, in which
\begin{align}\label{ClassicalXPRepresentation}
	\langle x \, p | \hat{x} |\Psi \rangle = x \langle x \, p |\Psi \rangle,
	\quad
	\langle x \, p | \hat{\lambda}  |\Psi \rangle = -i\frac{\partial}{\partial x} \langle x \, p |\Psi \rangle,
    \notag\\
	\quad
	\langle x \, p | \hat{p} |\Psi \rangle = p \langle x \, p |\Psi \rangle,
	\quad
	\langle x \, p | \hat{\theta}  |\Psi \rangle = -i\frac{\partial}{\partial p} \langle x \, p |\Psi \rangle.
\end{align}
[see Eq. (\ref{EqSummaryCommutatorSection}) for justification],
\begin{align}\label{PreLiouvillianEq}
		\left[ i\frac{\partial }{\partial t} +i\frac{p}{m} \frac{\partial}{\partial x} - i U'(x) \frac{\partial}{\partial p} - f(x,p) \right]  \langle x \, p \ket{\Psi(t)} = 0.
\end{align}
Let's explicitly work it out the following derivative
\begin{align}\label{FromKvNToLiouvilleDerivation}
	\frac{\partial }{\partial t} |\langle x \, p \ket{\Psi(t)}|^2 
	=& \langle x \, p | \Psi(t) \rangle^* \frac{\partial }{\partial t} \langle x \, p | \Psi(t) \rangle  + \langle x \, p | \Psi(t) \rangle \left( \frac{\partial }{\partial t}  \langle x \, p | \Psi(t) \rangle \right)^*
	\notag\\
	=&  \langle x \, p | \Psi(t) \rangle^* \left[-\frac{p}{m} \frac{\partial}{\partial x} + U'(x) \frac{\partial}{\partial p} -i f(x,p) \right]  \langle x \, p \ket{\Psi(t)}
	\notag\\
	& + \langle x \, p | \Psi(t) \rangle \left[-\frac{p}{m} \frac{\partial}{\partial x} + U'(x) \frac{\partial}{\partial p} +i f(x,p) \right]  \langle x \, p \ket{\Psi(t)}^*
	\notag\\
	=&  - \frac{p}{m} \langle x \, p | \Psi(t) \rangle^* \frac{\partial}{\partial x}  \langle x \, p | \Psi(t) \rangle -  \frac{p}{m} \langle x \, p | \Psi(t) \rangle  \frac{\partial}{\partial x}  \langle x \, p | \Psi(t) \rangle^* 
	\notag\\
	&+ U'(x) \langle x \, p | \Psi(t) \rangle^* \frac{\partial}{\partial p} \langle x \, p | \Psi(t) \rangle + U'(x) \langle x \, p | \Psi(t) \rangle \frac{\partial}{\partial p} \langle x \, p | \Psi(t) \rangle^* 
	\notag\\
	=& - \frac{p}{m} \frac{\partial}{\partial x} \left| \langle x \, p | \Psi(t) \rangle \right|^2 + U'(x) \frac{\partial}{\partial p} \left| \langle x \, p | \Psi(t) \rangle \right|^2. 
\end{align}
Therefore, Eq.~(\ref{PreLiouvillianEq}) yields the well known classical Liouville equation for the probability distribution in phase-space $\rho(x,p;t) = |\langle p \, x \ket{\Psi(t)}|^2$,
\begin{align}\label{Liouville_Eq}
	\frac{\partial }{\partial t} \rho(x,p;t)  = \left[ -\frac{p}{m} \frac{\partial}{\partial x} +  U'(x) \frac{\partial}{\partial p}  \right]  \rho(x,p;t).
\end{align}
Koopman and von Neumann \cite{Koopman1931, Neumann1932} (see Ref. \cite{DaniloMauro2002} for a great introduction) pioneered the recasting of classical mechanics in a form similar to quantum mechanics by introducing classical complex valued wave functions and representing associated physical observables by means of commuting self-adjoint operators (for modern developments and applications see Refs. \cite{Gozzi1988, Gozzi1989, Wilkie1997, Wilkie1997a, Gozzi2002, DaniloMauro2002, Deotto2003, Deotto2003a, Abrikosovjr2005, Blasone2005, Brumer2006, Carta2006, Gozzi2010, Gozzi2011, Cattaruzza2011}).

Thus, we have deduced the classical Liouville equation (\ref{Liouville_Eq}) along with the Koopman-von Neumann theory [given by Eqs. (\ref{KvNEquation}) and (\ref{KvNGenerator})] from Eq. (\ref{EhrenfestThs}) by assuming that the classical momentum and coordinate operators commute. Note that the function $f(x, p)$ in Eqs. (\ref{KvNGenerator}) and (\ref{PreLiouvillianEq}) disappeared when we were calculating the equation of motion for the density [see Eq. (\ref{FromKvNToLiouvilleDerivation})]. This suggests that $f(x, p)$ is in fact a phase of the Koopman--von Neumann wave function, moreover it is physically unnecessary in dynamics [Eq. (\ref{Liouville_Eq})] as well as kinematics
\begin{align}
	\langle O (x,p) \rangle = \langle \Psi | O (\hat{x}, \hat{p}) | \Psi \rangle = \int dxdp \langle \Psi | x\, p \rangle O (x, p) \langle x\, p | \Psi \rangle =  \int dxdp \, \rho(x,p) O(x,p).
\end{align}

\begin{figure}
    \centering
	\includegraphics[width=0.5\hsize]{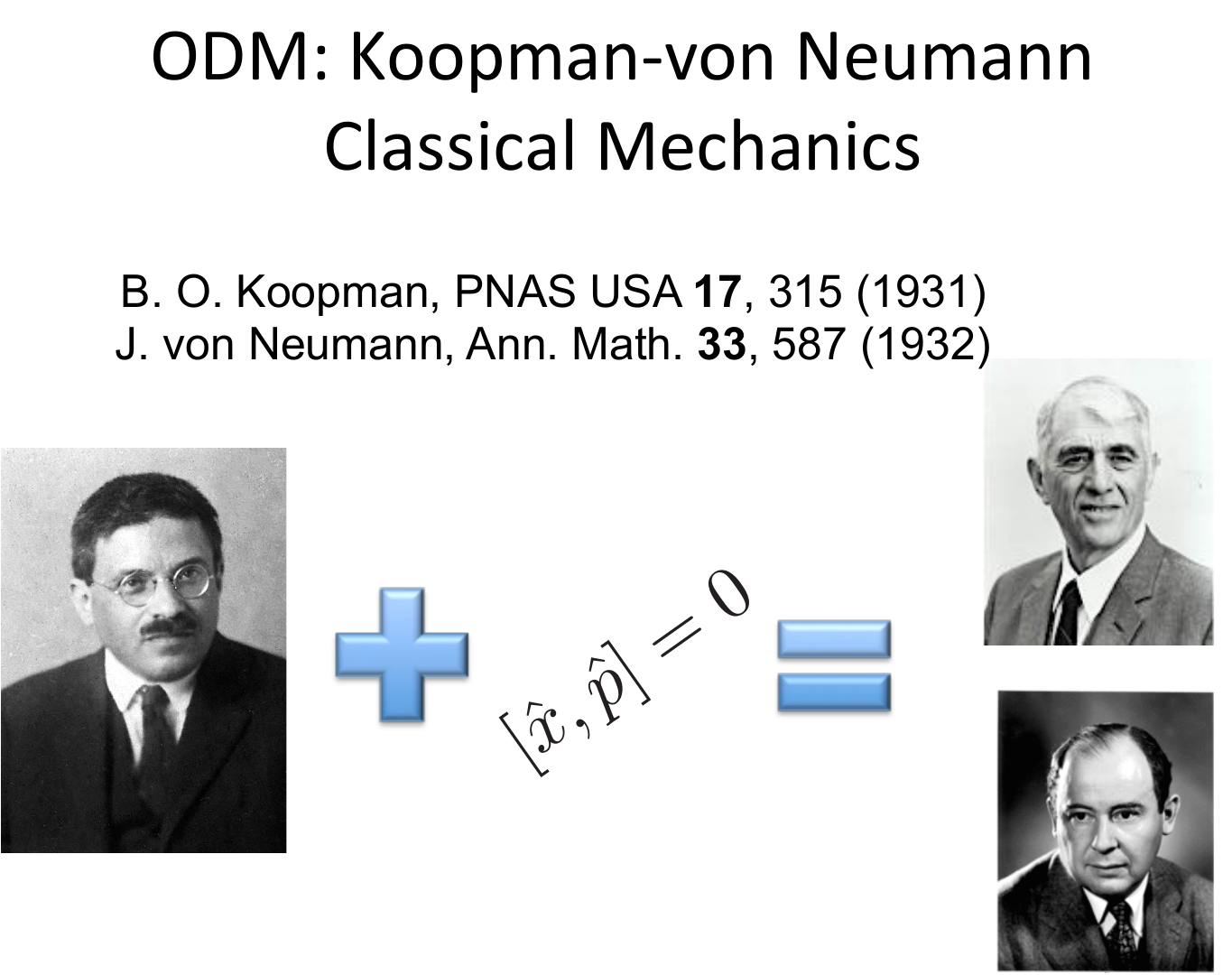}
    \caption{How we derive the Koopman-von Neumann classical mechanics in Sec. \ref{ODMLiouvilleSec}}
	\includegraphics[width=0.6\hsize]{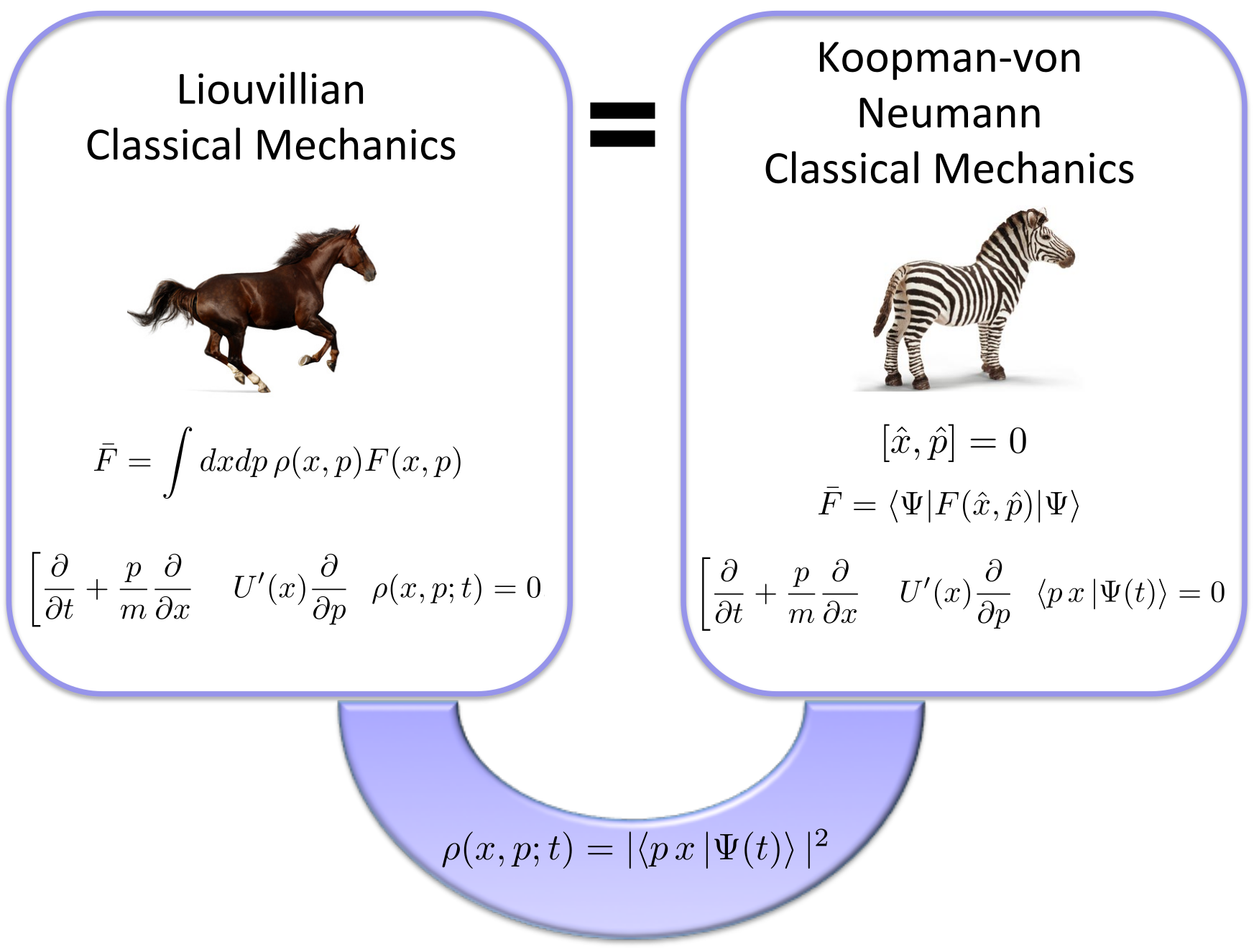}
    \caption{Liouville and  Koopman-von Neumann  mechanics are like a horse and a zebra: at first, they look very different, but after a closer inspection, they are physically the same up to coloring.}
    \includegraphics[width=0.8\hsize]{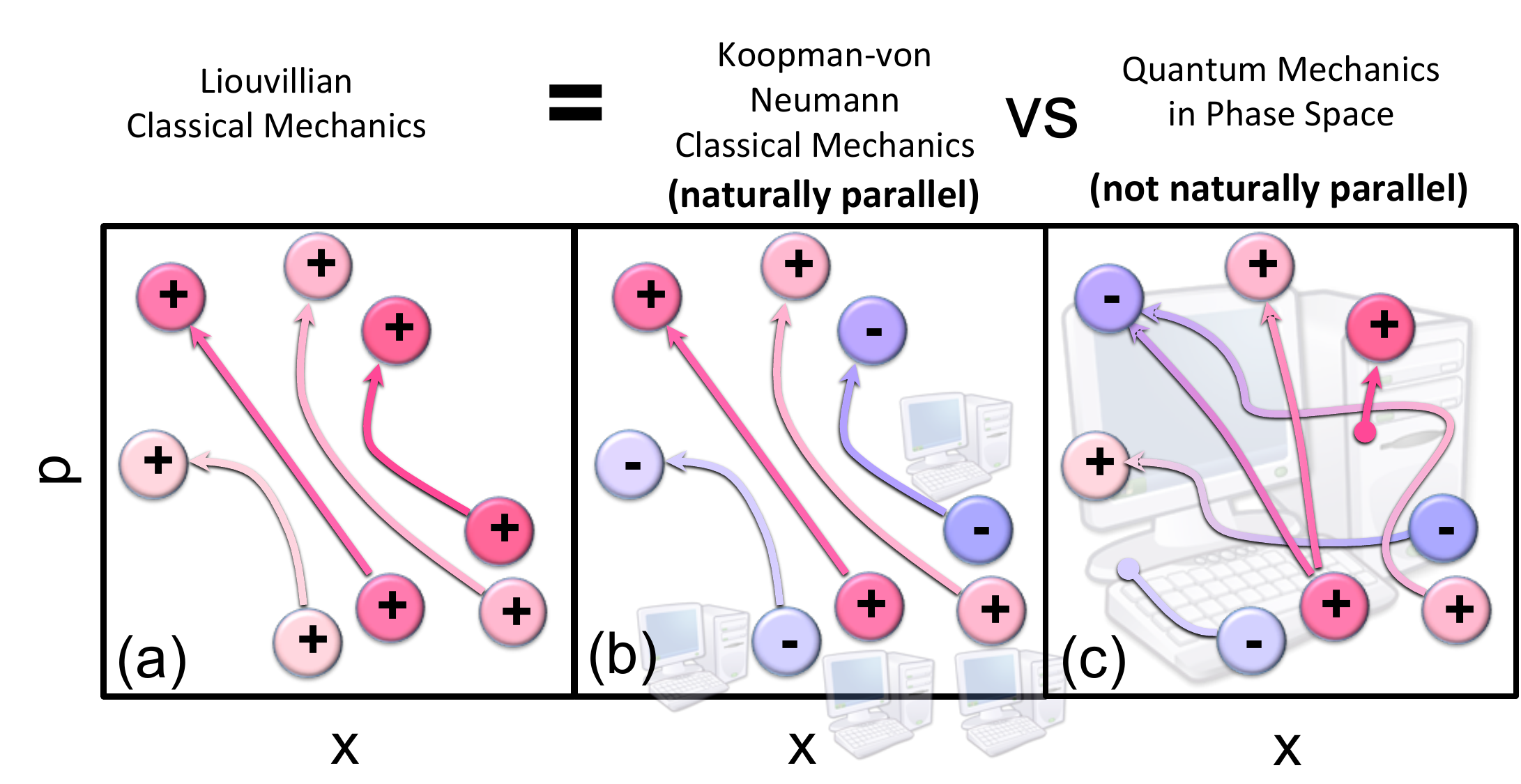}
    \caption{Comparison of (a) classical Liouville mechanics,  (b) Koopman-von Neumann classical mechanics, and (c) the Wigner phase space representation of quantum mechanics.}\label{Fig_L_KvN_QM}
\end{figure}

\section{From Liouville to Newton and back}\label{Sec:Liouville2Newton}

Let us derive Newton's equations from the Liouville equation (\ref{Liouville_Eq}). Assume a classical ensemble has only one particle moving in the phase space along a trajectory $x_c(t)$ and $p_c(t)$. Then, the Liouville probability distribution for such a degenerate ensemble reads
\begin{align}\label{ClassProbDistributionSingleParticle}
	\rho(x,p; t) = \delta[ x - x_c(t)] \delta[p - p_c(t)].
\end{align}
Substituting Eq. (\ref{ClassProbDistributionSingleParticle}) into Eq. (\ref{Liouville_Eq}), we get 
\begin{align}
	\frac{\partial}{\partial t} \rho(x,p; t) =& -\delta'[ x - x_c(t)] \delta[p - p_c(t)] \frac{dx_c(t)}{dt} -  \delta[ x - x_c(t)] \delta'[p - p_c(t)] \frac{dp_c(t)}{dt}
	\notag\\
	=&  -\frac{p}{m} \delta'[ x - x_c(t)] \delta[p - p_c(t)]  + U'(x) \delta[ x - x_c(t)] \delta'[p - p_c(t)].
\end{align}
Now using Eq. (\ref{DeltFunctPropert3})
\begin{align}
	 \left(\frac{dx_c(t)}{dt}  -\frac{p_c(t)}{m}\right) \delta'[ x - x_c(t)] \delta[p - p_c(t)]  
	 + \left(\frac{dp_c(t)}{dt} + U'[x_c(t)]\right)  \delta[ x - x_c(t)] \delta'[p - p_c(t)] =0.
\end{align}
The fundamental property of the Dirac delta function (\ref{MostGeneralFormofDistribution}) immediately leads us to Newton's equations
\begin{align}\label{NewtonEqs}
	 \frac{dx_c(t)}{dt}  = \frac{p_c(t)}{m}, \qquad  \frac{dp_c(t)}{dt} = -U'[x_c(t)].
\end{align}

One can go back from Newton's equations to Liouville equation (\ref{Liouville_Eq}) by assuming that
\begin{align}
	\rho(x,p; t)  = \sum_c w_c \delta[ x - x_c(t)] \delta[p - p_c(t)],
\end{align}
where $w_c$ are weights, and each trajectory $x_c(t)$, $p_c(t)$ obeys Eq. (\ref{NewtonEqs}).

\section{Symplectic integrators for classical dynamic propagation}\label{Sec:SymplecticPropagator}

This section closely follows Sec. \ref{Sec:SplitOpSchrodinger}, where the split-operator technique for propagating Schr\"{o}\-din\-ger equation was developed. 

First, let us consider slightly more general classical dynamics, characterized by the following Newton's equations
\begin{align}\label{TImeDependentClassSys}
	 \frac{dx(t)}{dt}  = K'[p(t), t], \qquad  \frac{dp(t)}{dt} = -U'[x(t), t].
\end{align}
Note that both right hand sides are time-dependent. We first convert it to the time-independent dynamics by increasing the number of variables (see, e.g., Ref. \cite{howland1980two}).  In particular, we introduce variables $s_x$ (a fictitious coordinate) and $s_p$ (a  fictitious momentum) such that
\begin{align}
	\frac{d}{dt} \left( x(t) \atop s_x(t) \right) = \left( K'[p(t), s_p(t)] \atop 1 \right),  \label{EqClassicalExtendedX}\\
	\frac{d}{dt} \left( p(t) \atop s_p(t) \right) = \left( -U'[x(t), s_x(t)] \atop 1 \right).  \label{EqClassicalExtendedP}
\end{align}
with the initial condition $s_{x,p}(0) = 0$. Formally speaking, the right hand sides of equations (\ref{EqClassicalExtendedX}) and (\ref{EqClassicalExtendedP})  do not explicitly depend on time $t$; there is an implicit time dependence through the coordinates and momenta.  The variable $s_p$ acts as a new momentum variable, whereas $s_x$ can be interpreted as associated coordinate variable. In other words, the system (\ref{TImeDependentClassSys}) with an explicitly time-dependent r.h.s. has been reduced to the system (\ref{EqClassicalExtendedX}) and (\ref{EqClassicalExtendedP}) with a time independent r.h.s.. Note that the equations (\ref{EqClassicalExtendedX}) and (\ref{EqClassicalExtendedP})  have the same structure as Eq.~(\ref{TImeDependentClassSys}): The r.h.s. of the  equations of motion for coordinate [Eq. (\ref{EqClassicalExtendedX})] exclusively depends on the momenta, whereas the equations of motion for momenta [Eq. (\ref{EqClassicalExtendedP})] solely depend on the coordinates. 

Given the fact that any time-dependent system can be transformed to the time-independent one, we shall focus on the development of numerical methods exclusively for the latter case.

The following Newton's equations
\begin{align}\label{NewtonEqForNumMethod}
	 \frac{dx(t)}{dt}  = K'[p(t)], \qquad  \frac{dp(t)}{dt} = -U'[x(t)].
\end{align}
can be represented by the Koopman-von Neumann equation
\begin{align}\label{KvNEqForNumMethod}
	i \frac{d}{dt} | \Psi(t) \rangle =  \hat{L} | \Psi(t) \rangle, \qquad \hat{L} = K'(\hat{p}) \hat{\lambda} - U'(\hat{x}) \hat{\theta}.
\end{align}
Equations (\ref{NewtonEqForNumMethod}) can be obtained from Eq. (\ref{KvNEqForNumMethod}) using the method presented in Sec. \ref{Sec:Liouville2Newton}. Let us present a more simple derivation: Eq. (\ref{NewtonEqForNumMethod}) can be obtained as the Heisenberg dynamical equation generated by $\hat{L}$,
\begin{align}\label{NewtonEqAsHeisenbergKvNEq}
	 \frac{d\hat{x}}{dt}  = i[ \hat{L}, \hat{x}] =   K'(\hat{p}), \qquad  \frac{d\hat{p}}{dt} = i[ \hat{L}, \hat{p}] = -U'(\hat{x}).
\end{align}

Note that this dynamics has the following conserved property, known as the total energy,
\begin{align}
	\hat{H} = K(\hat{p}) + U(\hat{x}), \qquad
	[\hat{H}, \hat{L}] = iU'(\hat{x}) K'(\hat{p}) -i K'(\hat{p}) U'(\hat{x}) = 0.
\end{align}
Equation (\ref{DisplacementOp}) represents the displacement operator.

Similarly to the time-independent Schr\"{o}dinger equation, the solution of Eq. (\ref{KvNEqForNumMethod}) reads 
\begin{align}
	 | \Psi(t) \rangle = e^{-i\hat{L} t}  | \Psi(0) \rangle 
	 \Longrightarrow  | \Psi(t) \rangle = e^{i[K'(\hat{p}) \hat{\lambda} - U'(\hat{x})\hat{\theta}]\delta t}  | \Psi(t+\delta t) \rangle.
\end{align}
Using Eqs. (\ref{SecondOrderSplitting}), (\ref{ClassicalXPRepresentation}), and (\ref{DisplacementOp}), we obtain
\begin{align}
	\langle x \, p | \Psi(t) \rangle &=  \langle x \, p | e^{-i\delta t U'(\hat{x})\hat{\theta}/2} e^{i\delta t K'(\hat{p}) \hat{\lambda}} e^{-i\delta t U'(\hat{x})\hat{\theta}/2} | \Psi(t+\delta t) \rangle + O\left( \delta t^3 \right)  
	\notag\\
	&=  e^{-\frac{\delta t}{2} U'(x)\frac{\partial}{\partial p}} \langle x \, p | e^{i\delta t K'(\hat{p}) \hat{\lambda}} e^{-i\delta t U'(\hat{x})\hat{\theta}/2} | \Psi(t+\delta t) \rangle + O\left( \delta t^3 \right)
	\notag\\
	&=  \langle x \, p_1 | e^{i\delta t K'(\hat{p}) \hat{\lambda}} e^{-i\delta t U'(\hat{x})\hat{\theta}/2} | \Psi(t+\delta t) \rangle + O\left( \delta t^3 \right)
	\notag\\
	&=  e^{\delta t K'(p_1) \frac{\partial}{\partial x}} \langle x \, p_1 | e^{-i\delta t U'(\hat{x})\hat{\theta}/2} | \Psi(t+\delta t) \rangle + O\left( \delta t^3 \right)
	\notag\\
	&=  \langle x_1 \, p_1 | e^{-i\delta t U'(\hat{x})\hat{\theta}/2} | \Psi(t+\delta t) \rangle + O\left( \delta t^3 \right)
	\notag\\
	&=  e^{-\frac{\delta t}{2} U'(x_1)\frac{\partial}{\partial p_1}} \langle x_1 \, p_1 | \Psi(t+\delta t) \rangle + O\left( \delta t^3 \right)
	\notag\\
	&=  \langle x_1 \, p_2 | \Psi(t+\delta t) \rangle + O\left( \delta t^3 \right), \label{VerletNumMethodEq2}
\end{align}
where
\BoxedEquation{\label{VerletNumMethodEq1}
	p_1 = p - U'(x) \delta t /2,
	\qquad
	x_1 = x + K'(p_1) \delta t,
	\qquad
	p_2 = p_1 - U'(x_1) \delta t /2.
}
Equations (\ref{VerletNumMethodEq1}) and (\ref{VerletNumMethodEq2}) constitute the Verlet integrator, which is a second-order symplectic (i.e., Hamiltonian preserving) numerical method. The derived propagator is easily connected to the cartoon of Koopman--von Neumann classical mechanics depicted in Fig. \ref{Fig_L_KvN_QM}(b). Equation (\ref{VerletNumMethodEq1}) specifies the shape of individual trajectories; whereas, Eq. (\ref{VerletNumMethodEq2}) guarantees the conservation of tags along the trajectories.

Numerical scheme (\ref{VerletNumMethodEq1}) could be easily generalized to many dimensional case. In this situation, a single particle state is given by the coordinate $\vec{x}$ and momentum $\vec{p}$ vectors. Then, the second-order time propagation from $\vec{x}(t) = \vec{x}$, $\vec{p}(t) = \vec{p}$ to $\vec{x}(t + \delta t) = \vec{x}_1$, $\vec{p}(t + \delta t) = \vec{p}_2$ is given by
\BoxedEquation{
	\vec{p}_1 = \vec{p} - \vec{\nabla} U(\vec{x}) \delta t /2,
	\qquad
	\vec{x}_1 = \vec{x} + \vec{\nabla} K(\vec{p}_1) \delta t,
	\qquad
	\vec{p}_2 = \vec{p}_1 - \vec{\nabla} U(\vec{x}_1) \delta t /2.
}
Regarding a Python implementation of the Verlet integrator to propagate an ensemble of classical particles in an arbitrary number of spatial dimensions, see 
\footnote{\url{https://github.com/dibondar/QuantumClassicalDynamics/blob/master/verlet_classical_integrator.py}}. Further details about simulations of classical mechanics can be found in \cite{blanes2016concise}.

\chapter{7. Open quantum systems}\label{Chapter:7}

Assume we performed a dynamical thought experiment as described in Sec. \ref{Sec:DynamicThoughtExp} and obtained the following Ehrenfest theorems
\begin{align}\label{SimpleDissipativeEhrenfestThrms}
	\frac{d}{dt} \langle x \rangle = \frac{1}{m} \langle p \rangle,
	\qquad
	\frac{d}{dt} \langle p \rangle = \langle -U'(x) \rangle  -\gamma \langle p \rangle.
\end{align}
These Ehrenfest theorems describe dissipative dynamics. Note that the force acting on the system is velocity-dependent and acts against the direction of motion. This is a friction force. Following the derivation presented in Sec. \ref{Sec:ODMSchrodinger}, we show that this dynamics yielding the Ehrenfest theorems (\ref{SimpleDissipativeEhrenfestThrms}) cannot be modeled in the same fashion as the Schr\"{o}dinger equation.

Assuming that the empirical averages are represented by the Dirac bra-ket sandwich
\begin{align}
	\langle A \rangle(t) = \langle \Psi(t) | \hat{A} | \Psi(t) \rangle,
\end{align}
where $ | \Psi(t) \rangle$ evolves according to the unitary dynamics, hence obeying Stone's theorem (see Sec.~\ref{Sec:Stones_Th}),
\begin{align}
	i\hbar \frac{d}{dt} | \Psi(t) \rangle = \hat{G} | \Psi(t) \rangle, \qquad  \hat{G}^{\dagger} =  \hat{G},
\end{align}
we obtain the commutator equations for unknown generator of motion $\hat{G}$
\begin{align}
	im [\hat{G}, \hat{x}]  = \hbar \hat{p},
	\qquad
	i [\hat{G}, \hat{p}] = -\hbar U'(\hat{x}) -\gamma \hbar \hat{p}.
\end{align}
Assuming $G = G(\hat{x}, \hat{p})$ and $[ \hat{x}, \hat{p} ] = i\hbar$, we obtain from Theorem Eq. (\ref{Weyl_derivative_of_function}) the following set of differential equations for the scalar function $G(x,p)$
\begin{align}
	\frac{\partial G(x,p)}{\partial p} = \frac{p}{m},
	\qquad
	\frac{\partial G(x,p)}{\partial x} = U'(x) + \gamma p.
\end{align}
However, it is easy to show that this equation has no solution since
\begin{align}
	\frac{\partial^2 G(x,p)}{\partial x \partial p} = 0,
	\qquad
	\frac{\partial^2 G(x,p)}{\partial p \partial x} = \gamma.
\end{align}
As a result, we need to think about alternative mathematical language to represent dissipative dynamics.

We would like to construct the quantum mechanical formalism that allows for
\begin{itemize}
	\item smooth limit $\hbar \to 0$ to recover the classical mechanics (that will be done in Sec. \ref{Sec:PhaseSpaceQM}),
	\item enable to model more general dynamics obeying the Heisenberg uncertainty principle abeit not describable by unitary evolution.
\end{itemize}

\section{Quantum mechanics in phase-space}\label{Sec:PhaseSpaceQM}

We begin with a curious observation that linear combinations of the Bopp operators ($\hat{\theta}$, $\hat{\lambda}$) and classical position $\hat{x}$ and momentum $\hat{p}$ (\ref{Ch2_ClassicalCommutator}) and (\ref{Complete_classical_algebra})
\BoxedEquation{
	\hat{\bs{x}} = \hat{x} - \hbar \hat{\theta} /2,
	\qquad
	\hat{\bs{p}} = \hat{p} + \hbar \hat{\lambda} /2, 	
}
obey the canonical commutation relationship
\begin{align} 
	[\hat{\bs{x}}, \hat{\bs{p}}] = i\hbar. 	
\end{align}
Let's re-derive the close system quantum evolution, which obeys the Ehrenfest theorems
\begin{align} 
	\frac{d}{dt} \langle \Psi(t) | \hat{\bs{x}} | \Psi(t) \rangle = \langle \Psi(t) | \hat{\bs{p}} | \Psi(t) \rangle / m,
	\qquad
	\frac{d}{dt} \langle \Psi(t) | \hat{\bs{p}} | \Psi(t) \rangle = \langle \Psi(t) | -U'(\hat{\bs{x}}) | \Psi(t) \rangle.
\end{align}
or in an expanded form
\begin{align} 
	&\frac{d}{dt} \langle \Psi(t) | \hat{x} - \frac{\hbar}{2} \hat{\theta} | \Psi(t) \rangle = \langle \Psi(t) | \hat{p} + \frac{\hbar}{2} \hat{\lambda} | \Psi(t) \rangle / m,
    \notag
	\\
	&\frac{d}{dt} \langle \Psi(t) | \hat{p} + \frac{\hbar}{2} \hat{\lambda} | \Psi(t) \rangle = \langle \Psi(t) | -U'\left(\hat{x} - \frac{\hbar}{2} \hat{\theta} \right) | \Psi(t) \rangle.
\end{align}
Assuming the equation of motion
\begin{align}\label{StonesTheInQPhaseSpace}
	i \frac{d}{dt} | \Psi(t) \rangle  = G(\hat{x}, \hat{\lambda}, \hat{p}, \hat{\theta} ) | \Psi(t) \rangle,
\end{align}
we obtain
\begin{align}
	m \frac{\partial G}{\partial \lambda} + \frac{m\hbar}{2} \frac{\partial G}{\partial p} = p + \frac{\hbar}{2} \lambda,
	\qquad
	\frac{\partial G}{\partial \theta} - \frac{\hbar}{2}\frac{\partial G}{\partial x} = - U'\left( x - \frac{\hbar}{2}\theta \right).
\end{align}
Using the method of characteristics, one can find the most general solution for $G$
\begin{align}\label{GGeneratorEq1}
	\hat{G} = \frac{\hat{p}\hat{\lambda}}{m} + \frac{1}{\hbar}\left[ 
		U\left( \hat{x} - \frac{\hbar}{2}\hat{\theta} \right) + F\left( \hat{p} - \frac{\hbar}{2} \hat{\lambda}, \hat{x} + \frac{\hbar}{2}\hat{\theta} \right) 
	\right],
\end{align}
where $F$ is an arbitrary real-valued function. Equation (\ref{GGeneratorEq1}) should be compared with the classical generator of motion (\ref{KvNGenerator}). Now we can use the freedom of choice of $F$ to make the classical limit $\hbar \to 0$ smooth. This expression $F\left( \hat{p} - \hbar \hat{\lambda}/2, \hat{x} + \hbar\hat{\theta}/2 \right)  = -U\left( \hat{x} + \hbar\hat{\theta}/2 \right)$ does the job. Recalling the central finite difference approximation (\ref{CentralFinitDiffApprox}), it is easy to show that
\begin{align}
	\lim_{\hbar \to 0} \hat{G} = \hat{p}\hat{\lambda} / m - U'(\hat{x})\hat{\theta}
\end{align}
where
\begin{align}\label{GGenerator}
	\hat{G} = \frac{\hat{p}\hat{\lambda}}{m} + \frac{1}{\hbar}\left[ 
		U\left( \hat{x} - \frac{\hbar}{2}\hat{\theta} \right) - U\left( \hat{x} + \frac{\hbar}{2}\hat{\theta} \right) 
	\right].
\end{align}
We see that $\hbar$ acts as a finite action approximation of the classical dynamics. The generator of motion (\ref{GGenerator}) can be generalized as
\begin{align}\label{GGeneralGenerator}
	\hat{G} = \frac{1}{\hbar} H\left( \hat{x} - \frac{\hbar}{2}\hat{\theta}, \hat{p} + \frac{\hbar}{2} \hat{\lambda} \right)
		- \frac{1}{\hbar} H\left( \hat{x} + \frac{\hbar}{2}\hat{\theta}, \hat{p} - \frac{\hbar}{2} \hat{\lambda} \right), 
\end{align}
where $H(\bs{\hat{x}}, \bs{\hat{p}}) = K(\bs{\hat{p}}) + U(\bs{\hat{x}})$ is a Hamiltonian in a generalized form.

Let us introduce some convenient notation: Arrows on top of an operator to denote the direction of action the operator, e.g.,
\begin{align}\label{DirectionalDerivatives}
	g(y)\overrightarrow{\frac{\partial}{\partial y}} f(y) = f(y) \overleftarrow{\frac{\partial}{\partial y}} g(y) = g(y) \frac{\partial f(y)}{\partial y},
\end{align}
the displacement operator (\ref{DisplacementOp}) can be redefined in this context
\begin{align}\label{DirectionalDisplacementOp}
	g(y) e^{a\overrightarrow{\frac{\partial}{\partial y}}} f(y) = f(y) e^{a\overleftarrow{\frac{\partial}{\partial y}}} g(y) = g(y) f(y + a).
\end{align}

Projecting dynamical equation (\ref{StonesTheInQPhaseSpace}) with (\ref{GGeneralGenerator}) onto $\langle x \, p|$, the common (left) eigenvector of the commuting self-adjoint operators $\hat{x}$ and $\hat{p}$ yields equation for 
\BoxedEquation{
	\langle x \, p | \Psi(t) \rangle / \sqrt{2\pi} = W(x,p; t)
}
\begin{align}\label{PreWignerEq1}
	\frac{\partial}{\partial t}  W(x,p; t)  
	&= \frac{1}{i\hbar} \langle x \, p| H\left( \hat{x} - \frac{\hbar}{2}\hat{\theta}, \hat{p} + \frac{\hbar}{2} \hat{\lambda} \right) | \Psi(t) \rangle
		- \frac{1}{i\hbar} \langle x \, p| H\left( \hat{x} + \frac{\hbar}{2}\hat{\theta}, \hat{p} - \frac{\hbar}{2} \hat{\lambda} \right) | \Psi(t) \rangle.
\end{align}
Employing newly introduced notation (\ref{DirectionalDerivatives}) along with Eq. (\ref{ClassicalXPRepresentation}), we obtain a representation that will later play an important role below (Sec. \ref{Sec:Molecular2StateModel})
\begin{align}
	i\hbar \frac{\partial}{\partial t}  W(x,p; t)  
	&=  H\left( x - \frac{\hbar}{2} \overrightarrow{\hat{\theta}}, p + \frac{\hbar}{2} \overrightarrow{\hat{\lambda}} \right) W(x,p; t)
		- W(x,p; t) H\left(x + \frac{\hbar}{2} \overleftarrow{\hat{\theta}}, p - \frac{\hbar}{2} \overleftarrow{\hat{\lambda}} \right).
\end{align}
Returning to Eq. (\ref{PreWignerEq1}) and using identities (\ref{ClassicalXPRepresentation}) and (\ref{DirectionalDisplacementOp}), one yields
\begin{align}
	&\frac{\partial}{\partial t}  W(x,p; t)  
	= \frac{1}{i\hbar} H\left( x + i\frac{\hbar}{2} \overrightarrow{\frac{\partial}{\partial p}},
				p -i\frac{\hbar}{2} \overrightarrow{\frac{\partial}{\partial x}} \right) W(x,p; t)
		- \frac{1}{i\hbar}  H\left( x - i\frac{\hbar}{2} \overrightarrow{\frac{\partial}{\partial p}},
				p + i\frac{\hbar}{2} \overrightarrow{\frac{\partial}{\partial x}} \right) W(x,p; t)
	\notag\\
	&= \frac{1}{i\hbar} \left[
		H\left( x + i\frac{\hbar}{2} \overrightarrow{\frac{\partial}{\partial p}}, p \right)
		\exp\left( -i\frac{\hbar}{2} \overrightarrow{\frac{\partial}{\partial x}} \overleftarrow{\frac{\partial}{\partial p}} \right)
		-H\left( x - i\frac{\hbar}{2} \overrightarrow{\frac{\partial}{\partial p}}, p \right)
		\exp\left( i\frac{\hbar}{2} \overrightarrow{\frac{\partial}{\partial x}} \overleftarrow{\frac{\partial}{\partial p}} \right)
	\right] W(x,p; t)
	\notag\\
	&= \frac{1}{i\hbar} \left[ H(x,p) \exp\left(
			-i\frac{\hbar}{2} \overrightarrow{\frac{\partial}{\partial x}} \overleftarrow{\frac{\partial}{\partial p}} 
			+i\frac{\hbar}{2} \overrightarrow{\frac{\partial}{\partial p}} \overleftarrow{\frac{\partial}{\partial x}} 
		\right)
		- H(x,p) \exp\left(
			 i\frac{\hbar}{2} \overrightarrow{\frac{\partial}{\partial x}} \overleftarrow{\frac{\partial}{\partial p}}
			 - i\frac{\hbar}{2} \overrightarrow{\frac{\partial}{\partial p}} \overleftarrow{\frac{\partial}{\partial x}} 
		\right)
	\right] W(x,p; t)
	\notag\\
	& = \frac{1}{i\hbar} \left[ H(x,p) \exp\left(
			-i\frac{\hbar}{2} \overrightarrow{\frac{\partial}{\partial x}} \overleftarrow{\frac{\partial}{\partial p}} 
			+i\frac{\hbar}{2} \overrightarrow{\frac{\partial}{\partial p}} \overleftarrow{\frac{\partial}{\partial x}} 
		\right) W(x,p;t)\right.
    \notag\\
    &\qquad		\left.
            - W(x,p;t) \exp\left(
			i\frac{\hbar}{2} \overleftarrow{\frac{\partial}{\partial x}} \overrightarrow{\frac{\partial}{\partial p}}
            - i\frac{\hbar}{2} \overleftarrow{\frac{\partial}{\partial p}} \overrightarrow{\frac{\partial}{\partial x}} 
		\right) H(x,p)
	\right].
\end{align}
Finally,
\BoxedEquation{
	\frac{\partial}{\partial t}  W(x,p; t)  &= \{\{ H(x,p), W(x,p; t) \}\}, \label{MoyalEq}\\
	\{\{ f, g \}\} &= \frac{f \star g - g \star f}{i\hbar}
			= \frac{2}{\hbar} f \sin\left(
			 	\frac{\hbar}{2} \overleftarrow{\frac{\partial}{\partial x}} \overrightarrow{\frac{\partial}{\partial p}}
				 - \frac{\hbar}{2} \overleftarrow{\frac{\partial}{\partial p}} \overrightarrow{\frac{\partial}{\partial x}} 
			\right)g, \label{MoyalBracket}\\
	\star &= \exp\left(
			 \frac{i\hbar}{2} \overleftarrow{\frac{\partial}{\partial x}} \overrightarrow{\frac{\partial}{\partial p}}
			 - \frac{i\hbar}{2} \overleftarrow{\frac{\partial}{\partial p}} \overrightarrow{\frac{\partial}{\partial x}} 
		\right). \label{MoyalStar}
}
Equations (\ref{MoyalEq}), (\ref{MoyalBracket}), and (\ref{MoyalStar}) are known as the Wigner-Moyal equation \cite{moyal1949quantum, zachos2005quantum, curtright2012quantum}, Moyal bracket,  and Moyal star \cite{PhysRevD.58.025002, zachos2001, curtright2014concise}, respectively; whereas, $W(x,p; t)$ is the Wigner phase-space quasi distribution \cite{Wigner1932}. These constructions play paramount role in the phase space representation of quantum mechanics  \cite{zachos2005quantum, curtright2012quantum, curtright2014concise}.

The identity 
$
	\sin(y) = \left( e^{iy} - e^{-iy} \right)/(2i)
$
was used to derive Eq. (\ref{MoyalBracket}). Note that the Wigner function is real valued function according to Eq. (\ref{MoyalBracket}).

The Wigner function is connected with the density matrix
\BoxedEquation{\label{WignerFuncDeffEq}
	W(x,p; t) = \frac{1}{2\pi} \int  
		\left\langle  x - \frac{\hbar}{2}\theta \right| \hat{\bs{\rho}}(t)  \left| x + \frac{\hbar}{2}\theta \right\rangle 
		e^{i p \theta}  d \theta
}
where $\langle  \bs{x} | \hat{\bs{\rho}}  | \bs{x'}  \rangle$ denotes a density matrix in the coordinate representation, as we shall show now.

\begin{figure}
    \centering
	\includegraphics[width=0.6\hsize]{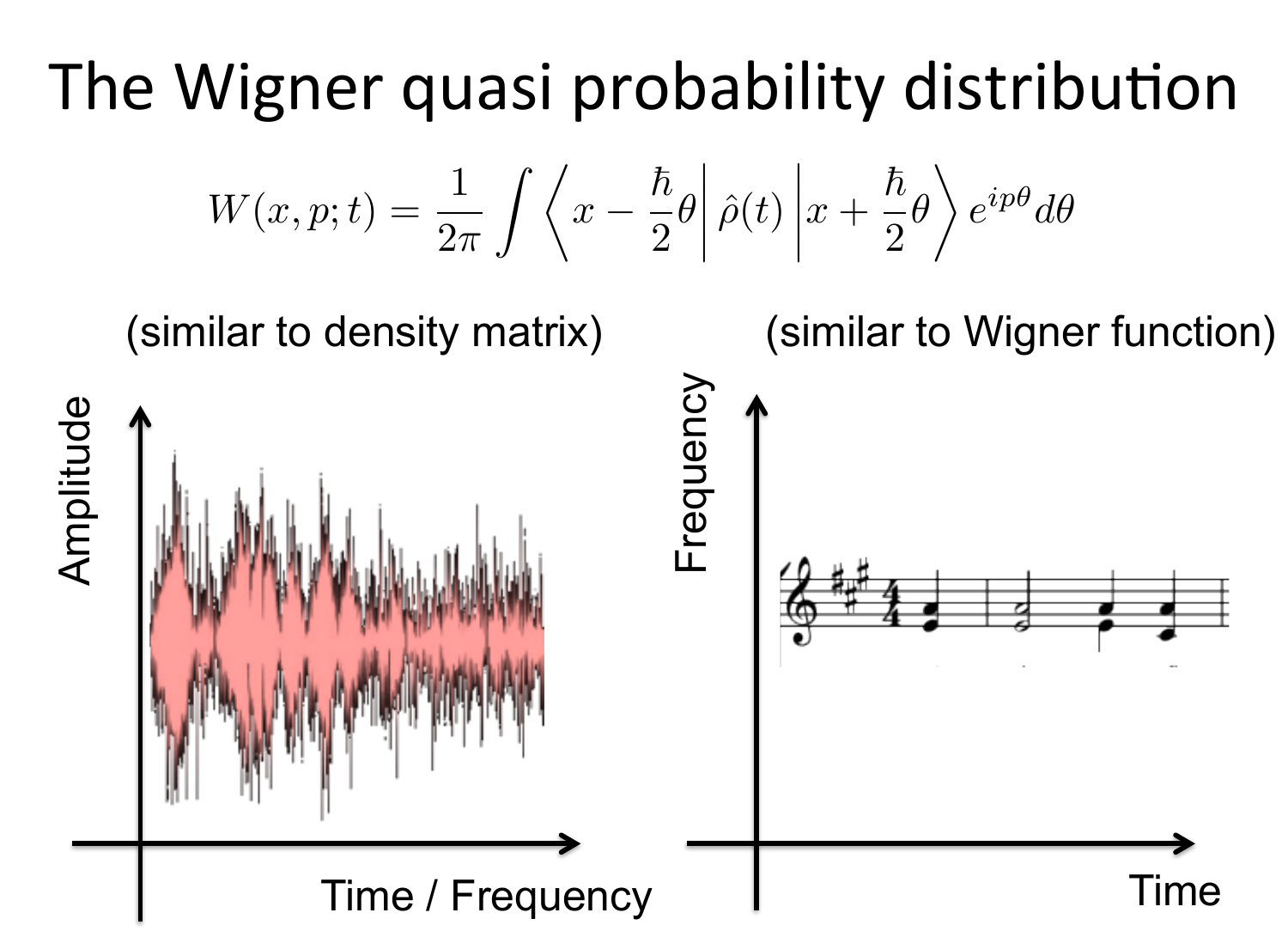}
    \caption{The Wigner function (\ref{WignerFuncDeffEq}) as musical score of quantum mechanics.}
\end{figure}

The commutator relations \eqref{Complete_classical_algebra} and Eq. \eqref{EqSummaryCommutatorSection} imply
\begin{align}\label{ClassicalXThetaRepresentation}
	\langle x \, \theta | \hat{x} |\Psi \rangle = x \langle x \, \theta |\Psi \rangle,
	\quad
	\langle x \, \theta | \hat{\lambda}  |\Psi \rangle = -i\frac{\partial}{\partial x} \langle x \, \theta |\Psi \rangle,
    \notag\\
	\langle x \, \theta | \hat{p} |\Psi \rangle = i\frac{\partial}{\partial \theta} \langle x \, \theta |\Psi \rangle,
	\quad
	\langle x \, \theta | \hat{\theta}  |\Psi \rangle = \theta \langle x \, \theta |\Psi \rangle.
\end{align}
Projecting dynamical equation (\ref{StonesTheInQPhaseSpace}) with (\ref{GGeneralGenerator}) onto $\langle x \, \theta |$, the common (left) eigenvector of the commuting self-adjoint operators $\hat{x}$ and $\hat{\theta}$ yields equation for 
\BoxedEquation{
	\langle x \, \theta | \Psi(t) \rangle = B(x,\theta; t),
}
\begin{align}
	i\hbar \frac{\partial}{\partial t}  B(x,\theta; t)  
	&= \langle x \, \theta | H\left( \hat{x} - \frac{\hbar}{2}\hat{\theta}, \hat{p} + \frac{\hbar}{2} \hat{\lambda} \right) | \Psi(t) \rangle
		- \langle x \, \theta | H\left( \hat{x} + \frac{\hbar}{2}\hat{\theta}, \hat{p} - \frac{\hbar}{2} \hat{\lambda} \right) | \Psi(t) \rangle,
\end{align}
$B(x,\theta; t)$ is known as the \emph{Blokhintsev function}.
 
Employing newly introduced notation (\ref{DirectionalDerivatives}) along with Eq. (\ref{ClassicalXThetaRepresentation}), we obtain 
\begin{align}
	i\hbar \frac{\partial}{\partial t}  B(x,\theta; t)  
	&= H\left( x - \frac{\hbar}{2}\theta, 
			i\overrightarrow{\frac{\partial}{\partial \theta}} -i \frac{\hbar}{2} \overrightarrow{\frac{\partial}{\partial x}}
		\right) B(x,\theta; t)  
	 - B(x,\theta; t) H\left( x + \frac{\hbar}{2}\theta, 
	 	i\overleftarrow{\frac{\partial}{\partial \theta}} +i \frac{\hbar}{2} \overleftarrow{\frac{\partial}{\partial x}}
	\right).
\end{align}
Using new variables
\begin{align}
	\bs{x} = x - \hbar \theta  /2, \qquad \bs{x'} = x + \hbar \theta  /2 
	\quad \Longrightarrow \quad
	x = (\bs{x'} + \bs{x})/2, \qquad  \theta = (\bs{x'} - \bs{x})/\hbar
	\notag\\
	\frac{\partial}{\partial \theta} = \frac{\partial \bs{x'}}{\partial \theta}\frac{\partial}{\partial \bs{x'}}  
		+ \frac{\partial \bs{x}}{\partial \theta}\frac{\partial}{\partial \bs{x}}
		= \frac{\hbar}{2} \left( \frac{\partial}{\partial \bs{x'}} - \frac{\partial}{\partial \bs{x}} \right),
	\qquad
	\frac{\partial}{\partial x} = \frac{\partial \bs{x'}}{\partial x}\frac{\partial}{\partial \bs{x'}}  
		+ \frac{\partial \bs{x}}{\partial x}\frac{\partial}{\partial \bs{x}}
		= \frac{\partial}{\partial \bs{x'}} + \frac{\partial}{\partial \bs{x}}, 
\end{align}
the equation for $\rho(\bs{x}, \bs{x'}; t) = B\left( \frac{\bs{x'} + \bs{x}}{2}, \frac{\bs{x'} - \bs{x}}{\hbar}; t\right)$ reads
\begin{align}
	i\hbar \frac{\partial}{\partial t}  \rho(\bs{x}, \bs{x'}; t) 
	&= H\left( \bs{x},  -i\hbar\overrightarrow{\frac{\partial}{\partial \bs{x}}} \right)  \rho(\bs{x}, \bs{x'}; t)
		- \rho(\bs{x}, \bs{x'}; t) H\left( \bs{x'},  i\hbar\overleftarrow{\frac{\partial}{\partial \bs{x'}}} \right).
\end{align}
The latter equation is the \emph{von Neumann} equation for the density matrix in the coordinate representation. Indeed, if we introduce the operators $\hat{\bs{x}}$ and $\hat{\bs{p}}$ such that
\begin{align}
	[\hat{\bs{x}}, \hat{\bs{p}}] = i \hbar 
	\quad \Longrightarrow \quad
	\hat{\bs{x}} | \bs{x} \rangle =  \bs{x} | \bs{x} \rangle, 
	\qquad
	\langle \bs{x} | \hat{\bs{p}} | \phi \rangle = -i\hbar \overrightarrow{\frac{\partial}{\partial \bs{x}}} \langle \bs{x} | \phi \rangle,
	\qquad
	\langle \phi | \hat{\bs{p}} | \bs{x} \rangle = i\hbar \langle \phi | \bs{x} \rangle \overleftarrow{\frac{\partial}{\partial \bs{x}}},
\end{align}
and $\rho(\bs{x}, \bs{x'}; t) = \langle \bs{x} | \hat{\bs{\rho}}(t) | \bs{x'} \rangle $ then
\begin{align}
	i\hbar \frac{\partial}{\partial t}  \langle \bs{x} | \hat{\bs{\rho}}(t) | \bs{x'} \rangle
	&= H\left( \bs{x},  -i\hbar\overrightarrow{\frac{\partial}{\partial \bs{x}}} \right) \langle \bs{x} | \hat{\bs{\rho}}(t) | \bs{x'} \rangle
		- \langle \bs{x} | \hat{\bs{\rho}}(t) | \bs{x'} \rangle H\left( \bs{x'},  i\hbar\overleftarrow{\frac{\partial}{\partial \bs{x'}}} \right)
	\notag\\
	& =  \langle \bs{x} | H\left( \hat{\bs{x}},  \hat{\bs{p}} \right) \hat{\bs{\rho}}(t) | \bs{x'} \rangle
		- \langle \bs{x} | \hat{\bs{\rho}}(t) H\left( \hat{\bs{x}},  \hat{\bs{p}} \right) | \bs{x'} \rangle \notag
\end{align}
Therefore, we can write the {von Neumann} equation as
\BoxedEquation{\label{EqVonNeumannDensityMatrxi}
	i\hbar \frac{\partial}{\partial t}  \hat{\bs{\rho}}(t)  &= [ H\left( \hat{\bs{x}},  \hat{\bs{p}} \right), \hat{\bs{\rho}}(t)  ].
}
Substituting $\hat{\bs{\rho}} (t) = \ket{\varphi (t)}\bra{\varphi (t)}$ into Eq. (\ref{EqVonNeumannDensityMatrxi}), we obtain
\begin{align}
	i\hbar \left( \frac{\partial}{\partial t}  \ket{\varphi} \right) \bra{\varphi} + 
	i\hbar  \ket{\varphi} \left( \frac{\partial}{\partial t} \bra{\varphi} \right) 
	= \hat{H} \ket{\varphi}\bra{\varphi} - \ket{\varphi}\bra{\varphi} \hat{H} \Longrightarrow \notag\\
	 \ket{\eta} \bra{\varphi} = -\ket{\varphi} \bra{\eta}, \label{EqEtaPhiMinusPhusEta}
\end{align}
where $\ket{\eta} = i\hbar \frac{\partial}{\partial t}  \ket{\varphi} - \hat{H} \ket{\varphi}$. Sandwiching Eq. \eqref{EqEtaPhiMinusPhusEta} between $\bra{\varphi}$ and $\ket{\varphi}$ and recalling $\langle  \varphi \ket{\varphi} = 1$ (implying that $\ket{\varphi}$ is not a null vector), one gets
$ \langle \varphi \ket{\eta} \bra{\varphi} \varphi \rangle = - \langle \varphi | \varphi \rangle \langle \eta | \varphi \rangle \Longrightarrow   \langle \varphi | \eta\rangle = - \langle \varphi | \eta \rangle^*$. Combining the latter with the definition of $\ket{\eta}$, we further obtain
\begin{align}
	\bra{\varphi}  i\hbar \frac{\partial}{\partial t}  \ket{\varphi} - \bra{\varphi} \hat{H} \ket{\varphi}
	= i\hbar \left[ \bra{\varphi}  \frac{\partial}{\partial t}  \ket{\varphi} \right]^* + \bra{\varphi} \hat{H} \ket{\varphi}, \notag\\
	\left[ \bra{\varphi}  \frac{\partial}{\partial t}  \ket{\varphi} \right]^*
	=  \left( \frac{\partial}{\partial t} \bra{\varphi} \right)   \ket{\varphi} 
	= \frac{\partial}{\partial t} [ \langle \varphi | \varphi \rangle] - \bra{\varphi} \frac{\partial}{\partial t}  \ket{\varphi} 
	= - \bra{\varphi} \frac{\partial}{\partial t}  \ket{\varphi} \Longrightarrow \notag\\
	\bra{\varphi(t)} \left(  i\hbar  \frac{\partial}{\partial t}  - \hat{H} \right) \ket{\varphi(t)} = 0. \label{EqVarphiAvgSchrodEq}
\end{align}
Since $\ket{\varphi(t)}$ is a unitary transformation of the initial condition $\ket{\varphi(t=0)}$ and Eq. \eqref{EqVarphiAvgSchrodEq} should be valid for any initial state, Eq. \eqref{EqVarphiAvgSchrodEq} reduces to the Schr\"{o}dinger equation
\begin{align}
	 i\hbar  \frac{\partial}{\partial t} \ket{\varphi} = \hat{H} \ket{\varphi}.
\end{align}

Therefore, Eq. (\ref{EqVonNeumannDensityMatrxi}) is an equation of motion for a state that can go beyond the wave function. From the wave function normalization, we obtain
\begin{align}
	1 = \bra{\varphi} \varphi \rangle = \int d\bs{x} \, \bra{\varphi} \bs{x} \rangle\langle \bs{x} \ket{\varphi}
	 = \int d\bs{x} \, \langle \bs{x} \ket{\varphi} \bra{\varphi} \bs{x} \rangle
	 = \int d\bs{x} \, \langle \bs{x} | \hat{\bs{\rho}} | \bs{x} \rangle. \notag
\end{align}
This gives us the normalization condition
\BoxedEquation{
	\Tr \hat{\bs{\rho}} = 1.
}
Furthermore,
\begin{align}
	\Tr \hat{\bs{\rho}} = 1 \Longrightarrow & \int d\bs{x} \rho(\bs{x}, \bs{x}) = \int d\bs{x}d\bs{x'} \delta(\bs{x'} - \bs{x}) \rho(\bs{x}, \bs{x'})
	= \hbar \int dx d\theta \delta(\hbar \theta) B(x,\theta) 
	\notag\\
	&=  \int dx d\theta \delta(\theta) B(x,\theta) = \int dx d\theta \delta(\theta) \langle x \, \theta | \Psi \rangle
	= \int dx d\theta \langle 1 | x \, \theta \rangle \langle x \, \theta | \Psi \rangle = \langle 1 | \Psi \rangle = 1.
\end{align}
Here, we have introduced the bra vector $\langle 1 |$ such that $\langle 1 | x \, \theta \rangle = \delta(\theta)$, corresponding to the identity density matrix $\hat{\bs{\rho}}=\hat{1}$. From Eq. (\ref{EqSummaryCommutatorSection}), we  get
\begin{align}
	\langle x \, p | x' \, \theta \rangle = \delta(x-x')  e^{ip\theta} / \sqrt{2\pi}.
\end{align}
\begin{align}\notag
	1 =  \langle 1 | \Psi \rangle = \int dx d\theta dx' dp \langle 1 | x' \, \theta \rangle\langle x' \, \theta | x \, p \rangle\langle  x \, p | \Psi \rangle 
	= \int dx d\theta dx' dp \, \delta(\theta) \delta(x-x')  e^{-ip\theta} W(x,p) 
	\Longrightarrow 
\end{align}
\BoxedEquation{
	\int  W(x,p) dx dp = 1.
}
\begin{align}
	W(x,p) = \int \frac{dx' d\theta}{\sqrt{2\pi}} \langle x \, p | x' \, \theta \rangle\langle x' \, \theta | \Psi \rangle 
	= \int \frac{dx' d\theta}{2\pi} \delta( x - x') e^{i p \theta}  B (x, \theta)
    \notag\\
	=  \int \frac{d \theta}{2\pi} \left\langle  x - \frac{\hbar}{2}\theta \right| \hat{\bs{\rho}} \left| x + \frac{\hbar}{2}\theta \right\rangle e^{i p \theta}. 
\end{align}
Thus, we have recovered Eq. (\ref{WignerFuncDeffEq}). Table \ref{Table:WignerVsDensityMatrixRep} is a summary of this section.

\begin{table}
\begin{tabular}{|c|c|c|}
	\hline
		& Density matrix formalism & ~~~~~~Phase space formalism~~~~~~ \\
	\hline \hline
	System state & $\hat{\bs{\rho}}^{\dagger} = \hat{\bs{\rho}}$ & $W(x,p) = W^* (x,p)$ 
    \\		     & non-negative             & real valued function  
    \\           & self-adjoint operator    & obtained via Eq.~(\ref{WignerFuncDeffEq})
    \\
	\hline
	Normalization condition & $\Tr \hat{\bs{\rho}} = 1$ & $\int W(x,p) dxdp = 1$ \\
	\hline
	Momentum distribution & $\langle \bs{p} | \hat{\bs{\rho}} | \bs{p} \rangle$, $\hat{\bs{p}} | \bs{p} \rangle = \bs{p} | \bs{p} \rangle$ & $\int W(x,p)dx$ \\
	\hline
	Coordinate distribution & $\langle \bs{x} | \hat{\bs{\rho}} | \bs{x} \rangle$, $\hat{\bs{x}} | \bs{x} \rangle = \bs{x} | \bs{x} \rangle$ & $\int W(x,p)dp$ \\
	\hline
	Observable multiplication & $\hat{A}\hat{B}$ & $A \star B$ \\
		& matrix product & Moyal star \\
	\hline
	Algebra of observables & $[ \hat{\bs{x}}, \hat{\bs{p}} ] = i\hbar$ & $x \star p - p \star x = i\hbar$ \\
	\hline
	  Observable averages $\langle A \rangle$ & $\Tr[A(\hat{\bs{x}}, \hat{\bs{p}}) \hat{\bs{\rho}}]$ &  $\int A(x,p) W(x,p) dx dp$ \\
	 \hline
	Equation of motion & $i\hbar \frac{d}{dt} \hat{\bs{\rho}} = [ \hat{H}, \hat{\bs{\rho}} ]$ & $i\hbar \frac{d}{d t}  W  = H \star W - W \star H$  \\
	 \hline
	 	& $\Tr(\hat{A}\hat{B}) = \Tr(\hat{B}\hat{A})$ & $\int A(x,p) \star B(x,p) dx dp $ \\
		& cyclic property of trace    & $= \int B(x,p) \star A(x,p) dx dp$
  \\    &                             & $=  \int A(x,p) B(x,p) dx dp$
  \\    &                             & ``lone star'' identity 
  \\
	\hline
\end{tabular}\caption{Quantum mechanics in the density matrix formalism and phase-space representation.}\label{Table:WignerVsDensityMatrixRep}
\end{table}

\section{Hamilton formulation of classical mechanics}

The classical limit $\hbar\to 0$ of the Moyal bracket (\ref{MoyalBracket}) is the celebrated \emph{Poisson bracket} 
\begin{align}
	\llbracket f, g \rrbracket = \lim_{\hbar\to 0} \{\{ f, g \}\} 
	=  \lim_{\hbar\to 0} \frac{2}{\hbar} f \sin\left(
			 	\frac{\hbar}{2} \overleftarrow{\frac{\partial}{\partial x}} \overrightarrow{\frac{\partial}{\partial p}}
				 - \frac{\hbar}{2} \overleftarrow{\frac{\partial}{\partial p}} \overrightarrow{\frac{\partial}{\partial x}} 
			\right)g
	= f \left( \overleftarrow{\frac{\partial}{\partial x}} \overrightarrow{\frac{\partial}{\partial p}}
		-  \overleftarrow{\frac{\partial}{\partial p}} \overrightarrow{\frac{\partial}{\partial x}} \right) g
		\Longrightarrow
\end{align}
\BoxedEquation{
	\llbracket f, g \rrbracket = \frac{\partial f}{\partial x} \frac{\partial g}{\partial p} -  \frac{\partial f}{\partial p} \frac{\partial g}{\partial x}.
}
The Poisson bracket plays a paramount role in classical mechanics. In fact, the adjective ``symplectic'' in the title of Sec. \ref{Sec:SymplecticPropagator} means the Poisson bracket structure; hence, symplectic integrators preserve the Poisson bracket structure.

Newton's equation (\ref{NewtonEqForNumMethod}) can be rewritten in the form known as \emph{Hamilton's equations}
\begin{align}\label{EqHamiltonEq}
	\frac{d x}{dt} = \llbracket x, H \rrbracket,
	\qquad
	\frac{d p}{dt} = \llbracket p, H \rrbracket,
	\qquad
	H = K(p) + U(x).
\end{align}
Note that this equation resemble Eq.~(\ref{NewtonEqAsHeisenbergKvNEq}). Additionally, the Liouville equation (\ref{Liouville_Eq}) can be rewritten as 
\begin{align}
	\frac{d \rho}{dt} = \llbracket H, \rho \rrbracket,
	\qquad
	H = \frac{p^2}{2m} + U(x).
\end{align}
All classical mechanics can be recast via the Poisson bracket.

\section{Heisenberg uncertainty principle in the density matrix formalism}\label{Sec:UncertaintyPrincDensMatrix}

By revisiting the Heisenberg uncertainty principle in the density matrix formalism, we will understand what more general form (beyond Stone's theorem) of quantum dynamics is allowed (see Sec.~\ref{Sec:WaveFuncHeisUncertaintyPrinc} regarding the derivation of the uncertainty principle in the wave function formalism).

We begin by noting that this expression 
\BoxedEquation{
	\langle \hat{A} | \hat{B} \rangle = \Tr( \hat{A}^{\dagger} \hat{B} ),
}
obey all the axioms for the scalar product (see Sec. \ref{Sec:DiracBraKetFormalism}). This scalar product between two operators $\hat{A}$ and $\hat{B}$ are known as the Hilbert-Schmidt product. Having defined $\langle \hat{A} | \hat{B} \rangle$, the Cauchy-Schwarz inequality (\ref{CauchySchwarzIneq}) yields
\begin{align}
	\langle \hat{A} | \hat{A} \rangle \langle \hat{B} | \hat{B} \rangle  \geq | \langle \hat{A} | \hat{B} \rangle |^2
	\Longrightarrow
	 \Tr( \hat{A}^{\dagger} \hat{A} )   \Tr( \hat{B}^{\dagger} \hat{B} ) \geq  \left| \Tr( \hat{A}^{\dagger} \hat{B} ) \right|^2.
\end{align}
Let's work out the standard deviation for the coordinate ($\sigma_x$) and momentum ($\sigma_p$) is similar fashion to Sec. \ref{Sec:WaveFuncHeisUncertaintyPrinc}
\begin{align}\label{HilbSchmCauchySchwarzIneq}
	\sigma_x^2 &= \Tr\left[ (\hat{x} - \langle \hat{x} \rangle )^2 \hat{\rho} \right] 
	= \Tr\left[ (\hat{x} - \langle \hat{x} \rangle )^2 \sqrt{\hat{\rho}} \sqrt{\hat{\rho}}^{\dagger} \right]
	\notag\\
	&= \Tr\left[ \sqrt{\hat{\rho}}^{\dagger} (\hat{x} - \langle \hat{x} \rangle ) (\hat{x} - \langle \hat{x} \rangle )  \sqrt{\hat{\rho}} \right] 
	= \Tr\left[ \left( (\hat{x} - \langle \hat{x} \rangle ) \sqrt{\hat{\rho}} \right)^{\dagger} (\hat{x} - \langle \hat{x} \rangle )  \sqrt{\hat{\rho}} \right];
\end{align}
likewise,
\begin{align}
	\sigma_p^2 & 
	= \Tr\left[ \left( (\hat{p} - \langle \hat{p} \rangle ) \sqrt{\hat{\rho}} \right)^{\dagger} (\hat{p} - \langle \hat{p} \rangle )  \sqrt{\hat{\rho}} \right].
\end{align}

Here is the \emph{key point}: For this derivation to work, a square root of the density matrix $\hat{\rho}$ must exists! This is only possible if $\hat{\rho}$ is non-negative self-adjoint operator (i.e., an operator whose eigenvalues are zeros or positive real numbers).

Note a square root $\sqrt{\hat{A}}$ of a non-negative operator $\hat{A}$ acting on a complex Hilbert space is defined as a solution of the equation: $\sqrt{\hat{A}}^{\dagger} \sqrt{\hat{A}} = \hat{A}$. A square root of a self-adjoint operator need not be self-adjoint; moreover, it is not unique since $\hat{U} \sqrt{\hat{A}}$ is also a valid square root for an arbitrary unitary $\hat{U}$. However, there is a unique  non-negative operator $\sqrt{\hat{A}}$ such that  $\sqrt{\hat{A}}^2 = \hat{A}$.

Using Eqs. (\ref{AuxiliaryComplexIneq}) and (\ref{HilbSchmCauchySchwarzIneq}), we obtain
\begin{align}
	\sigma_x^2 \sigma_p^2  
		&\geq \left| \Tr\left[ \left( (\hat{x} - \langle \hat{x} \rangle ) \sqrt{\hat{\rho}} \right)^{\dagger} (\hat{p} - \langle \hat{p} \rangle )  \sqrt{\hat{\rho}} \right] \right|^2 
		= \Big| \Tr\left[ (\hat{x} - \langle \hat{x} \rangle ) (\hat{p} - \langle \hat{p} \rangle ) \hat{\rho} \right] \Big|^2 
		\notag\\
		&\geq \left| \frac{1}{2} \Tr\left[ (\hat{x} - \langle \hat{x} \rangle ) (\hat{p} - \langle \hat{p} \rangle ) \hat{\rho} - (\hat{p} - \langle \hat{p} \rangle ) (\hat{x} - \langle \hat{x} \rangle ) \hat{\rho} \right] \right|^2  = \left| \frac{1}{2} \Tr( [\hat{x}, \hat{p}]\hat{\rho} )\right|^2 = \frac{\hbar^2}{4}.
\end{align}

In conclusion, the non-negativity of the density matrix implies the uncertainty principle. Thus, we would like to find equation of motion preserving not only density matrix's positivity, but also keeping its trace a unity.

\section{Choi's theorem}\label{Sec:ChoiThrm}

\begin{figure}
    \centering
	\includegraphics[width=0.55\hsize]{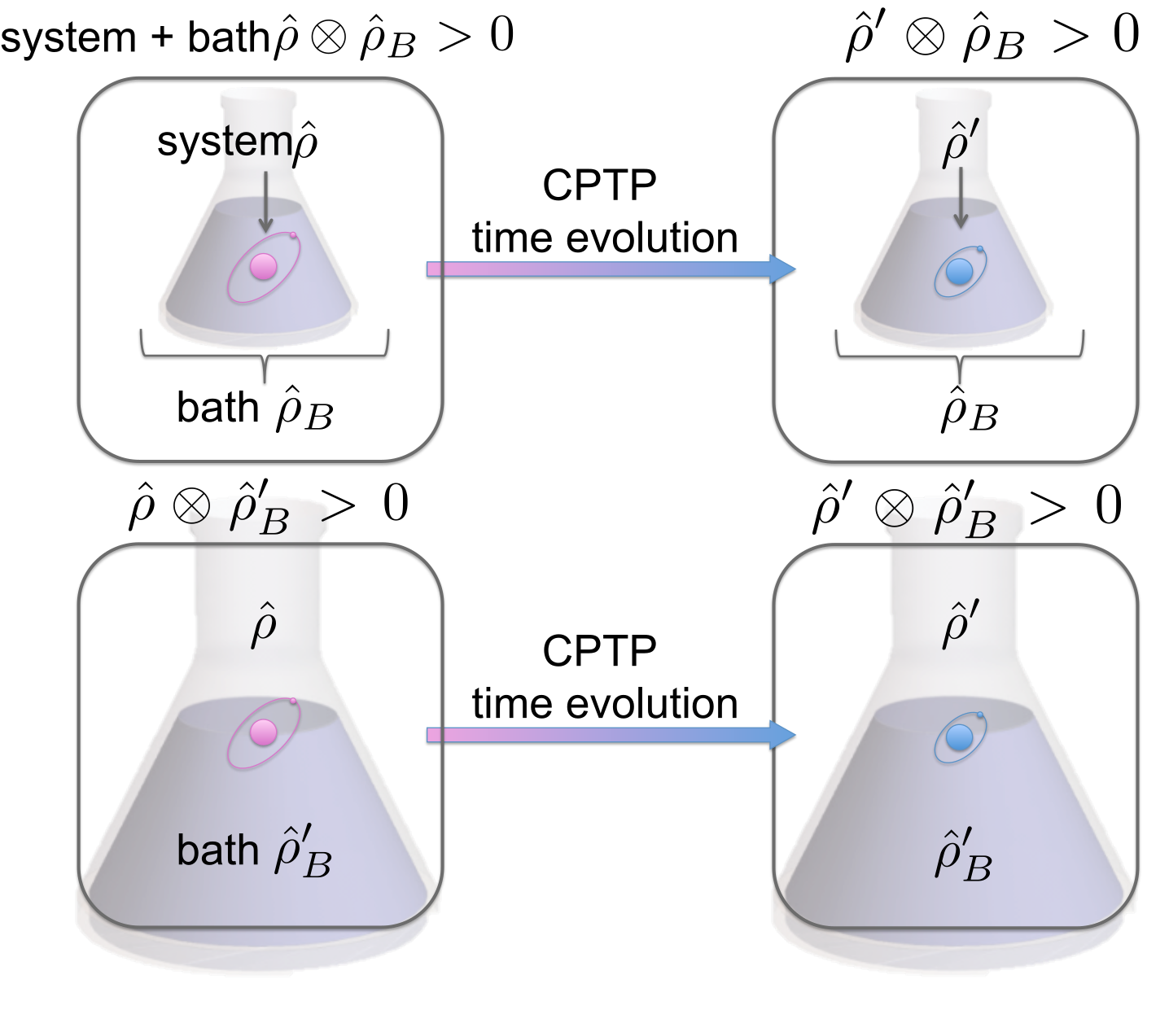}
	\caption{The illustration of the notion of Complete Positivity Trace Preserving Evolution (CPTP). The density matrix of the combined system and bath remains non-negative defined irrespectively of the bath size.}\label{Fig:CPTP}
\end{figure}

In Sec. \ref{Sec:UncertaintyPrincDensMatrix}, we argued that quantum evolution must be modeled by a positivity and trace preserving linear transformation of the density matrix. There is something else needed.

A quantum system under study, characterized by the initial density matrix $\hat{\rho}$, may be coupled to another system with the density matrix $\hat{\rho}_B$, such that the state of the total system is characterized by the initial density matrix $\hat{\rho} \otimes \hat{\rho}_B > 0$ (see Fig.~\ref{Fig:CPTP}).   Hence, when our system evolves from $\hat{\rho}$ to $\hat{\rho}'$ (without affecting $\hat{\rho}_B$), the final density matrix of the total system, $\hat{\rho}' \otimes \hat{\rho}_B > 0$, must still be positive. The linear transformation $\hat{\rho} \to \hat{\rho}'$ obeying this property for an arbitrary large $\hat{\rho}_B$ (i.e., any dimensionality of the matrix $\hat{\rho}_B$) is called \emph{completely positive trace preserving map} (the abbreviation CPTP is also extensively used).


Choi's theorem tells us how to represent a generalized quantum evolution modeled by a completely positive trace preserving map, thereby generalizing Stone's theorem for close system evolution. 

\emph{Choi's theorem}. Every completely positive trace preserving map $\mathcal{L}(\hat{\rho}) = \hat{\rho}'$ is represented as (if and only if)
\begin{align}
	&\mathcal{L}(\hat{\rho}) = \sum_n \hat{V}_n \hat{\rho} \hat{V}_n^{\dagger}, \label{ChoiRep}
	\\
	&\sum_n \hat{V}_n^{\dagger} \hat{V}_n  = \hat{1}. \label{ChoiTP}
\end{align}

Let us demonstrate the theorem. If $\hat{\rho}$ is positive, it can be represented as 
\begin{align}
	\hat{\rho} = \sum_k | \psi_k \rangle \langle \psi_k |, 
\end{align}
where $| \psi_k \rangle$ are arbitrary vectors (not necessary a basis). Using Eq. (\ref{ChoiRep})
\begin{align}
	\mathcal{L}(\hat{\rho}) = \sum_{k, n} \hat{V}_n | \psi_k \rangle  \langle \psi_k | \hat{V}_n^{\dagger}  
		= \sum_{k, n} | \hat{V}_n \psi_k \rangle  \langle \psi_k \hat{V}_n |,
\end{align}
it is evident that $\mathcal{L}(\hat{\rho})$ is also positive operator. Let us verify the trace for an arbitrary $\hat{\rho}$
\begin{align}
	\Tr \mathcal{L}(\hat{\rho}) = \sum_{n} \Tr \left[ \hat{V}_n \hat{\rho} \hat{V}_n^{\dagger}  \right]
		= \sum_{n} \Tr \left[ \hat{V}_n^{\dagger} \hat{V}_n \hat{\rho} \right] \Longrightarrow
	\sum_n \hat{V}_n^{\dagger} \hat{V}_n  = \hat{1}.	
\end{align}
Hence, Eq. (\ref{ChoiTP}) is recovered.

\section{Lindblad master equation}\label{Sec:LindbladMasterEq}

Choi's theorem (Sec. \ref{Sec:ChoiThrm}) describes the mathematical structure for a linear map connecting the density matrices at the initial and final times. In this section, we would like to find a general form of master equation, specifying system's evolution in infinitesimal time increments ($\delta t$),  compatible with Choi's theorem.

Let us first recast the unitary evolution in terms of Choi's theorem (Sec. \ref{Sec:ChoiThrm}).
\begin{align}\label{UnitaryChoiRep1}
	\hat{\rho}(t + \delta t) =
    \left( 1 - \frac{i}{\hbar} \hat{H}\delta t \right) \hat{\rho}(t) \left( 1 - \frac{i}{\hbar} \hat{H}\delta t \right)^{\dagger} 
	=
    \hat{\rho}(t) + \frac{i}{\hbar} [ \hat{\rho}(t), \hat{H} ] \delta t + O\left( \delta t^2 \right). 
\end{align}
This yields
\begin{align}        
	\frac{d \hat{\rho}(t)}{dt} &= \frac{i}{\hbar} [ \hat{\rho}(t), \hat{H} ], \label{vonNeumanRhoEq}
\end{align}
as well as
\begin{align}\label{UnitaryChoiRep2}
	\left( 1 - \frac{i}{\hbar} \hat{H}\delta t \right)^{\dagger} \left( 1 - \frac{i}{\hbar} \hat{H}\delta t \right) = 1 +  O\left( \delta t^2 \right) .
\end{align}
Note that condition (\ref{UnitaryChoiRep2}) [i.e., (\ref{ChoiTP})] must be obeyed up to the second-order terms with respect to the time increment $\delta t$ because time evolution (\ref{UnitaryChoiRep1}) yields corrections linear with respect to $\delta t$.  

According to Eqs. (\ref{UnitaryChoiRep1}) and (\ref{UnitaryChoiRep2}), equation of motion (\ref{vonNeumanRhoEq}) obeys Choi's theorem as $\delta t \to 0$. 

Let us generalize Eq. (\ref{UnitaryChoiRep1})
\begin{align}
	\hat{\rho}(t + \delta t) =& \delta t \sum_k \hat{A}_k \hat{\rho}(t) \hat{A}_k^{\dagger} 
		+ \left( 1 - \frac{i}{\hbar} \hat{H} \delta t + \sum_k \hat{B}_k \delta t \right) \hat{\rho}(t) 
			 \left( 1 - \frac{i}{\hbar} \hat{H} \delta t + \sum_k \hat{B}_k \delta t \right)^{\dagger}
	\\
	=& \hat{\rho}(t) + \frac{i}{\hbar} [ \hat{\rho}(t), \hat{H}] \delta t
	+ \sum_{k} \left[ \hat{A}_k \hat{\rho}(t) \hat{A}_k^{\dagger} + \hat{\rho}(t) \hat{B}_k^{\dagger} + \hat{B}_k \hat{\rho}(t)  \right] \delta t + O\left( \delta t^2 \right). 
\end{align}
Therefore,
\begin{align}
	\frac{d\hat{\rho}}{dt} =& \frac{i}{\hbar} [ \hat{\rho}(t), \hat{H}]
		+ \sum_{k} \left[ \hat{A}_k \hat{\rho}(t) \hat{A}_k^{\dagger} + \hat{\rho}(t) \hat{B}_k^{\dagger} + \hat{B}_k \hat{\rho}(t)  \right]. 
\end{align}
The operators are constrained using the trace-preserving condition (\ref{ChoiTP})
\begin{align}
	&\delta t \sum_k \hat{A}_k^{\dagger} \hat{A}_k +
		\left( 1 - \frac{i}{\hbar} \hat{H} \delta t + \sum_k \hat{B}_k \delta t \right)^{\dagger}
		\left( 1 - \frac{i}{\hbar} \hat{H} \delta t + \sum_k \hat{B}_k \delta t \right) 
	\notag\\
	&= 1 + \delta t \sum_k \left( \hat{A}_k^{\dagger} \hat{A}_k + \hat{B}_k + \hat{B}_k^{\dagger} \right) + O\left( \delta t^2 \right)
	\Longrightarrow 
	\hat{B}_k  = -\frac{1}{2} \hat{A}_k^{\dagger} \hat{A}_k.
\end{align}


Finally, the most general for of master equation describing completely positive trace preserving dynamics reads
\BoxedEquation{\label{LindbladMasterEq}
	\frac{d\hat{\rho}(t)}{dt} =& \frac{i}{\hbar} [ \hat{\rho}(t), \hat{H}]
		+ \sum_{k} \left( \hat{A}_k \hat{\rho}(t) \hat{A}_k^{\dagger} 
			-\frac{1}{2} \hat{\rho}(t) \hat{A}_k^{\dagger} \hat{A}_k -\frac{1}{2} \hat{A}_k^{\dagger} \hat{A}_k \hat{\rho}(t)  \right)
		\notag\\
		=& \frac{i}{\hbar} [ \hat{\rho}(t), \hat{H}] + \frac{1}{2} \sum_{k} \left( 
			\left[ \hat{A}_k \hat{\rho}(t), \hat{A}_k^{\dagger} \right] + \left[ \hat{A}_k, \hat{\rho}(t) \hat{A}_k^{\dagger}\right]
		\right).
}
If all $\hat{A}_k$ does not depend on time, Eq. (\ref{LindbladMasterEq}) is known as the \emph{Lindblad master equation}, descrinbing Markovian dynamics. However, any non-Markovian dynamics can be described within Eq. (\ref{LindbladMasterEq}) by making $\hat{A}_k = \hat{A}_k(t)$ time dependent. This form is known as \emph{time-convolutionless master equation}.

A special case of the Lindblad master equation is worth noting
\BoxedEquation{\label{LindbladMasterSelfAdjointEq}
	\hat{A}_k = \hat{A}_k^{\dagger} \quad \Longrightarrow \quad
	\frac{d\hat{\rho}(t)}{dt} = \frac{i}{\hbar} [ \hat{\rho}(t), \hat{H}] 
		-\frac{1}{2} \sum_{k} [ \hat{A}_k, [\hat{A}_k, \hat{\rho}(t)]].
}

\section{Physical interpretation of Lindblad and classical Markov jumps}

First, let us rewrite Eq. (\ref{LindbladMasterEq}) in the following form
\begin{align}\label{LindbladAlickiMasterEq}
	\frac{d\hat{\rho}(t)}{dt} = \frac{i}{\hbar} [ \hat{\rho}(t), \hat{H}]
		+ & \sum_{k > j}  \gamma_{k \to j} \left( \hat{A}_{k \to j} \hat{\rho}(t) \hat{A}_{k \to j}^{\dagger} 
			-\frac{1}{2} \hat{\rho}(t) \hat{A}_{k \to j}^{\dagger} \hat{A}_{k \to j} -\frac{1}{2} \hat{A}_{k \to j}^{\dagger} \hat{A}_{k \to j} \hat{\rho}(t)\right)
		\notag\\
		+ & \sum_{k \geq j}  \gamma_{j \to k} \left( \hat{A}_{k \to j}^{\dagger} \hat{\rho}(t) \hat{A}_{k \to j} 
			-\frac{1}{2} \hat{\rho}(t) \hat{A}_{k \to j} \hat{A}_{k \to j}^{\dagger} -\frac{1}{2} \hat{A}_{k \to j} \hat{A}_{k \to j}^{\dagger} \hat{\rho}(t)\right),
	\notag\\
	\gamma_{k \to j} \geq 0, \quad \gamma_{j \to k} \geq 0.
\end{align}

Let $| n \rangle$ be eigenstate of the Hamiltonian $\hat{H}$ and the density matrix $\hat{\rho}$ be diagonal in the basis of $| n \rangle$
\begin{align}\label{DiagonalHamiltDensityMatrix}
	\hat{H}| n \rangle  = E_n | n \rangle,
	\qquad
	\langle n | \hat{\rho}(t)| n \rangle =  p_{n}(t),
	\qquad
	\langle n' | n \rangle = \delta_{n', n},
\end{align}
where $E_n$ denotes energy levels, and $p_n(t)$ denotes the population of the state $| n \rangle$,
\begin{align}\label{PopulationProbabilitiesCond}
	0 \leq p_n(t) \leq 1,
	\qquad
	\sum_n p_n(t) = 1.
\end{align}
Note that $\hat{\rho}(t)$ and $\hat{H}$ share common eigenvectors, hence they commute for all times $t$.

Furthermore, we assume
\begin{align}\label{LindbladJumpOperators}
	\hat{A}_{k \to j} = | k \rangle \langle j |,
	\qquad
	\hat{A}_{k \to j}^{\dagger} = | j \rangle \langle k |.
\end{align}

Sandwiching Eq. (\ref{LindbladAlickiMasterEq}) between $\langle n |$ and $| n \rangle$ with help of Eqs. (\ref{DiagonalHamiltDensityMatrix})  and (\ref{LindbladJumpOperators}), we obtain
\begin{align}
	\frac{d}{dt} \langle n | \hat{\rho}(t) | n \rangle &= \frac{d p_n(t)}{dt} =
	\sum_{k > j}  \gamma_{k \to j} \langle n | \left( | k \rangle \langle j | \hat{\rho}(t) | j \rangle \langle k | 
			-\frac{1}{2} \hat{\rho}(t) | j \rangle \langle k | k \rangle \langle j | -\frac{1}{2} | j \rangle \langle k | k \rangle \langle j | \hat{\rho}(t)\right) | n \rangle
		\notag\\
		& + \sum_{k \geq j}  \gamma_{j \to k} \langle n | \left( | j \rangle \langle k | \hat{\rho}(t) | k \rangle \langle j | 
			-\frac{1}{2} \hat{\rho}(t) | k \rangle \langle j | j \rangle \langle k | -\frac{1}{2} | k \rangle \langle j | j \rangle \langle k | \hat{\rho}(t) \right) | n \rangle,
		\notag\\
		=&
	\sum_{k > j}  \gamma_{k \to j} \left( \delta_{n,k} p_j(t)  -\frac{1}{2} p_j(t) \delta_{n,j}  -\frac{1}{2} p_n(t) \delta_{n,j} \right) 
		\notag\\
		& + \sum_{k \geq j}  \gamma_{j \to k} \left(  \delta_{n,j}  p_k(t)  -\frac{1}{2} p_k(t) \delta_{n,k} -\frac{1}{2}  p_n(t) \delta_{n,k}  \right),
		\notag\\
		=&
	\sum_{n > j}  \gamma_{n \to j}  p_j(t)  -  p_n(t)  \sum_{k > n}  \gamma_{k \to n} 
		 + \sum_{k \geq n}  \gamma_{n \to k}  p_k(t)  - p_n(t) \sum_{n \geq j}  \gamma_{j \to n}.
\end{align}
Finally, we recover the master equation for \emph{continuous-time Markov process} (in classical probability) also known as the \emph{Pauli master equation}
\begin{align}\label{MarkovJumpMasterEq}
	\frac{d p_n(t)}{dt} = \sum_{j} \left[ \gamma_{n \to j}  p_j(t)  -  \gamma_{j \to n} p_n(t) \right].
\end{align}

Thus, Lindblad equation (\ref{LindbladAlickiMasterEq}) models transitions between energy levels  of the Hamiltonian, as shown in Fig. \ref{Fig:LindbladInterpretation}. $ \gamma_{n \to j}$ denotes the transition rate from the $n$ to $j$ levels.

\begin{figure}
    \centering
	\includegraphics[width=0.3\hsize]{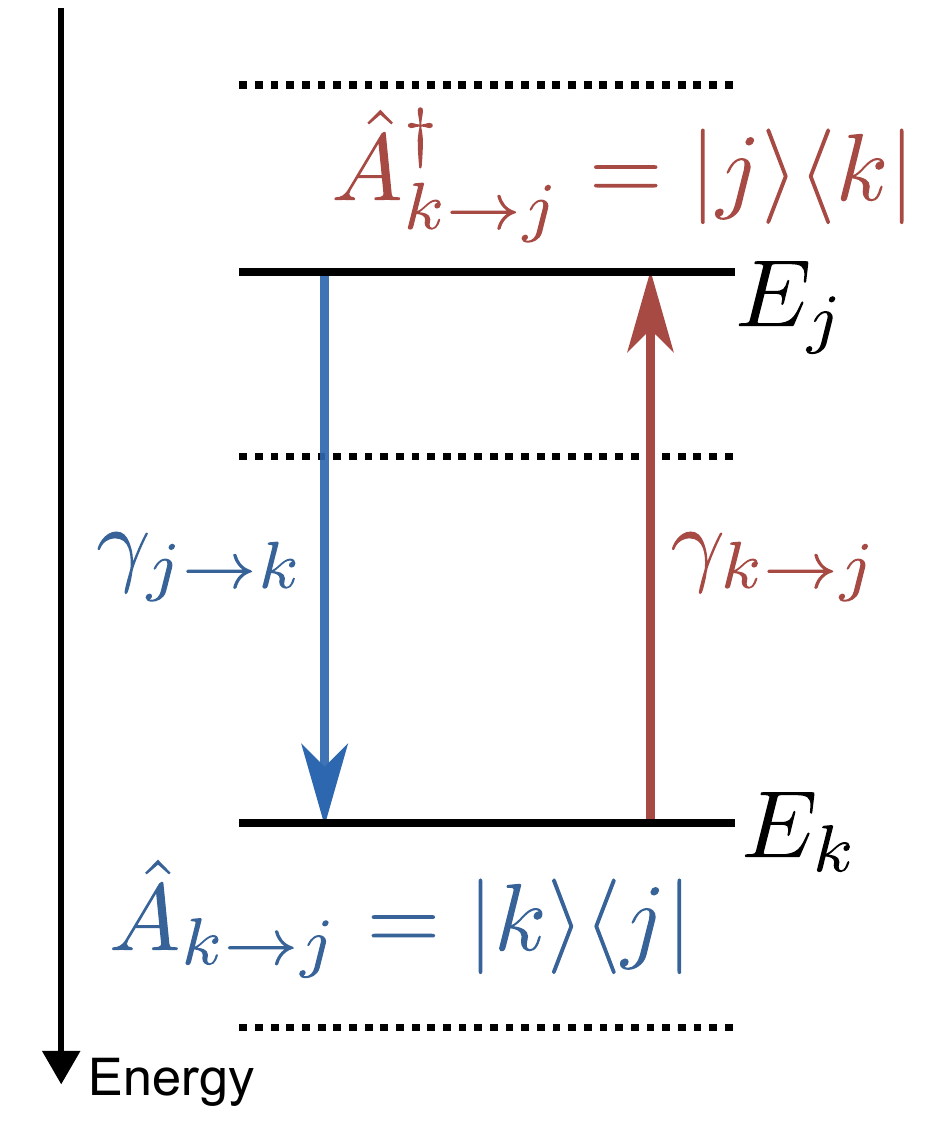}
	\caption{Interpretation of the Lindblad equation (\ref{LindbladAlickiMasterEq}).}\label{Fig:LindbladInterpretation}
\end{figure}

\section{Lindblad dissipator and transition rates}

The dissipator in Eqs.~(\ref{LindbladAlickiMasterEq}) and (\ref{LindbladJumpOperators}) can be written as 
\begin{align}
	\mathcal{L}[\odot] = \sum_{j,k} \gamma_{k \to j} \left( 
		| k \rangle\langle j | \odot | j \rangle\langle k |
		-\frac{1}{2} \odot | j \rangle\langle j |
		-\frac{1}{2}  | j \rangle\langle j | \odot
	\right). 
\end{align}
We note that
\begin{align}\label{GammaLindbladAlicki}
	\gamma_{n \to l} = \langle n | \mathcal{L}\left( | l \rangle\langle l | \right) | n \rangle, \qquad l \neq n.
\end{align}
To introduce the transition rate for a general dissipator of the form
\begin{align}\label{GenericLindbladDissipator}
 	\mathcal{D}[\odot] = \hat{A} \odot \hat{A}^{\dagger} - \frac{1}{2} \odot \hat{A}^{\dagger}\hat{A} - \frac{1}{2} \hat{A}^{\dagger}\hat{A} \odot,
\end{align}
we calculate
\begin{align}\label{GenericLindbladLikeGamma}
 	\langle n | \mathcal{D}\left( | l \rangle\langle l | \right) | n \rangle = \left| \langle n | \hat{A} | l \rangle \right|^2, \qquad l \neq n.
\end{align}
Comparison of Eqs.~(\ref{GammaLindbladAlicki}) and (\ref{GenericLindbladLikeGamma}) suggests 
\BoxedEquation{\label{GammaAndA_FermiGoldenRule}
 	\gamma_{n \to l} = \left| \langle n | \hat{A} | l \rangle \right|^2, \qquad l \neq n,
}
to interpret $\hat{A}$ in the generic dissipator (\ref{GenericLindbladDissipator}) as a coupling operator inducing the transition $n \to l$. Note that Eq. (\ref{GammaAndA_FermiGoldenRule}) resembles \emph{the Fermi golden rule.}

\section{Random unitary channel}

The \emph{random unitary channel} (see, e.g., Ref. \cite{audenaert2008random}) is a mapping from density matrices to density matrices,
\begin{align}\label{EqDeffRandomUnitary}
	\mathcal{L}(\hat{\rho}) = \sum_k p_k \hat{U}_k \hat{\rho} \hat{U}_k^{\dagger},
	\qquad
	\hat{U}_k^{\dagger} \hat{U}_k = \hat{U}_k \hat{U}_k^{\dagger}  =1, \quad
	\sum_k p_k = 1, \quad 0 \leq p_k < 1,
\end{align}
where $\hat{U}_k$ are unitary operators and $p_k$ has the meaning of probability, such that the unitary transformation  $\hat{U}_k$ is applied onto the state $\hat{\rho}$ with probability $p_k$. Notice that $p_k$ cannot be exactly equal to unity because in that case the map $\mathcal{L}$ reduces to a unitary transformation. By construction, the map \eqref{EqDeffRandomUnitary} satisfies Choi's theorem \ref{Sec:ChoiThrm}. Because of the unitarity, $\hat{U}_k = e^{-i\hat{G}_k}$ for self-adjoint $\hat{G}_k$. The Hausdorff expansion \eqref{EqBCH} 
leads to
\begin{align}
	\mathcal{L}(\hat{\rho}) = \sum_k p_k e^{-i\hat{G}_k} \hat{\rho} e^{i\hat{G}_k} 
		=  \hat{\rho} + \sum_k p_k \sum_{n=1}^{\infty} \frac{(-i)^n}{n!} \underbrace{[ \hat{G}_k, [ \hat{G}_k, [\cdots [ \hat{G}_k}_{\mbox{$n$-times}}, \hat{\rho}] \cdots ].
\end{align}

\begin{align}
	\hat{\rho}(t + \delta t)  =  \hat{\rho}(t) -i \sum_k p_k [ \hat{G}_k, \hat{\rho}(t)]
		+ \sum_k p_k  \frac{(-i)^2}{2} [ \hat{G}_k, [ \hat{G}_k,  \hat{\rho}(t)]] + \left( \mbox{3rd order term w.r.t. } \hat{G}_k \right).
\end{align}

\section[The environmental coupling]{The environmental coupling introduces not only dissipation, but also a Hamiltonian correction}

We will show that the inclusion of the Lindblad dissipator may modify the unitary dynamics in addition to inducing dissipative non-unitary evolution. This is a generalization of the following physical example: the vacuum non-only causes a line broadening via spontaneous emission of atomic spectral lines (modeled as dissipative dynamics), but also the Lamb shift (modeled a small correction to the Hamiltonian).

Assume a closed quantum system with the Hamiltonian $\hat{H}$ coupled to the environment modeled by the following Lindblad equation
\begin{align}
	\frac{d \hat{\rho}}{dt} = \mathcal{G} [\hat{\rho}]  +  \mathcal{D}[\hat{\rho}], 
	\qquad
	 \mathcal{G}[ \odot] = \frac{i}{\hbar} [  \odot, \hat{H} ], \qquad
	 \mathcal{D}[\odot] = \hat{A} \odot \hat{A}^{\dagger} - \frac{1}{2} \odot \hat{A}^{\dagger}\hat{A} - \frac{1}{2} \hat{A}^{\dagger}\hat{A} \odot. 
\end{align}
Note that the generator of the unitary $\mathcal{G}$ and Lindblad $\mathcal{D}$ dynamics are linear operators acting on the Hilbert space of operators endowed with the Hilbert-Schmidt scalar product $\langle \hat{A} | \hat{B} \rangle = \Tr( \hat{A}^{\dagger} \hat{B} )$.

In this terminology, $\mathcal{G}$ is an anti self-adjoint linear operator, i.e., $\mathcal{G} = -\mathcal{G}^{\dagger}$. Given the Hilbert-Schmidt scalar product, the adjoint $\mathcal{G}^{\dagger}$ operator  to $\mathcal{G}$ is defined as 
\begin{align}
	\Tr \left(\hat{\xi} \mathcal{G}[\hat{\eta}]\right) = \Tr \left(\mathcal{G}^{\dagger}[\hat{\xi}] \hat{\eta}\right),
	\qquad
	\forall \hat{\xi}, \, \hat{\eta}.
\end{align}
Using the cyclic property of trace $\Tr(\hat{A}\hat{B}) = \Tr(\hat{B}\hat{A})$, we obtain
\begin{align*}
	\Tr \left(\hat{\xi} \mathcal{G}[\hat{\eta}]\right) = \frac{i}{\hbar}  \Tr \left(\hat{\xi} \hat{\eta} \hat{H} - \hat{\xi} \hat{H} \hat{\eta} \right) = \frac{i}{\hbar}  \Tr \left(\hat{H} \hat{\xi} \hat{\eta}  - \hat{\xi} \hat{H} \hat{\eta} \right) = \frac{i}{\hbar}  \Tr \left([\hat{H}, \hat{\xi}] \hat{\eta} \right),
\end{align*}
therefore,
\begin{align}
    \mathcal{G}^{\dagger} [ \odot] = \frac{i}{\hbar} [ \hat{H},  \odot ] = -\mathcal{G} [ \odot].
\end{align}

Now, we partition the dissipator $\mathcal{D}$ as 
\begin{align}
	\mathcal{D} = \underbrace{\left( \mathcal{D} + \mathcal{D}^{\dagger} \right)/2}_{\mbox{self-adjoint part:} \atop \mbox{describes pure dissipation}} 
	+  \underbrace{\left( \mathcal{D} - \mathcal{D}^{\dagger} \right)/2.}_{\mbox{anti self-adjoint part:} \atop \mbox{correction to unitary dynamics}}
\end{align}
\begin{align}
	&\Tr \left(\hat{\xi} \mathcal{D}[\hat{\eta}]\right)
	= \Tr \left( \hat{\xi} \hat{A} \hat{\eta} \hat{A}^{\dagger} - \frac{1}{2} \hat{\xi} \hat{\eta} \hat{A}^{\dagger}\hat{A} - \frac{1}{2} \hat{\xi} \hat{A}^{\dagger}\hat{A} \hat{\eta} \right)
	= \Tr \left( \hat{A}^{\dagger} \hat{\xi} \hat{A} \hat{\eta} - \frac{1}{2} \hat{A}^{\dagger}\hat{A} \hat{\xi} \hat{\eta}  - \frac{1}{2} \hat{\xi} \hat{A}^{\dagger}\hat{A} \hat{\eta} \right)
	\notag\\
    &\qquad
    = \Tr \left(\mathcal{D}^{\dagger}[\hat{\xi}] \hat{\eta}\right) \Longrightarrow \notag\\
	& \mathcal{D}^{\dagger} [\odot] = \hat{A}^{\dagger} \odot \hat{A} - \frac{1}{2} \odot \hat{A}^{\dagger}\hat{A} - \frac{1}{2} \hat{A}^{\dagger}\hat{A} \odot.
\end{align}
Thus, we extract pure dissipative correction
\begin{align}
	\frac{1}{2} \left( \mathcal{D} + \mathcal{D}^{\dagger} \right) = 
	\frac{1}{2} \left(
		\hat{A} \odot \hat{A}^{\dagger}  + \hat{A}^{\dagger} \odot \hat{A} 
		-\odot \hat{A}^{\dagger}\hat{A} - \hat{A}^{\dagger}\hat{A} \odot 
	\right),
\end{align}
and purely unitary correction inducing an effective Hamiltonian
 \begin{align}
	\frac{1}{2} \left( \mathcal{D} - \mathcal{D}^{\dagger} \right) = 
	\frac{1}{2} \left( \hat{A} \odot \hat{A}^{\dagger}  - \hat{A}^{\dagger} \odot \hat{A} \right).
\end{align}
Note that if $\hat{A}$ is self-adjoint, then the environment offers no correction to the unitary evolution.

\section{When Lindblad equation converges to the Boltzmann Gibbs state}

System coupled to a bath should converge the Gibbs state, whose density matrix reads 
\begin{align}
	\hat{\rho}_{\beta} = \frac{1}{Z} e^{-\beta \hat{H}},
	\quad
	\beta = \frac{1}{kT} 
	\quad \Longrightarrow \quad
	\hat{\rho}_{\beta} | n \rangle = \frac{1}{Z} e^{-\beta E_n} | n \rangle.
\end{align}
Thus, we want the Lindblad equation (\ref{LindbladAlickiMasterEq}) to have the Gibbs state as a stationary state. When does it happen? From Eq. (\ref{MarkovJumpMasterEq}), we obtain
\begin{align}
		\frac{d p_n}{dt} = \sum_{j} \left[ \gamma_{n \to j}  \frac{1}{Z} e^{-\beta E_j}  -  \gamma_{j \to n} \frac{1}{Z} e^{-\beta E_n} \right] = 0.
\end{align}
For this equation to be true for all $\gamma_{n \to j} \geq 0$, 
\begin{align}
		 \gamma_{n \to j} /  \gamma_{j \to n}  = e^{-\beta (E_n - E_j)}.
\end{align}

\section{Lindblad equation converges to the Fermi-Dirac distribution}

The desired Fermi-Dirac stationary state reads
\begin{align}
	\hat{\rho}_{\beta} = \frac{1}{Z} \frac{1}{e^{\beta (\hat{H} - \mu)} + 1} \Longrightarrow p_n = \frac{1}{Z} \frac{1}{e^{\beta (E_n - \mu)} + 1}.
\end{align}
For the Lindblad equation (\ref{LindbladAlickiMasterEq})  to yield the Fermi-Dirac steady state,
\begin{align}
	\frac{d p_n}{dt} = \sum_{j} \left[ \gamma_{n \to j}  \frac{1}{Z} \frac{1}{e^{\beta (E_j - \mu)} + 1} -  \gamma_{j \to n} \frac{1}{Z} \frac{1}{e^{\beta (E_n - \mu)} + 1} \right] = 0.
\end{align}
Therefore,
\begin{align}
	\frac{\gamma_{n \to j}}{\gamma_{j \to n}} = \frac{e^{\beta(E_j - \mu)} + 1}{e^{\beta(E_n-\mu)} + 1}.
\end{align}

\section{A unitary evolution of the density matrix}

In this section, we shall develop a numerical algorithm to solve the von Neumann equation:
\begin{align}
	\frac{d \hat{\rho}(t)}{dt} =\frac{i}{\hbar} [ \hat{\rho}(t), \hat{H}(t) ], \qquad \hat{H}(t) = K(t, \hat{p}) + V(t, \hat{x}). 
\end{align}
Let us show that the following is a solution by direct substitution
\begin{align}
	\hat{\rho}(t) = \hat{\mathcal{T}} \exp\left[ -\frac{i}{\hbar} \int_{t'}^{t} \hat{H}(\tau) d\tau \right]  \hat{\rho}(t') \hat{\mathcal{T}} \exp\left[ \frac{i}{\hbar} \int_{t'}^{t} \hat{H}(\tau) d\tau \right].
\end{align}
Using Eq. (\ref{EqMotionForArbitraryG}), we get
\begin{align}
	i \hbar \frac{d}{dt} \hat{\rho}(t) =& \left\{ i\hbar\frac{\partial}{\partial t} \hat{\mathcal{T}} \exp\left[ -\frac{i}{\hbar} \int_{t'}^{t} \hat{H}(\tau) d\tau \right] \right\}  \hat{\rho}(t') \hat{\mathcal{T}} \exp\left[ \frac{i}{\hbar} \int_{t'}^{t} \hat{H}(\tau) d\tau \right] \notag\\
		& + \hat{\mathcal{T}} \exp\left[ -\frac{i}{\hbar} \int_{t'}^{t} \hat{H}(\tau) d\tau \right] \hat{\rho}(t')  \left\{ -i\hbar\frac{\partial}{\partial t} \hat{\mathcal{T}} \exp\left[ -\frac{i}{\hbar} \int_{t'}^{t} \hat{H}(\tau) d\tau \right] \right\}^{\dagger} \notag\\
		=& \hat{H}(t) \hat{\mathcal{T}} \exp\left[ -\frac{i}{\hbar} \int_{t'}^{t} \hat{H}(\tau) d\tau \right]  \hat{\rho}(t') \hat{\mathcal{T}} \exp\left[ \frac{i}{\hbar} \int_{t'}^{t} \hat{H}(\tau) d\tau \right] \notag\\
		& + \hat{\mathcal{T}} \exp\left[ -\frac{i}{\hbar} \int_{t'}^{t} \hat{H}(\tau) d\tau \right] \hat{\rho}(t')  \left\{ -\hat{H}(t) \hat{\mathcal{T}} \exp\left[ -\frac{i}{\hbar} \int_{t'}^{t} \hat{H}(\tau) d\tau \right] \right\}^{\dagger} \notag\\
		=& [\hat{H}(t), \hat{\rho}(t)].
\end{align}
Let us derive the propagator
\begin{align}
	\bra{x} \hat{\rho}(t& + \delta t ) \ket{x'} = \bra{x} \hat{\mathcal{T}} \exp\left[ -\frac{i}{\hbar} \int_{t}^{t + \delta t} \hat{H}(\tau) d\tau \right]  \hat{\rho}(t) \hat{\mathcal{T}} \exp\left[ \frac{i}{\hbar} \int_{t}^{t + \delta t} \hat{H}(\tau) d\tau \right] \ket{x'} \notag\\
	=& \bra{x} \exp\left[ -\frac{i}{\hbar} \int_{t}^{t + \delta t} \hat{H}(\tau) d\tau \right]  \hat{\rho}(t) \exp\left[ \frac{i}{\hbar} \int_{t}^{t + \delta t} \hat{H}(\tau) d\tau \right] \ket{x'} + O\left( \delta t^3 \right)
	~~\mbox{[using Eq. (\ref{EqTimeOrderingDropping})]} \notag\\
	=& \bra{x} \exp\left[ -\frac{i \delta t }{\hbar} \hat{H}\left(t + \frac{\delta t}{2}\right) \right]  \hat{\rho}(t) \exp\left[ \frac{i \delta t }{\hbar} \hat{H}\left(t + \frac{\delta t}{2}\right) \right] \ket{x'} + O\left( \delta t^3 \right)
	~~\mbox{[using Eq. (\ref{EqTrapezoidalIntegralRule})]} \notag\\
	=& \bra{x} 
		e^{ -\frac{i \delta t }{2\hbar} V\left(t + \frac{\delta t}{2}, \hat{x}\right) } 
		e^{ -\frac{i \delta t }{\hbar} K\left(t + \frac{\delta t}{2}, \hat{p}\right) }
		e^{ -\frac{i \delta t }{2\hbar} V\left(t + \frac{\delta t}{2}, \hat{x}\right) } 
		\hat{\rho}(t) \notag\\
	& \times	e^{ \frac{i \delta t }{2\hbar} V\left(t + \frac{\delta t}{2}, \hat{x}\right) } 
		e^{ \frac{i \delta t }{\hbar} K\left(t + \frac{\delta t}{2}, \hat{p}\right) }
		e^{ \frac{i \delta t }{2\hbar} V\left(t + \frac{\delta t}{2}, \hat{x}\right) } 
		 \ket{x'} + O\left( \delta t^3 \right)
	~~\mbox{[using Eq. (\ref{SecondOrderSplitting})]} \notag\\
	=&  e^{ \frac{i \delta t }{2\hbar} \left[ V\left(t + \frac{\delta t}{2}, x' \right) -V\left(t + \frac{\delta t}{2}, x\right) \right]} 
		\int dx'' dx''' dp dp'
		\bra{x}
		e^{ -\frac{i \delta t }{\hbar} K\left(t + \frac{\delta t}{2}, \hat{p}\right) } \ket{p}\bra{p}
		e^{ -\frac{i \delta t }{2\hbar} V\left(t + \frac{\delta t}{2}, \hat{x}\right) } 
		 \notag\\
	& \times	\ket{x''}\bra{x''} \hat{\rho}(t) \ket{x'''} \bra{x'''} e^{ \frac{i \delta t }{2\hbar} V\left(t + \frac{\delta t}{2}, \hat{x}\right) } 
		\ket{p'}\bra{p'} e^{ \frac{i \delta t }{\hbar} K\left(t + \frac{\delta t}{2}, \hat{p}\right) } 
		 \ket{x'} + O\left( \delta t^3 \right)
	\notag\\
	=&  e^{ \frac{i \delta t }{2\hbar} \left[ V\left(t + \frac{\delta t}{2}, x' \right) -V\left(t + \frac{\delta t}{2}, x\right) \right]} 
		\int dx'' dx''' dp dp'
		\bra{x} p \rangle \langle p' \ket{x'}
		e^{ \frac{i \delta t }{\hbar} \left[ K\left(t + \frac{\delta t}{2}, p'\right) - K\left(t + \frac{\delta t}{2}, p\right) \right]} 
		 \notag\\
	& \times
		\bra{p} x'' \rangle \langle x''' \ket{p'}
		e^{ \frac{i \delta t }{2\hbar} \left[V\left(t + \frac{\delta t}{2}, x'''\right) - V\left(t + \frac{\delta t}{2}, x''\right)\right] } 
		\bra{x''} \hat{\rho}(t) \ket{x'''}   + O\left( \delta t^3 \right).
\end{align} 
Using Eq. (\ref{EqContiniousFourierTransform}), we finally get the second-order unitary propagator for the density matrix
\BoxedEquation{\label{UnitaryPropDensityMatrxiEq}
	\bra{x} \hat{\rho}(t + \delta t ) \ket{x'} =& 
		e^{ \frac{i \delta t }{2\hbar} \left[ V\left(t + \frac{\delta t}{2}, x' \right) -V\left(t + \frac{\delta t}{2}, x\right) \right]} 
		\mathcal{F}_{p' \to x'} \mathcal{F}_{p \to x}^{-1} \left\{
		e^{ \frac{i \delta t }{\hbar} \left[ K\left(t + \frac{\delta t}{2}, p'\right) - K\left(t + \frac{\delta t}{2}, p\right) \right]} \times  \right. \notag\\
		& \left. \mathcal{F}_{x''' \to p'}^{-1} \mathcal{F}_{x'' \to p} \left\{  
			e^{ \frac{i \delta t }{2\hbar} \left[V\left(t + \frac{\delta t}{2}, x'''\right) - V\left(t + \frac{\delta t}{2}, x''\right)\right] } 
			\bra{x''} \hat{\rho}(t) \ket{x'''} 
		\right\}\right\} + O\left( \delta t^3 \right).
}

\section{A position dependent dissipator}

In this section, we develop a propagator for the following equation
\begin{align}\label{LindbladWithADependentOnXEq}
	\frac{d\hat{\rho}(t)}{dt} =& \frac{i}{\hbar} \left[ \hat{\rho}(t), K(t, \hat{p}) + V(t, \hat{x})\right]
		+ A(\hat{x}) \hat{\rho}(t) A(\hat{x})^{\dagger} 
		-\frac{1}{2} \hat{\rho}(t) A(\hat{x})^{\dagger} A(\hat{x}) -\frac{1}{2} A(\hat{x})^{\dagger} A(\hat{x}) \hat{\rho}(t). 
\end{align}
Sandwiching this equation between $\bra{x}$ and $\ket{x'}$, one gets
\begin{align}
	\frac{d}{dt} \bra{x}& \hat{\rho}(t) \ket{x'} = 
		\frac{i}{\hbar}  \left[ \bra{x} \hat{\rho}(t) K(t, \hat{p}) \ket{x'}  - \bra{x} K(t, \hat{p}) \hat{\rho}(t) \ket{x'} \right] + F(t; x,x') \bra{x} \hat{\rho}(t) \ket{x'} \notag\\
		&= \left[ \frac{i}{\hbar} K\left(t, i\hbar\frac{\partial}{\partial x'} \right) - \frac{i}{\hbar} K\left(t, -i\hbar\frac{\partial}{\partial x} \right) + F(t; x,x') \right] \bra{x} \hat{\rho}(t) \ket{x'}, 
			~~ \mbox{[using Eq. (\ref{EqSummaryCommutatorSection})]}  \\
	F(t; x,x') &= \frac{i}{\hbar} \left[V(t, x') - V(t,x) \right] + A(x) A(x')^* - \frac{1}{2} A(x')^* A(x') - \frac{1}{2} A(x)^* A(x).
\end{align}
Therefore, now Eq. (\ref{UnitaryPropDensityMatrxiEq}) can be readily generalized to propagate Eq. (\ref{LindbladWithADependentOnXEq}):
\BoxedEquation{
	\bra{x} \hat{\rho}(t + \delta t ) \ket{x'} =& 
		e^{ \frac{\delta t }{2} F\left(t + \frac{\delta t}{2}; x, x' \right) } 
		\mathcal{F}_{p' \to x'} \mathcal{F}_{p \to x}^{-1} \left\{
		e^{ \frac{i \delta t }{\hbar} \left[ K\left(t + \frac{\delta t}{2}, p'\right) - K\left(t + \frac{\delta t}{2}, p\right) \right]} \times  \right. \notag\\
		& \left. \mathcal{F}_{x''' \to p'}^{-1} \mathcal{F}_{x'' \to p} \left\{  
			e^{ \frac{\delta t }{2} F\left(t + \frac{\delta t}{2}; x'', x''' \right) }
			\bra{x''} \hat{\rho}(t) \ket{x'''} 
		\right\}\right\} + O\left( \delta t^3 \right).
}
Finally, using Eq. (\ref{EqCFTviaFFT}), one obtains 
\begin{align}
	\bra{x_l} \hat{\rho}(t + \delta t ) \ket{x_{l'}} \approx& 
		e^{ \frac{\delta t }{2} F\left(t + \frac{\delta t}{2}; x_l, x_{l'} \right) } (-)^{l + l'}
		FFT_{k' \to l'} iFFT_{k \to l} \left\{
		(-)^{k + k'}  e^{ \frac{i \delta t }{\hbar} \left[ K\left(t + \frac{\delta t}{2}, p_{k'}\right) - K\left(t + \frac{\delta t}{2}, p_k\right) \right]} (-)^{k + k'}  \right. \notag
        \\
		& \left. \times iFFT_{l''' \to k'} FFT_{l'' \to k} \left\{  
			(-)^{l'' + l'''} e^{ \frac{\delta t }{2} F\left(t + \frac{\delta t}{2}; x_{l''}, x_{l'''} \right) }
			\bra{x_{l''}} \hat{\rho}(t) \ket{x_{l'''}} 
		\right\}\right\} \notag
        \\
		=& e^{ \frac{\delta t }{2} F\left(t + \frac{\delta t}{2}; x_l, x_{l'} \right) } (-)^{l + l'}
		FFT_{k' \to l'} iFFT_{k \to l} \left\{
		e^{ \frac{i \delta t }{\hbar} \left[ K\left(t + \frac{\delta t}{2}, p_{k'}\right) - K\left(t + \frac{\delta t}{2}, p_k\right) \right]}  \right. \notag
        \\
		& \left.\times  iFFT_{l''' \to k'} FFT_{l'' \to k} \left\{  
			(-)^{l'' + l'''} e^{ \frac{\delta t }{2} F\left(t + \frac{\delta t}{2}; x_{l''}, x_{l'''} \right) }
			\bra{x_{l''}} \hat{\rho}(t) \ket{x_{l'''}}
		\right\}\right\}. 
\end{align}

\section{A random collision model}

\begin{align}\label{RandomCollisionRhoMasterEq}
	\frac{d\hat{\rho}}{dt} = \frac{1}{i\hbar} [\hat{H}, \hat{\rho}]  + \gamma(\hat{\rho}_{\beta} - \hat{\rho}). 
\end{align}

This model is a special case of Lindblad equation (\ref{LindbladAlickiMasterEq}) when
\begin{align}
	\gamma_{n \to j}  = \gamma e^{-\beta E_n} / Z.
\end{align}
Indeed, substituting this identity into Eq. (\ref{MarkovJumpMasterEq}) and using Eq. (\ref{PopulationProbabilitiesCond}), we get
\begin{align}
	\frac{d p_n(t)}{dt} = \sum_{j} \left[ \gamma_{n \to j}  p_j(t)  -  \gamma_{j \to n} p_n(t) \right]
		=  \sum_{j} \frac{\gamma}{Z} e^{-\beta E_n} p_j(t)  - \sum_{j} \frac{\gamma}{Z} e^{-\beta E_j} p_n(t)
		\notag\\
        = \gamma\left(  \frac{1}{Z} e^{-\beta E_n} - p_n(t) \right).
\end{align}
Thus, we recovered Eq. (\ref{RandomCollisionRhoMasterEq}).

\section{Time propagation of a random collision model}

Consider a generalized equation of motion (\ref{RandomCollisionRhoMasterEq}) with the time dependent hamiltonian
\begin{align}\label{EqTimeDepRandomCollisionMod}
	\frac{d\hat{\rho}}{dt} = \frac{1}{i\hbar} [\hat{H}(t), \hat{\rho}]  + \gamma(\hat{\rho}_{\beta} - \hat{\rho})
	= \left[ \hat{\mathcal{G}}(t) + \hat{\mathcal{D}} \right] \hat{\rho},
	\qquad 
	\hat{\mathcal{G}}(t) \odot = \frac{1}{i\hbar} [\hat{H}(t), \odot],
	\quad
	\hat{\mathcal{D}}  \odot = \gamma(\hat{\rho}_{\beta} - \odot).
\end{align}
Since Eq. (\ref{EqTimeDepRandomCollisionMod}) resembles Eq. (\ref{EqMotionForArbitraryG}), then the solution for the random collision  model reads according to Eq. (\ref{EqTExpGeneric})
\begin{align}
	\hat{\rho}(t + \delta t) &= \hat{\mathcal{T}} \exp\left[  \int_{t}^{t + \delta t} \left( \hat{\mathcal{G}}(\tau) + \hat{\mathcal{D}} \right) d\tau \right] \hat{\rho}(t)  \notag\\
		&= \exp\left[  \int_{t}^{t + \delta t} \hat{\mathcal{G}}(\tau)d\tau + \hat{\mathcal{D}} \delta t \right] \hat{\rho}(t)  + O\left( \delta t^3 \right) \qquad\qquad \mbox{[using Eq. (\ref{EqTimeOrderingDropping})]} \notag\\
		&= \exp\left[  \hat{\mathcal{G}}\left(t + \frac{\delta t}{2}  \right) \delta t + \hat{\mathcal{D}} \delta t \right] \hat{\rho}(t)  + O\left( \delta t^3 \right) \qquad\qquad \mbox{[using Eq. (\ref{EqTrapezoidalIntegralRule})]} \notag\\
		&=  \underbrace{e^{\hat{\mathcal{G}}\left(t + \frac{\delta t}{2}  \right) \frac{\delta t}{2}}}_{\mbox{unitary} \atop \mbox{evolution}}
			\underbrace{e^{\hat{\mathcal{D}} \delta t}}_{\mbox{dissipative} \atop \mbox{evolution}}
			\underbrace{e^{\hat{\mathcal{G}}\left(t + \frac{\delta t}{2}  \right) \frac{\delta t}{2}}}_{\mbox{unitary} \atop \mbox{evolution}} \, \hat{\rho}(t)  + O\left( \delta t^3 \right).  
			\qquad\qquad \mbox{[using Eq. (\ref{SecondOrderSplitting})]}		
\end{align}

\section{Wave-function Monte Carlo with Poisson noise}

The solution $\hat{\rho}(t)$ to the master equation
\begin{align}
	\frac{d}{dt} \hat{\rho}(t) =& -\frac{i}{\hbar} \left[ \hat{H}(t), \hat{\rho}(t) \right] 
		+ \sum_{k} \left( \hat{A}_k \hat{\rho}(t) \hat{A}_k^{\dagger} 
			-\frac{1}{2} \hat{\rho}(t) \hat{A}_k^{\dagger} \hat{A}_k -\frac{1}{2} \hat{A}_k^{\dagger} \hat{A}_k \hat{\rho}(t)  \right) \notag\\
		& + \sum_{k} \left( \hat{B}_k \hat{\rho}(t) \hat{B}_k^{\dagger} 
			-\frac{1}{2} \hat{\rho}(t) \hat{B}_k^{\dagger} \hat{B}_k -\frac{1}{2} \hat{B}_k^{\dagger} \hat{B}_k \hat{\rho}(t)  \right)
\end{align}
can be found as 
\begin{align}
	\hat{\rho}(t) = \mathbb{E}\bigg( | \psi(t) \rangle\langle \psi(t) | \bigg),
\end{align}
where $\mathbb{E}$ denotes averaging over a Poisson noise and  $| \psi(t) \rangle$ is a solution to the following stochastic procedure (See Sec. 4.3.4 of Ref. \cite{jacobs2014quantum}):
\begin{enumerate}
\item Generate two sequences of random numbers $\{r_{\hat{A}_k}\}_k$ and $\{r_{\hat{B}_k}\}_k$ each uniformly distributed on the interval $[0,1]$.
\item Solve numerically the Schr\"{o}dinger-like equation (\ref{EqWFuncMonteCarloPoisson})
\end{enumerate}

We do need to solve the master equation to find the density matrix evolution $\hat{\rho}(t)$. It can be found by solving an ensemble of the Schr\"{o}dinger equations

In between the projections we need to solve the following equation
\begin{align}\label{EqWFuncMonteCarloPoisson}
	\frac{d}{dt} | \psi(t) \rangle = -\left( \frac{i}{\hbar} \hat{H}(t) + \frac{1}{2} \sum_k \left[ 
			(\hat{A}_k^{\dagger}\hat{A}_k) (\hat{x}) - \lambda_{\hat{A}_k}(t)
			 + (\hat{B}_k^{\dagger}\hat{B}_k) (\hat{p}) - \lambda_{\hat{B}_k}(t) \rangle
		\right] \right) |\psi(t) \rangle, \\
	\lambda_{\hat{A}_k}(t) = \langle \psi(t) | (\hat{A}_k^{\dagger}\hat{A}_k) (\hat{x}) | \psi(t) \rangle, \qquad
	\lambda_{\hat{B}_k}(t) = \langle \psi(t) | (\hat{B}_k^{\dagger}\hat{B}_k) (\hat{x}) | \psi(t) \rangle, \notag
\end{align}
where $(\hat{A}_k^{\dagger}\hat{A}_k) (\hat{x})$ $\left[ (\hat{B}_k^{\dagger}\hat{B}_k) (\hat{p}) \right]$ denotes the fact that the product of two operators 
$\hat{A}_k^{\dagger}\hat{A}_k$ $\left[ \hat{B}_k^{\dagger}\hat{B}_k \right]$ is a function of the quantum position operator $\hat{x}$ (the quantum momentum operator $\hat{p}$) only. Note that the fact that $\hat{A}_k^{\dagger}\hat{A}_k$ solely dependents on $\hat{x}$ does not imply that $\hat{A}_k = A_k(\hat{x})$  is only a function of $\hat{x}$. In fact, $\hat{A}_k$ may depend on both $\hat{x}$ and $\hat{p}$. For example, $\hat{A} = e^{ik\hat{x}} f(\hat{p}) \Longrightarrow \hat{A}^{\dagger} \hat{A} = |f(\hat{p})|^2$.  

The evolution governed by Eq. (\ref{EqWFuncMonteCarloPoisson}) preserves the wave function's norm
 \begin{align}\label{EqNormalizationWfMcP}
 	\frac{d}{dt} \langle \psi(t) | \psi(t) \rangle = 0.
 \end{align}
According to Eqs. (\ref{EqTExpGeneric}) and (\ref{EqMotionForArbitraryG})
\begin{align}
	&| \psi(t + \delta t) \rangle = \hat{\mathcal{T}} \exp\left[ -\int_{t}^{t+\delta t} \left( \frac{i}{\hbar} \hat{H}(\tau) + \frac{1}{2} \sum_k \left[ 
			\hat{A}_k^{\dagger}\hat{A}_k -  \lambda_{\hat{A}_k}(\tau) 
			 + \hat{B}_k^{\dagger}\hat{B}_k  -  \lambda_{\hat{B}_k}(\tau)
		\right] \right) d\tau \right] | \psi(t) \rangle \notag\\
	&=  e^{\sum_k \int_{t}^{t+\delta t} \left[\lambda_{\hat{A}_k}(\tau)  + \lambda_{\hat{B}_k}(\tau) \right] d\tau }
		\exp\left[ -\left( \frac{i}{\hbar} \hat{H}\left(t + \frac{\delta t}{2}\right) + \frac{1}{2} \sum_k \left[ 
			\hat{A}_k^{\dagger}\hat{A}_k  + \hat{B}_k^{\dagger}\hat{B}_k 
		\right] \right) \delta t \right] | \psi(t) \rangle + O\left( \delta t^3 \right) 
    \notag\\
	&= N \underbrace{ \exp\Bigg[ -  \frac{i}{\hbar} \Bigg( \hat{H}\left(t + \frac{\delta t}{2}\right) 
			-\overbrace{ \frac{i\hbar}{2} \sum_k \hat{A}_k^{\dagger}\hat{A}_k }^{\parbox{7em}{correction to\\potential energy}}
			-\overbrace{ \frac{i\hbar}{2} \sum_k  \hat{B}_k^{\dagger}\hat{B}_k }^{\parbox{7em}{correction to\\kinetic energy}}
	\Bigg) \delta t \Bigg] }_{\mbox{resmebles Eq. (\ref{SecondOrderTexpPartEq})}} | \psi(t) \rangle + O\left( \delta t^3 \right),
\end{align}
where $N$ is the normalization constant chosen such that $\langle \psi(t + \delta t) | \psi(t + \delta t) \rangle = 1$ [according to Eq. (\ref{EqNormalizationWfMcP})].

In fact, the corrections to the kinetic and potential energies of the Hamiltonian $\hat{H}$ resemble the absorbing boundary (\ref{EqAbsBoundaryCoreection}) discussed in Sec. \ref{Sec_AbsorbibgBoundary}.
\begin{align}
	P_{\hat{A}_k, \hat{B}_k} (t) = \exp\left( -\int_0^t \lambda_{\hat{A}_k, \hat{B}_k}(\tau)d\tau \right).
\end{align}
\begin{align}
	P_{\hat{A}_k, \hat{B}_k} (t + \delta) = P_{\hat{A}_k, \hat{B}_k} (t) \exp\left(-\left[  \lambda_{\hat{A}_k, \hat{B}_k}(t + \delta t) + \lambda_{\hat{A}_k, \hat{B}_k}(t) \right] \delta t/2 \right) + O\left( \delta t^3 \right).
\end{align}

\subsection{Pauli equation}
Using the identity
\begin{align}
	e^{ia(n_{x} \sigma_{x} + n_y\sigma_{y} + n_z\sigma_{z})} = \cos(a) + i(n_{x} \sigma_{x} + n_y\sigma_{y} + n_z\sigma_{z})\sin(a),
	\quad
	n_{x}^2 + n_y^2 + n_z^2 = 1,
\end{align}
we define an auxiliary operator
\begin{align}
	\mathcal{P}(a; c_0, c_1, c_2, c_3) \Upsilon =&
	e^{ ia \sum_{k=0}^3 c_k \widehat{\sigma}_k } \left( \Upsilon_1 \atop \Upsilon_2 \right) =
		e^{ia c_0} \left( 
			\cos(ab)\Upsilon_1 + i \sin(ab)\left(c_3 \Upsilon_1 + [c_1 -ic_2]\Upsilon_2\right) / b  
		\atop 
			\cos(ab) \Upsilon_2 + i \sin(ab)\left([c_1 + ic_2]\Upsilon_1 - c_3 \Upsilon_2 \right) / b
		\right), \notag\\
	& b = \sqrt{ c_1^2 + c_2^2 + c_3^2 }.
\end{align}

\subsection{A model of molecule}\label{Sec:Molecular2StateModel}

We consider the two (electronic) state approximation for a molecule \cite{schwendner_photodissociation_1997}. The Schr\"{o}dinger equation describing a molecule reads
\begin{align}
	i\hbar\frac{\partial}{\partial t} \left( \psi_g(x, t) \atop \psi_e(x, t) \right) =& 
	\left(
	\begin{matrix}
		K(\hat{p}) + V_g(\hat{x}) & V_{eg}(\hat{x}, t) \\
		V_{eg}(\hat{x}, t) & K(\hat{p}) + V_e(\hat{x})
	\end{matrix}
	\right)
	\left( \psi_g(x, t) \atop \psi_e(x, t) \right)
	= \left[ K(\hat{p}) + U(\hat{x}, t) \right] \left( \psi_g(x, t) \atop \psi_e(x, t) \right), \\
	U(\hat{x}, t) =& \left[ V_g(\hat{x}) + V_e(\hat{x}) \right]/2 
		+ \sigma_{x} V_{eg}(\hat{x}, t)  +  \sigma_{z} \left[ V_g(\hat{x}) - V_e(\hat{x}) \right]/2 ,
\end{align}  
where $\psi_g(x, t)$ [$\psi_e(x, t)$] denotes the vibrational wave-packet residing in the ground (excited) electronic state, $K(\hat{p})$ is the kinetic energy, $V_g(\hat{x})$ [$V_e(\hat{x})$] is the ground (excited) state adiabatic potential curve and
\begin{align}
	V_{eg}(\hat{x}, t) = -\mu_{eg}(\hat{x}) E(t),
\end{align}
is the laser-molecule interaction with $E(t)$ denoting a laser pulse and $\mu_{eg}$ denoting the dipole moment.

Lifting this model to the Wigner phase-space representation, we obtain
\begin{align}
	i\hbar\frac{\partial}{\partial t}W(x,p; t) 
	=
	&\left[ K\left(p + \frac{\hbar}{2}\overrightarrow{\hat{\lambda}} \right) + U\left(x - \frac{\hbar}{2}\overrightarrow{\hat{\theta}}, t\right) \right]W(x,p; t)
	\notag\\
    &-
	W(x,p;t)\left[ K\left(p - \frac{\hbar}{2}\overleftarrow{\hat{\lambda}} \right) + U\left(x + \frac{\hbar}{2}\overleftarrow{\hat{\theta}}, t\right) \right],
\end{align}
where $W(x,p;t)$ is the Wigner matrix valued function,
\begin{align}
	W& (x, p; t + dt) = e^{-\frac{idt}{\hbar} \left[ K\left(p + \frac{\hbar}{2}\overrightarrow{\hat{\lambda}} \right) + U\left(x - \frac{\hbar}{2}\overrightarrow{\hat{\theta}}, t\right) \right]} 
	W(x,p; t) 
	e^{\frac{idt}{\hbar} \left[ K\left(p - \frac{\hbar}{2}\overleftarrow{\hat{\lambda}} \right) + U\left(x + \frac{\hbar}{2}\overleftarrow{\hat{\theta}}, t\right) \right]}
	\notag\\
	&= e^{-\frac{idt}{\hbar} K\left(p + \frac{\hbar}{2}\overrightarrow{\hat{\lambda}} \right)}
	e^{-\frac{idt}{\hbar} U\left(x - \frac{\hbar}{2}\overrightarrow{\hat{\theta}}, t\right) } 
	W(x,p; t)
	e^{\frac{idt}{\hbar} U\left(x + \frac{\hbar}{2}\overleftarrow{\hat{\theta}}, t\right)} 
	e^{\frac{idt}{\hbar} K\left(p - \frac{\hbar}{2}\overleftarrow{\hat{\lambda}} \right)} + O\left(dt^2\right)
	\notag\\
	&=
	\mathcal{F}_{\lambda \to x}^{-1} \left\{
	e^{-\frac{idt}{\hbar} K\left(p + \frac{\hbar}{2}\lambda \right)}
		\mathcal{F}_{x \to \lambda} \mathcal{F}_{\theta \to p}^{-1} \left[
			e^{-\frac{idt}{\hbar} U\left(x - \frac{\hbar}{2}\theta, t\right) } 
				\mathcal{F}_{p \to \theta} \left[ W(x,p; t) \right]
			e^{\frac{idt}{\hbar} U\left(x + \frac{\hbar}{2}\theta, t\right)} 
		\right]
	e^{\frac{idt}{\hbar} K\left(p - \frac{\hbar}{2}\lambda \right)} 
	\right\}
    \notag\\
    &\qquad\qquad+ O\left(dt^2\right)
	\notag\\
	&=
	\mathcal{F}_{\lambda \to x}^{-1} \left\{
		e^{\frac{idt}{\hbar} \left[ K(p_{-}) -K(p_{+}) \right]}
		\mathcal{F}_{x \to \lambda} \mathcal{F}_{\theta \to p}^{-1} \left[
			e^{\frac{idt}{2\hbar} \left[ V_{g}(x_{+}) - V_{g}(x_{-}) + V_{e}(x_{+}) - V_{e}(x_{-})\right]} 
				T(x_{-}) 
                \right.\right.
                \notag\\
                &\qquad\qquad\qquad\qquad
                \qquad\qquad\qquad\qquad
                \times
                \mathcal{F}_{p \to \theta} \left[ W(x,p; t) \right] T^{\dagger} (x_{+})
		\Bigr]
	\Bigr\},
\end{align}
where $p_{\pm} = p \pm \hbar\lambda/2$ and $x_{\pm} = x \pm \hbar\theta/2$.
Using the identity
\begin{align}
	e^{ia(n_{x} \sigma_{x} + n_y\sigma_{y} + n_z\sigma_{z})} = \cos(a) + i(n_{x} \sigma_{x} + n_y\sigma_{y} + n_z\sigma_{z})\sin(a),
	\quad
	n_{x}^2 + n_y^2 + n_z^2 = 1,
\end{align}
we obtain
\begin{align}
	&T(q) = e^{-\frac{idt}{\hbar} \left( \sigma_{x} V_{eg}(q, t)  +  \sigma_{z} \frac{1}{2}\left[ V_g(q) - V_e(q) \right] \right) } 
		= C - iL \sigma_{x} -iM\sigma_{z}
		= \left(
		\begin{matrix}
			C - iM & -iL \\
			-iL & C + iM
		\end{matrix}
		\right),
	\\
	&L = SV_{eg}(q, t), \quad
	M =  S\frac{V_{g}(q) - V_{e}(q)}{2}, \quad
	C = \cos \frac{D dt}{\hbar} , \quad
	S = \frac{1}{D} \sin \frac{D dt}{\hbar}, 
    \notag\\
	&D =  \sqrt{V_{eg}^2(q, t) + \frac{1}{4}[ V_{g}(q) - V_{e}(q)]^2}.
\end{align}
Note the following symmetry relations in the Wigner matrix
\begin{align}
	W(x,p; t) =
	\left(
		\begin{matrix}
		W_{g}(x,p; t) & W_{ge}(x,p; t) \\
		W_{ge}^*(x,p; t) & W_{e}(x,p; t)
		\end{matrix}
	\right),
\end{align}
where $W_{g}$ and $W_{e}$ are real valued functions and $W_{eg}$ is complex.


\blankpage
\bibliographystyle{ieeetr}
\bibliography{references}

\end{document}